\renewcommand{\baselinestretch}{1.26}
\titleformat{\section}{\bfseries}{\thesection}{1em}{\MakeUppercase}
\titleformat{\subsection}{\bfseries}{\thesubsection}{1em}{}
\titleformat{\subsubsection}{\itshape}{\thesubsubsection}{1em}{}
\titlespacing\section{0pt}{12pt plus 4pt minus 2pt}{4pt plus 2pt minus 2pt}
\titlespacing\subsection{0pt}{12pt plus 4pt minus 2pt}{4pt plus 2pt minus 2pt}
\titlespacing\subsubsection{0pt}{12pt plus 4pt minus 2pt}{4pt plus 2pt minus 2pt}
\setlist[enumerate]{leftmargin=1.5cm,rightmargin=0.5cm,noitemsep, topsep=2pt}
\definecolor{npurple}{RGB}{82,0,99}
\definecolor[named]{ACMDarkBlue}{cmyk}{1,0.58,0,0.21}
\Crefname{equation}{}{}
\newtheorem{theorem}{Theorem}
\newtheorem{assumption}{Assumption}
\crefname{assumption}{assumption}{assumptions}
\newtheorem{claim}{Claim}
\crefname{claim}{claim}{claims}
\newtheorem{corollary}{Corollary}
\newtheorem{definition}{Definition}
\newtheorem{example}{Example}
\newtheorem{lemma}{Lemma}
\newtheorem{proposition}{Proposition}
\newtheorem{remark}{Remark}
\newenvironment{proof}{\textsc{Proof:}}{\hfill\ \rule{0.5em}{0.5em}\vspace{1em}}
\newcommand{\overbar}[1]{\mkern 1.5mu\overline{\mkern-1.5mu#1\mkern-1.5mu}\mkern 1.5mu}
\renewcommand{\d}{\mathrm{d}}
\newcommand{\R}{\mathbb{R}}
\newcommand{\E}{\mathbb{E}}
\newcommand{\F}{\mathcal{F}}
\newcommand{\M}{\mathcal{M}}
\def\K{\mathcal{K}}
\def\P{\mathbb{P}}
\def\S{\Omega}
\def\G{\mathcal{G}}
\def\tsum{{\textstyle{\sum}}}
\begin{document}

\title{\bf Robustly Optimal Mechanisms for Selling Multiple Goods}

\author{\textsc{Yeon-Koo Che}   \thinspace and \thinspace\ \textsc{Weijie Zhong}\thanks{Che: Department of Economics, Columbia University (email: yeonkooche@gmail.com); Zhong: Graduate School of Business, Stanford University (email: wayne.zhongwj@gmail.com)}}
\maketitle

\begin{abstract}  We study robustly optimal mechanisms for selling multiple items.  The seller maximizes revenue 
against a worst-case distribution of a buyer's valuations within a set of distributions, called an ``ambiguity'' set.  We identify the exact forms of robustly optimal selling mechanisms and the worst-case distributions when the ambiguity set satisfies various moment conditions on the values of subsets of goods.  The analysis reveals 
% also identify 
general properties of the ambiguity set that 
% lead to the robust optimality of 
justifies categorical bundling, which includes separate sales and pure bundling as special cases. 
\end{abstract}

\section{Introduction}

How should a seller sell multiple goods to a buyer?  The answer to this seemingly simple question remains elusive.  Unlike the single-good case,  the optimal selling mechanism for multiple goods is difficult to identify.   Traditional Bayesian approaches often lead to intricate mechanisms that are highly sensitive to the buyer's value distribution, as highlighted by \cite{daskalakis2017strong} and \cite{manelli2007multidimensional}. 
Moreover, simple mechanisms like item pricing or bundled pricing can significantly underperform compared to these theoretical optima (\cite{briest2010}; \cite{hart2013}), and it is challenging to identify when such simple approaches are effective (see \cite{daskalakis2017strong}, \cite{manelli2006bundling}).

Relaxing the Bayesian assumptions yields different optimal simple mechanisms in specific scenarios. \cite{carroll2017robustness} demonstrates that, when the seller knows only the marginal distributions of item values, separate sales at monopoly prices maximize worst-case revenue. 
 In a concurrent study, \cite{deb-roesler2021}
show that selling a grand bundle is optimal for worst-case revenue when the seller knows the value distribution but lacks information about the buyer's knowledge.

% ---is always maximized when  each good is sold separately at a monopoly price.
% \footnote{Optimal in the sense that it maximize the revenue against an adversarially chosen correlation structure.}  

% In practice, however, the pattern of product bundling is more complex. Typically, not all products a seller sells are sold separately or as a single bundle. Instead, a common practice is that groups of products are categorized into distinct bundles and sold separately.

In practice, product bundling exhibits greater complexity. Sellers rarely offer all products individually or as a single grand bundle. Instead, they commonly group products into distinct categories and sell them as (separate) bundles.\footnote{Also common is the bundling of complementary products such as computer hardware and software, hotels and flights, razors, and razor blades.  While our main model focuses on the additive value setting, we show in \Cref{cor:complementarity} that such complementarity can be easily incorporated into our model and is inconsequential for the prediction.}
Video streaming and cable TV services exemplify this, offering bundles of channels categorized by genre, such as news, entertainment, or sports. Similarly, investment banks and financial companies bundle assets into securities based on sectors like technology, energy, or healthcare.

  % Video streaming and cable TV services offer bundles of channels along several categories, such as news, entertainment, sports, etc.  Similarly, investment banks and financial companies offer securities that bundle many assets along several sectors, such as technology, energy, health, and the financial industry.  

In line with recent literature, this paper adopts a robustness approach by departing from Bayesian assumptions. Our distinctive contribution lies in characterizing the seller's knowledge structure that directly results in a specific pattern and scope of product bundling. Through this analysis, we provide unified insights into two extreme scenarios: full separation (selling all goods individually) and pure bundling (selling only a grand bundle). We achieve this by identifying the precise knowledge structures that underpin these contrasting bundling mechanisms.

 % Similar to the recent articles, the current paper takes the robustness approach by relaxing the Bayesian assumptions.  However, we characterize the seller's information structure that yields   a precise pattern and scope of bundling across goods. From this comprehensive analysis, we derive unified insights in two special cases by delineating the knowledge structures that underpin two contrasting bundling mechanisms---full separation and pure bundling.
 
The seller in our model has $n\ge 2$  heterogeneous goods to sell to a buyer.  The buyer has a quasilinear utility function additive in his valuations $ (v_1, ..., v_n) \in \mathbb{R}^n_+$  of the goods. The valuations are the buyer's private information,  unobserved by the seller.  In modeling a seller's information, we focus on the realistic scenario in which the seller accesses basic summary statistics such as means and variances of valuations along several categories of goods.  The goods are partitioned into {\it categories} $K\in \K$, where  $\K$ is an arbitrary partition, such that the seller knows only the means and some convex dispersion moments (to be defined) of each category value---the total valuation of the goods in each category.  The partition structure describes the \emph{granularity} of the seller's data, representing the smallest groups of goods on which the seller may infer about the group-level valuation. Meanwhile, the dispersion moments describe the \emph{precision} of the seller's inference of the summary statistics, generating ``confidence intervals'' for valuation estimates. 

We study the robustly optimal selling mechanism that maximizes the seller's expected revenue under the worst-case joint distribution consistent with the moment conditions.  For analysis, we consider a zero-sum game played by the seller seeking to maximize her revenue and the adversarial nature seeking to minimize it. 
The equilibrium of this game, or a saddle point,  identifies the optimal revenue guarantee for the seller.   

Our first main result, \Cref{thm:moment}, shows that the robustly-optimal mechanism consists of \textbf{$\K$-bundled sales}: {\it each bundle $K\in\K$ of items is sold separately at an independently distributed random price, or equivalently via a menu of lotteries with distinct prices.} As a direct corollary, a separate selling mechanism and a pure bundling mechanism are robustly optimal when $\K$ are the finest partition and the coarsest partition, respectively.  The intuition for \Cref{thm:moment} is that our seller faces three layers of uncertainties, and the $\K$-bundled sales mechanism hedges against each. First, the independent pricing of alternative categories guards the seller against the first layer of ambiguity concerning the correlation of their valuations across categories. Then, the bundling of each product category hedges against the second layer of ambiguity concerning how the value of each category is divided across its constituent items. Finally, the randomization of bundle prices, or a menu of prices for each bundle, hedges against the distributional ambiguity of each category value, consistent with moment conditions.\footnote{The intuition behind the first layer of uncertainty aligns with \cite{carroll2017robustness}, where each item represents a distinct category and only marginal distributions are known. Without the other two layers of ambiguity, this naturally leads to full separation as the optimal mechanism.  Meanwhile, the insights related to the second and third layers resonate with those provided by Deb and \cite{deb-roesler2021}. They demonstrate that similar uncertainties, arising from the buyer's information instead, justify pure bundling when the first layer of uncertainty is absent.}

Our second main result, \Cref{thm:necessity}, shows that the $\K$-bundling structure is not only sufficient but also necessary for the robust optimality of the mechanism: it is \emph{not} robustly optimal either to separate items within each product category  $K\in \K$ or to bundle multiple product categories in $\K$.  These other mechanisms may also be optimal against the distribution that justifies the use of $\K$-bundled sales. What fails them is robustness: they perform poorly against a different ``counterfactual'' distribution, the analysis of which reveals the seller's true motive for the use of $\K$-bundled sales. 
Specifically, a bundled sale of items within each category is motivated by the fear that a certain negative correlation across values would lead to revenue loss if items were sold separately. This finding harks back to the classic insight by  \cite{adams1976commodity}.  By contrast, separation across distinct categories is motivated by the fear of asymmetric distributions of the buyer's valuations across categories. If different categories are bundled, the bundle screens the buyer symmetrically across distinct categories, which results in screening inefficiency and revenue loss under the counterfactual distribution. 

%From this perspective, \cite{carroll2017robustness}'s model corresponds to the extreme case of fine-grained data: the seller's individual item-level knowledge leads to full separation and the perfect estimations of value distributions lead to deterministic prices for each item.  Our framework offers a natural generalization of his framework in which the seller's knowledge is much coarser in granularity and precision; most important, we are able to justify and characterize the use of a general categorical bundling of goods. 

We explore two extensions of our baseline model in \Cref{sec:info,sec:extensions}. The first extension adapts our framework to study ``informational ambiguity,'' where the seller has an unambiguous prior about the buyer's valuations but faces ambiguity with regard to the information the buyer has on his own valuations.
Such informational ambiguity is characterized by a collection of convex moment conditions, rendering our framework readily applicable.  We show that 
 if the valuation distribution is \emph{stochastically comonotonic}, i.e., they are obtainable by a suitable garbling of comonotonic distribution, then the worst-case signal that the buyer may have coincides with the worst-case distribution that justifies the use of pure bundling. 
 % diwhere each item's value constitutes a portion of a common component plus an idiosyncratic component, then the range of possible posterior distributions of buyer valuations can be effectively summarized by a properly-constructed statistic on the grand bundle. 
 Consequently, pure bundling is informationally robust. %Since stochastic comonotonicty includes exchangeability as a special case, the current result strictly generalizes \cite{deb-roesler2021} who establish the same result under exchangeable priors. 
 The second extension goes beyond the moment restrictions assumed in the baseline model and characterizes (more general) distributional restrictions, called \textbf{$\K$-Knightian ambiguity},  that justify the use of $\K$-bundled sales.  These two extensions demonstrate that the insights obtained from our analysis apply more broadly and resiliently beyond the specific settings considered in \Cref{sec:moment}.
 
 % In \Cref{sec:general}, we broaden the seller's knowledge about each product group $K$ from summary statistics to almost any distributional restrictions, termed \textbf{$\K$-Knightian ambiguity}. Under $\K-$Knightian ambiguity, the $\K-$bundled sales consistently prove to be robustly optimal.

 % affirms the resilience of our primary result. 

The current paper intersects with two broad strands of literature. First, it contributes to the multiproduct monopoly literature and, more broadly, the multidimensional screening and mechanism design literature. Representative works include \cite{mcafee1988multidimensional}, \cite{armstrong1996, armstrong1999}, \cite{manelli2006bundling, manelli2007multidimensional}, \cite{rochet/chone1988}, \cite{daskalakis2013mechanism, daskalakis2017strong}, \cite{hart2015maximal, hart2019better}, \cite{menicucci2015}, and \cite{haghpanah2020}.  The current paper departs from this literature by taking a robustness approach. 

Second, the current paper contributes to the literature on robust mechanism design.  Many authors study optimal mechanisms under the worst-case distribution of states. To the best of our knowledge, \cite{scarf1957minmax} was the first to adopt this approach in inventory management. \cite{carroll2015linearcontracts, carroll2019infoscore} apply the approach to contracting settings. \cite{bergemannschlag2008pricing} and \cite{ carrasco2018optimalselling} solve the single-item monopoly problem with neighborhood restrictions and moment conditions, respectively.  \cite{distributionallyRobust2019}, \cite{brooks2021maxmin}, \cite{He2019RobustlyOptimal} and \cite{che2022robustly} extend the framework to the multi-buyer auction setting, but still with one item.   As already discussed, \cite{carroll2017robustness} applies the robust mechanism design approach to a multi-item sale problem with known marginals, making it the closest antecedent of the current paper. We develop his framework further and provide a robustness-based rationale for general forms of categorical bundling, which include separate sales and pure bundling as special cases.\footnote{ Although  worst-case revenue maximization is a natural way to extend the standard Bayesian framework, %\footnote{The maximin approach has a long tradition in economics starting with \cite{wald1950} and has a well-known axiomatic foundation via ambiguity aversion (see \cite{gilboa-schmeidler1989}).} 
several authors have also considered other notions of robustness in mechanism design.  \cite{bergemann2008pricing}, \cite{guo2019robust} and \cite{koccyiugit2018robust} study the minimization of regret---namely, a revenue shortfall of the chosen mechanism relative to the complete-information optimal mechanism.  In particular, \cite{koccyiugit2018robust} finds a regret-minimizing mechanism for selling multiple items with known means and rectangular domain, which parallels the case treated in \Cref{app:domain}. Another objective popular in algorithmic mechanism design is the revenue ratio of simple mechanisms (often separate sales and pure bundling) to all mechanisms across all or a restricted set of valuation distributions. As the number of items grows large, the ratio tends to zero when the distributed is unrestricted (\cite{briest2010,hart2013}) and is bounded away from zero when item values are independently distributed (\cite{babaioff2014simple,hart2012approximate,li2013revenue}).}  In addition, to our knowledge, our necessity result of a robustly optimal mechanism is new to the literature.

Recent authors have also studied the optimal mechanism in the worst-case scenario in terms of the information possessed by agents; see \cite{du2018commonvalue},  \cite{bbm2016informationally}, \cite{brooks2021maxmin, brooksdu2019, brooks2021structure}, and \cite{deb-roesler2021}.  These papers assume that a seller is ambiguous about the buyer's information regarding the values of items and chooses an optimal mechanism robust with respect to the buyer's information.\footnote{\cite{brooks2021maxmin} employs both distributional and informational uncertainties.}  Such a model can be seen as a robust mechanism design problem in which the ambiguity set is determined by the seller's prior belief in a particular way. Among them, \cite{deb-roesler2021} deals with the multi-item selling problem. They show that pure bundling is informationally robust when the prior belief is exchangeable across alternative items. We identify a more general stochastic comonotonicity condition for the informational robust optimality of pure bundling, which nests the exchangeable prior assumption as a special case but also permits highly asymmetric priors.\footnote{\cite{deb-roesler2021} also explores the buyer-optimal information structure facing a Bayesian seller, which is not the main focus of our paper.}
%\footnote{ Although the current paper is concurrent with \cite{deb-roesler2021}, \Cref{sec:info} is subsequent to \cite{deb-roesler2021}.  Our generalization in \Cref{sec:info} should therefore be regarded as an extension of their contribution.} 
\cite{brooks2021structure} proves that in multi-item auctions, informational ambiguity implies the robust optimality of indirect mechanisms with one-dimensional message space. 

%While it remains an open question whether such one-dimensional screening mechanisms are robustly optimal under our distributional ambiguity setting, \wz{in \Cref{sec:necessity}, our methodology allowed us to exclude certain canonical types of one-dimensional mechanisms.}

The rest of the paper is organized as follows. \Cref{sec:model} introduces a model of multi-item sale and defines a notion of robust optimality.  \Cref{sec:moment} considers an ambiguity set defined by a combination of moment conditions and establishes the robust optimality of $\K-$bundled sales, which specialize to  separate sales and pure bundling when $\K$ are the finest and coarsest partitions, respectively.
%\Cref{ssec:separable}  and \Cref{ssec:pure-mean} consider ambiguity sets defined by a combinations of moment conditions that lead to the robust optimality of separate and bundled mechanisms, respectively. 
\Cref{sec:necessity} establishes that the main qualitative features of $\K$-bundled sales are necessary. \Cref{sec:info,sec:extensions} extend the optimality of $\K$-bundled sales to a setting with informational ambiguity and a setting with more general distributional ambiguity, respectively.  \Cref{sec:conclusion} concludes.

\section{Model} \label{sec:model}
A seller wishes to sell $n$ items to a single buyer.  The buyer has values 
$\bm{v}:=(v_1, ..., v_n)$ for the items whose distribution is unknown to the seller.\footnote{To be precise, the ``values'' $\bm{v}:=(v_1, ..., v_n)$ need not be true values but rather the estimates the buyer assigns to items.  In this sense, the ambiguity the seller faces is ultimately an informational one, arising from her ignorance of what the buyer ``knows.''}  The seller simply knows that the distribution lies within some {\it ambiguity set}  $\F\subset \Delta(\R^n_+)$ defined by a set of moment conditions.  

\paragraph{Moment conditions:}
The moment conditions are defined in terms of means and dispersions on a joint distribution. To define them, fix any joint distribution $F\in \Delta(\mathbb{R}_+^n)$.  First, we assume the seller has some knowledge about the means of item values. Given $F$, let $\mu_i(F):=\mathbb{E}_F[v_i]$ denote the mean value of item $i$. Next, the seller has some knowledge about the dispersion of values of arbitrary subsets of items.  Specifically, let $\K$ be an arbitrary partition of the goods, with its element $K\in \K$ interpreted as a \emph{product category}.  For each product category $K\in \K$, we let 
 $$\sigma_K(F) :=\E_F\left[\phi_K\left(\tsum_{i\in K}v_i\right)\right]$$
 be the \emph{dispersion} of category $K$'s value under $F$, 
 where $\phi_K: \R_+\to \R_+$ is a twice-differentiable convex function satisfying $\phi_K''\ge \varepsilon$ for some $\varepsilon>0$.   We will refer to such a function as \emph{convex moment function}.  

We assume that the seller knows item value means and dispersion lie in some arbitrary nonempty convex and compact set $\S\subset \R_+^{n+|\K|}$.\footnote{Given the linearity of a mean, our ambiguity set allows a mean condition to apply to the value of each category, instead of the value of each item. As is clear from the proof of \Cref{thm:moment}, this has no effect for the qualitative features of the robustly optimal mechanism.}  Formally, the seller faces an ambiguity set
\begin{align} \label{eq:ambiguity-partial}
\F:=\Big\{F\in\Delta(\mathbb{R}_+^n): \left(\mu_i(F),\sigma_K(F)\right)_{i\in N,K\in \K}\in \S\Big\}.
\end{align}

An example, which will be used throughout,  illustrates the nature of ambiguity facing the seller. 
\begin{example}\label{eg:1}    Suppose the seller has $3$ goods with $N=\{1,2,3\}$.  The seller does not know the distribution of the items' valuations to the buyer except that she knows the means $\E[v_1]= 0.5,\ \E[v_2]=\E[v_3]=0.3$, and variances $\mathbb{V}[v_1]=\mathbb{V}[v_2+v_3]=0.1$ for two categories, given by the partition  $\K=\{\{1\},\{2,3\}\}$. In this case,   the convex moment functions are  $\phi_{\{1\}}(v)=\phi_{\{2,3\}}(v)=v^2$, and the moment constraint set is $\S=\left\{(0.5,0.3,0.3,0.35,0.46)\right\}$.
% \footnote{\wz{[I suggest remove this fn given fn 7. ]}In the example, we assume that the seller knows $\E[v_2]$ and $\E[v_3]$ rather than $\E[v_2+v_3]$. This allows us to better visualize the analysis.   As the proof of \Cref{thm:moment} suggests, given the knowledge of means of a \wz{category} , a further knowledge about \wz{individual means} is inconsequential.} 
\end{example}

 \paragraph{Feasible mechanisms:}
The seller is free to choose any selling mechanism. By the revelation principle, it is without loss to focus on direct revelation mechanisms, denoted by $M=(q(\bm{v}),t(\bm{v}))$, where the \emph{allocation rule} $q:\bm{v}\mapsto[0,1]^n$ specifies the probability of allocating each item to the buyer, and the \emph{payment rule}  $t:\bm{v}\mapsto\R^+$ specifies the expected payment received from the buyer, both as Borel measurable functions of the vector $\bm{v}$ of values reported by the buyer.
% \footnote{\wz{[R1 requests removing this fn]} Note that ambiguity in our model does not invalidate the revelation principle. Interpreting $(q(v), t(v))$ as the outcome when the buyer has value $v$, the feasibility conditions  $(IC)$ and $(IR)$ are necessary and sufficient for the outcome to be implementable.}   
The mechanism  satisfies {\it incentive compatibility} and {\it individual rationality}:  
\begin{align*}
	&\bm{v}\cdot q(\bm{v})-t(\bm{v})\ge\sup_{\bm{v}'\in\R^n_+}\bm{v}\cdot q(\bm{v}')-t(\bm{v}')\tag{IC}\\
	&\bm{v}\cdot q(\bm{v})-t(\bm{v})\ge 0\tag{IR}
\end{align*}
for  each $\bm{v}\in \R^n_+$. Let $\M$ denote the set of all direct mechanisms satisfying the $(IC)$ and $(IR)$ constraints---called {\it feasible} mechanisms.\footnote{\label{fn:mechanism} It is without loss to require $(IC)$ and $(IR)$ for all  types in $\mathbb{R}_+^n$, rather than only for $\bm{v}\in \bigcup_{F\in \mathcal{F}}\text{supp}(F)$.  \Cref{prop:mechanism:extension} shows that, for any feasible mechanism defined on $\bigcup_{F\in \mathcal{F}}\text{supp}(F)$, one can find a Borel measurable extension that satisfies  $(IC)$ and $(IR)$  for all types in  $\R^n_+$ and implements the same outcome for the types in the original domain.} 

\textbf{$\K$-bundled sales}: Among the feasible mechanisms, certain types of mechanisms will be of special interest to us. Consider the partition $\K$. 
% Each element of the partition can be interpreted as a bundle of items; the partition $\K$ then represents a particular collection of bundles.
The seller may bundle each category  $K\in \K$ of items and sell that bundle separately from the other categories of items. For each item $i$, let $K(i)$ be the category $K\in \K$ containing it.  Formally, we say a feasible mechanism $M:=(q,t)\in \M$ is a  \emph{$\K$-bundled sales} mechanism if, for each $K\in \K$, there exists a feasible (one-dimensional) mechanism $q_K: \R_+ \to [0,1]$ and $t_K:\R_+ \to \R$ such that $t(\bm{v})=\sum_{K\in\K} t_K(\sum_{j\in K} v_j)$ and  $q_i(\bm{v})=q_{K(i)}(\sum_{j\in K} v_j)$.%\footnote{\yk{[Do we  need this  footnote? The definition of $\K$-bundled sales is orthogonal to their feasibility; i.e., we didn't claim them to be feasible.]}  The feasibility of these one-dimensional mechanisms is implied by the feasibility of $M=(q,t)$.  Specifically, $(IC)$ of $(q_K,t_K)$ is implied by $(IC)$ of $(q,t)$. Likewise, $(IR)$ of $(q,t)$ implies that $t_K$ can be adjusted by a constant so that $(q_K,t_K)$ satisfies $(IR)$.} 
That is, the mechanism sells each bundle $K$ with probability $q_K$ and collects expected payment $t_K$. Let $\M_{\K}$ denote the set of all feasible $\K$-bundled sales mechanisms.

In our leading example,  the $\K$-bundled sales mechanism involves two bundles: the first bundle is good 1 only and the second bundle comprises goods 2 and 3, priced independently according to distributions $g_{1}$ and $g_{23}$.   
    
$\K$-bundled sales include two canonical mechanisms as special cases.  When $\K$ is the finest partition, namely when $\K=\{\{1\},..., \{n\}\}$, $\K$-bundled sales reduce to selling each item separately; we will refer to this as a \emph{separate sales} mechanism.  When $\K$ is the coarsest partition, namely when $\K=\{\{1,..., n\}\}$, 
 $\K$-bundled sales reduces to selling all items as a single grand bundle; we will call such a mechanism \emph{pure bundling}.

\paragraph{Robustness solution concept:}   The seller's revenue from a mechanism $M\in \M$ given value distribution $F$ is $R(M,F):=\int t(\bm{v}) F(\d \bm{v})$.\footnote{Here, we implicitly assume that the seller values each item at zero.  This is without loss. If there are unit costs $\bm{c}=(c_i)\ge 0$ for the items, then the problem facing the seller is exactly the same as in our model in which she faces $\bm{w}=\bm{v}-\bm{c}$ as the buyer's valuations and zero costs.  Robustly optimal mechanisms are then obtained upon an appropriate change of variables.  Specifically, a saddle point $(M^*, F^*)$  in our original model without costs remains a saddle point in terms of $\bm{w}$ in the new model.} 
 %The worst revenue from that mechanism is $\inf_{F\in \mathcal{F}}R(M,F)$.
 Let $R\in \R$ be a {\it revenue guarantee} if there exists a mechanism $M\in \M$ such that 
$R(M,F)\ge R$ for all $F\in \F$.   The seller's objective is to maximize the revenue guarantee.  Let 
$$R^*:=\sup_{M\in \M}\inf_{F\in\F} R(M,F)$$
be the \emph{optimal revenue guarantee}, and we say the mechanism attaining $R^*$ is  \emph{robustly optimal}. 

\subsection{Discussions of Model}\label{ssec:discussions}

The main elements of the model are motivated as follows.

 \paragraph{Partitional  knowledge  structure:}
As illustrated in the introduction, the partition of goods $\K$ describes the granularity of the seller's knowledge about the value distribution. In practice, the partitional structure of knowledge often reflects the intrinsic characteristics of goods. For instance, a financial broker may categorize stocks according to the sector possibly because stocks within a sector are influenced by common factors, making them more comparable to each other than to stocks from different sectors.  Similarly, in media and entertainment, products such as movies, music, and books are grouped by genres, acknowledging the shared preferences of consumers for content within the same genre. Wholesale distributors who supply many goods to retail grocers categorize goods according to their industry, brand, and grade. Market research then employs this categorization to derive key summary statistics, such as the mean and variance of consumer valuations for these product clusters, resulting in a partitional knowledge structure. 
% Such categorization increases the statistical power of the statistics as well as simplifies the messages that agents need to communicate.   

% This systematic approach to data collection is not merely a reflection of the products' inherent similarities but also a practical response to the complexities of market analysis. By categorizing products, researchers can more effectively analyze the valuations. For instance, a digital news provider conducting A/B tests via advertising promotion might find individual article promotion inefficient due to the small sample size, leading to statistically underpowered results. By instead focusing on a group of articles related by topic, the provider can leverage a larger viewer base, enhancing the statistical power of the tests. This approach capitalizes on the informational synergies among similar goods. In essence, the collection of data on categories of goods with shared characteristics exploits these informational complementarities, maximizing the efficiency and effectiveness of market research.

% \wz{\cite{carroll2017robustness}'s model of known marginal distributions corresponds to the extreme case where $\K$ is the finest partition, i.e. the seller's data granularity is the finest. In this case, the corresponding $\K$-bundled sales mechanisms reduce to separate sales, as are predicted to be robustly optimal by \cite{carroll2017robustness}.\footnote{\cite{carroll2017robustness}'s model is not nested by our main model with moment conditions, but is nested by our generalization in \Cref{sec:extensions}. }}  

\paragraph{$\K-$bundled sales in practice:} Besides the two extreme cases (fully separation and pure bundling), bundled sales with a non-degenerate partition structure are widely adopted in practice. Intriguingly, they are commonly adopted in the examples we introduced where the seller's knowledge exhibits the partitional structure. Content providers often sell bundled subscriptions based on the genres of the content. For example, the Wall Street Journal offers their professional clients six bundles of news partitioned based on the industry.\footnote{See \url{https://wsjpro.com/}.} Similarly, cable TVs and video streaming services typically categorize their channels into three bundles news, sports, and movies. In the supply chain of groceries, a wholesale distributor often offers a menu consisting of \emph{product lines}, each of which is a bundle of commodities of a specific brand. 

\paragraph{Moment conditions:} We consider the ambiguity set defined by moment conditions because moments (such as mean and variance) are the most natural information the seller may receive in practice. However, our assumption on $\F$ covers more than conventional moment conditions. 

 First, we consider a general convex moment function $\phi$, instead of conventional power moment functions (e.g. \cite{carrasco2018optimalselling}).  This generality enables us to extend our results to other types of ambiguities. In \Cref{sec:info}, we demonstrate that the so-called \emph{informational ambiguity}---the seller being ambiguous about what the buyer knows about---can be represented as a dispersion condition corresponding to a particular convex moment function. Additionally, in \Cref{app:domain}, we show that the case of {\it domain restriction}, where the value of each bundle $K\in \K$,  $\sum_{i \in K} v_i$,  lies in some interval $[0,\overbar{v}_K]$, can also be analyzed as a limiting case of convex moments. 

 Second, the generality of the ``confidence set'' $ \S$ allows us to capture a wide range of scenarios regarding the seller's ambiguity.  
%For instance,   $\S$ could be arbitrarily close to $\R_+^{n+|\K|}$. In this case, the moment condition entails almost no restriction on the ambiguity set.\footnote{A truly unrestricted ambiguity set is uninteresting, however, since the worst case distribution for the seller will be degenerate at zero. }
 For instance, $\S$ could be arbitrarily close to $\R^{|\K|}$ when projected to the last $|\K|$ dimensions, in which case the convex moments would be unrestricted.\footnote{In this special case, the ambiguity set is compatible with any arbitrary partition $\K'$. Our \Cref{thm:moment} would then imply that all bundling structures (including pure bundling and full separation) are robustly optimal.}
At the other extreme,  $\S$ could be a singleton; the seller then knows the exact means and dispersion of individual item values. As another example, $\S$ could be characterized by a system of inequalities:  $\psi_j(\mu_1(F), ..., \mu_n(F))\ge 0,$ for some concave functions $\psi_j, j=1, ..., n$. This allows for cases in which the seller knows the average values of subsets of items.\footnote{For instance, we could have  $\sum_{i\in K}\E_F[ v_i] =  m_K$ for each $K\in \K$.} Generally, one can view the size of $\S$ as capturing the magnitude of ambiguity the seller faces concerning the relevant bundle values.

\paragraph{Robustness solution concept:} Throughout the paper (except \Cref{sec:extensions}) we find the robustly optimal mechanism by solving a simultaneous-move zero-sum game.
The equilibrium $(M^*, F^*)\in (\M, \F)$ of the zero-sum game is a \emph{saddle point}; i.e., $\forall M\in \M$, $\forall F\in \F$,
\begin{align}
	R(M,F^*)\le R(M^*,F^*)\le R(M^*,F). \label{eqn:1}
\end{align}
It is well-known that a saddle point gives rise to an optimal revenue guarantee (see \cite{osborne-rubinstein1994}, Proposition 22.2-b) and the guarantee does not change with the order of moves:
\begin{align} 
R(M^*,F^*)=\max_{M\in\M}\min_{F\in\F}R(M,F)=\min_{F\in\F}\max_{M\in\M}R(M,F)=R^*. \label{eqn:minmax}
\end{align}

\section{Robust Optimality of $\K$-bundled sales} \label{sec:moment}

In this section, we solve for the robustly optimal selling mechanism for the seller who faces the ambiguity set indexed by an arbitrary partition $\K$ and the set $\S$ of possible moments.  We begin with the main theorem:

% In \Cref{sec:moment}, we show that $\K$-bundled sales mechanism is robustly optimal. In \Cref{sec:necessity}, we establish that the $\K$-bundling structure is not only sufficient but also necessary for robust optimality.

\begin{theorem}	\label{thm:moment}
It is robustly optimal for the seller to use a $\K$-bundled sales mechanism.
\end{theorem}
\begin{proof}
    See \Cref{Proof:moment}.
\end{proof}

To prove \Cref{thm:moment}, we construct a $\K$-bundled sales mechanism $M^*\in \M$ together with the distribution $F^*\in \F$ such that they form mutual best responses.   Here, we illustrate the main economic intuition behind the construction using \Cref{eg:1};  the general construction of $(F^*, M^*)$ and proof appear in \Cref{Proof:moment}.

\paragraph{Construction of $F^*$:}  Recall that in \Cref{eg:1}, $\K=\{\{1\},\{2,3\}\}$, with the ambiguity set $\F$ constrained in terms of means and variances of valuations along two categories, $\{1\}$ and $\{2,3\}$, of goods.   Nature chooses $F^*\in \F$ that has a one-dimensional support depicted in \Cref{fig:F1}: the support forms a ray emanating from the origin and contains a vertical segment toward the end. Specifically, nature draws a one-dimensional random variable $X$ distributed from $[1,\infty)$ according to cdf $$H(x):=1-1/x,$$
(i.e. Pareto distribution).  The valuation of each good $i=1,2,3$ is then determined as
$$v_i=\min\{\alpha_i X, \beta_i\},$$
where $(\alpha_i,\beta_i)$'s are chosen (uniquely) to satisfy the moment conditions (i.e., $\E[v_1]= 0.5,\ \E[v_2]=\E[v_3]=0.3$, and variances $\mathbb{V}[v_1]=\mathbb{V}[v_2+v_3]=0.1$).  

\begin{figure}[htbp]
	\centering
	\includegraphics[width=0.5\textwidth]{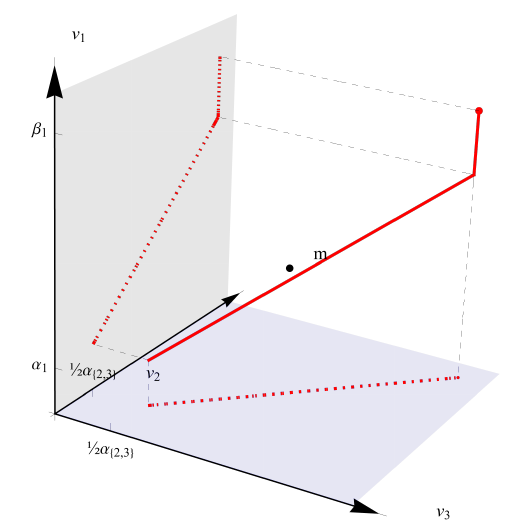}
	\caption{\small The solid red curve is the support of the joint distribution $F^*$. The dashed curves are its projections to the subspaces.}
	\label{fig:F1}
\end{figure}

There are two properties notable about $F^*$.  The first is that the support is comonotonic within each category;  see, for example in \Cref{fig:F1}, its projection onto the $(v_2,v_3)$ space. As will be explained in \Cref{sec:necessity}, this property incentivizes the seller to bundle each category of goods. Second, $F^*$ is designed to suppress the seller's revenue as much as possible. To see this, note first that, since $F^*$ is one-dimensional, the seller's mechanism design problem reduces to a standard one-dimensional screening problem (with multi-dimensional allocations). The virtual value of each item (as a function of $x$) is 
\begin{align*}
    J_i(x)=V_i(x)-V_i'(x)\frac{1-H(x)}{h(x)}=
    \begin{cases}
        0 &\text{if } x<\beta_i/\alpha_i\\
        \beta_i&\text{if } x\ge \beta_i/\alpha_i
    \end{cases}.
\end{align*}

  In words, nature ``levels'' the seller's virtual valuation.  The ``flat'' virtual value function means that all mechanisms that allocate each item $i$ to buyer type with valuation $\beta_i$ are optimal. Since any $\K-$bundled sales mechanism that sets bundle $\{1\}$'s price  in $[\alpha_1,\beta_1)$ and bundle $\{2,3\}$'s price in $[\alpha_2+\alpha_3, \beta_2+\beta_3)$, respectively, satisfy this property, $M^*$ constructed below is optimal given $F^*$. The resulting revenue is precisely what the seller receives by charging the lowest prices in the support of $F^*$, i.e., $\alpha_1+\alpha_2+\alpha_3$.\footnote{This property of the Pareto distribution has been exploited in other papers, \cite{carrasco2018optimalselling} and \cite{roesler2017buyer}, in the single-good case.  Unlike the latter paper, the seller does not choose the lowest price in the support of the worst-case distribution but instead mixes over the support to guarantee revenue.  The role of  the Pareto distribution here is therefore to keep the seller's revenue at the level she would enjoy by setting the lowest prices in the support. }

% This feature of $H$ incentivizes the buyer to set the lowest possible prices without violating the moment conditions.  
% Specifically, facing $F^*$, the seller's best response is price two categories, $\{1\}$ and $\{2,3\}$, at their respective infimum valuations $\alpha_1$ and $\alpha_2+\alpha_3$.  As will be seen, they determine the revenue for the seller. 

% \wz{[We did not explain why $M^*$ is optimal, as is pointed out by R3] Intuitively, nature ``levels'' the seller's virtual valuation. This feature of $H$ incentivizes the buyer to set the lowest possible prices without violating the moment conditions. Note that this include a special case where bundles $\{1\}$ and $\{2,3\}$ are priced at their respective infimum valuations $\alpha_1$ and $\alpha_2+\alpha_3$, which determines optimal revenue guarantee for the seller.\footnote{These properties of the Pareto distribution has been exploited in other papers, \cite{roesler2017buyer} and \cite{carrasco2018optimalselling},  in the single-good case. Specifically, \cite{roesler2017buyer} use the latter property to incentivize a profit-maximizing seller to subsequently set a price that maximizes consumer surplus.}} 

% In other words, any mechanism that allocates item $i$ when $v_i=\beta_i$ with probability $1$ is optimal, which is satisfied by $M^*$. Intuitively, nature's worst-case distribution also maximally ``hedges'' by leveling the seller's virtual value function. 

\textbf{Mechanism $M^*$}:   The optimal $\K$-bundling mechanism sells the two categories $\{1\}$ and $\{2,3\}$ as two separate bundles, and prices the respective bundles independently and randomly according to density functions:
  $$g_{1}(p)=2\lambda_{1}\cdot \frac{\beta_{1}-p}{p}$$ on support $[\alpha_{1},\beta_{1}]$, and 
  $$g_{23}(p)=2\lambda_{23}\cdot \frac{\beta_{23}-p}{p}$$
  on  support $[\alpha_{23},\beta_{23}]$, where
 $\alpha_{23}=\alpha_2+\alpha_3$ and $\beta_{23}=\beta_2+\beta_3$, and $\bm{\lambda}=(\lambda_{1},\lambda_{23})$ are scale factors chosen so that the densities integrate to ones.  

Intuitively, the $\K$-bundling mechanism is designed to ``hedge'' against possible deviation by nature away from $F^*$. 
To see this, let us recall what nature is capable of doing. First, nature can freely shift values across items within the bundle $\{2,3\}$, as the moment condition only constrains its total value.  Second, nature can freely correlate the values of categories $\{1\}$ and  $\{2,3\}$. Third, nature can redistribute the value of each bundle---bundle 1 and bundle $\{2,3\}$---while maintaining the same dispersion measures.  
   
Our mechanism makes it unprofitable for nature to deviate along these three channels. 
% balances nature's potential advantages from a mean-preserving redistribution of values via three main characteristics. 
First, the bundling of items 2 and 3 makes their valuations perfectly substitutable from the revenue standpoint; hence, nature has no incentive to redistribute values within the $\{2,3\}$ bundle.  Second,
the independent pricing of the two bundles means that nature never gains from manipulating the correlations of values across the bundles.
Finally,  the prices of the two bundles are randomized according to the particular hyperbolic forms of the densities precisely to eliminate any incentive by nature to redistribute value within each bundle.

 The construction and the argument generalize naturally to an arbitrary $\K$, leading to a pair of $\K$-bundled sales mechanisms $M^*$ and a one-dimensional valuation distribution $F^*$ that constitute a saddle point.  Two questions remain. First, how do we pin down the parameters $\bm{\alpha},\bm{\beta}$ and $\bm{\lambda}$, especially under the general set $\S$?  Second, why are different mechanisms \emph{not} robustly optimal? We relegate the answer to the first question in the formal proof in \Cref{Proof:moment}. The second question is addressed in \Cref{sec:necessity}.

\subsection{Implications of \Cref{thm:moment}}

\Cref{thm:moment} implies a few immediate corollaries. First, we provide a rationale for the use of both separate sales and pure bundling as special cases.
\begin{corollary}  \label{cor:special} If the seller faces the ambiguity set in \cref{eq:ambiguity-partial} where $\K$ is the finest partition, then separate sales of individual items are robustly optimal. If the seller faces ambiguity set in \cref{eq:ambiguity-partial} where $\K$ is the coarsest partition, then a sale of the grand bundle is  robustly optimal.
\end{corollary}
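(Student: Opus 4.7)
The corollary is a direct specialization of \Cref{thm:moment}, so the plan is simply to instantiate the theorem at the two extreme choices of partition and translate the conclusion into the language of \Cref{sec:model}.

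First, I would invoke \Cref{thm:moment} with $\K = \{\{1\},\dots,\{n\}\}$. The theorem guarantees a saddle point $(M^*,F^*)$ in which $M^*\in\M_\K$, and by \Cref{lem:1} (or directly from \cref{eqn:minmax}) the $\K$-bundled sales mechanism $M^*$ attains the optimal revenue guarantee $R^*=\sum_{K\in\K}\alpha_K$. It then remains to observe that, by the definition given in \Cref{sec:model}, $\M_\K$ with $\K$ the finest partition coincides with the set of separate sales mechanisms: each singleton bundle $K=\{i\}$ yields a one-dimensional mechanism $(q_i,t_i)$ depending only on $v_i$, so the direct mechanism satisfies $q_i(\bm{v})=q_i(v_i)$ and $t(\bm{v})=\sum_i t_i(v_i)$. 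Hence $M^*$ is a separate sales mechanism and is robustly optimal.

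Second, I would apply \Cref{thm:moment} with $\K=\{\{1,\dots,n\}\}$. Again the theorem yields a saddle point, with optimal revenue guarantee $\alpha_{\{1,\dots,n\}}$, attained by some $M^*\in\M_\K$. By the definition in \Cref{sec:model}, a $\K$-bundled sales mechanism for this coarsest $\K$ has $t(\bm{v})=t_N\bigl(\sum_{j\in N}v_j\bigr)$ and $q_i(\bm{v})=q_N\bigl(\sum_{j\in N}v_j\bigr)$ for all $i$, i.e., the seller offers the grand bundle according to a one-dimensional mechanism. This is precisely pure bundling, so $M^*$ is a robustly optimal pure bundling mechanism.

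There is no real obstacle here: both statements are immediate rephrasings of \Cref{thm:moment} for the two extreme partitions, using only the defining property of $\M_\K$. The proof can therefore be written in a couple of lines.
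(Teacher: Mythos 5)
Your proposal is correct and matches the paper's intent exactly: the corollary is stated without a separate proof precisely because it is the immediate specialization of \Cref{thm:moment} to the finest and coarsest partitions, with the saddle-point property (via \cref{eqn:minmax} or \Cref{lem:1}) delivering robust optimality and the definition of $\M_\K$ identifying the mechanism as separate sales or pure bundling, respectively. Nothing is missing.
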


% More generally,  \Cref{thm:moment} rationalizes a   form of partial bundling that is ``aligned'' with the structure of the seller's dispersional knowledge, as represented by the partition $\K$. The rough intuition is as follows.  On one hand, the ambiguity about the correlation between values of bundles in $\K$  leads to separate sales of these alternative bundles. On the other, the ambiguity about how a given value of each bundle $K\in \K$ is distributed across items within  $K$ leads to the bundling of the items within $K$.  This insight will be further generalized in the next section.  

The next corollary expands the applicability of \Cref{thm:moment} beyond  the ambiguity set $\F$ in \Cref{eq:ambiguity-partial}.  
 
\begin{corollary} \label{cor:beyond} Suppose $(M^*,F^*)$ is a saddle point given an ambiguity set $\F$.  If $\widetilde \F\subset \F$ such that $F^*\in \widetilde \F$, then, $(M^*, F^*)$ is a saddle point given the  ambiguity set $\widetilde \F$.
\end{corollary}
\begin{proof}  The result follows since
	$R(M^*, F^*) \le R(M^*, F)$ for any $F\in \widetilde \F\subset \F$.
\end{proof}

This corollary states that  $F^*$ remains robustly optimal within any ambiguity set $\widetilde \F$ if it is in turn a subset of $\F$ defined in \Cref{eq:ambiguity-partial}. This simple corollary,  reminiscent of a revealed preference argument, turns out to be quite useful. For instance, one may find it plausible that item values are positively correlated so that the correlation coefficient between any pair of item values exceeds some number $\theta\in [0,1)$.  Since $F^*$ exhibits high correlation across all the $v_i$'s (they are perfectly correlated in the interior support),  it will satisfy this additional restriction for $\theta$ small enough, so one may conclude that the mechanism identified in \Cref{thm:moment} continues to be robustly optimal given the correlation condition. 

Next, we can extend \Cref{thm:moment} to allow for the types of complementarities considered by   \cite{deb-roesler2021}.
\begin{corollary}\label{cor:complementarity}
    Suppose the buyer's value of each subset of items, $J$ (not necessarily an element of $\K$), is given by $u_J\cdot \sum_{i\in J}v_i$, for some $u_J\in[0,1]$, and exhibits \emph{complementarity}: $u_J=1$ if $J\in \K$. Then, a $\K$-bundled sales mechanism is robustly optimal.
\end{corollary}
\begin{proof}
    See \Cref{proof:complementarity}.
\end{proof}

In the corollary, the value of all bundles except for those in $\K$ can be discounted by a factor of $u_J\le 1$ and $\K$-bundled sales remains robustly optimal. This generalization is straightforward since the revenue from $\K$-bundled sales is not affected by such complementarity, while all other mechanisms only under-perform.

How does the seller's revenue guarantee vary with 
the seller's knowledge?  Such comparative statistics is interesting in their own right since these parameters may be ex-ante controlled via market research efforts.  For instance, one may increase the precision of market forecast by reducing the dispersion; it would be interesting to know how such an investment may improve the revenue guarantee.  
Thanks to our complete characterization, such comparative statics can be readily performed.  

\begin{corollary}\label{cor:comp:statics}  Suppose \(\Omega=\{(\bm{m},\bm{s})\}\). Let $R^*(\bm{m},\bm{s})$ denote the optimal revenue guarantee as a function of $(\bm{m},\bm{s})$. Let $m_K=\sum_{i\in K}m_i$.
\begin{align}
    \begin{dcases}
        \frac{\d R^*(\bm{m},\bm{s})}{\d m_K}=\lambda_K\phi'_K(\beta_K)>0;\\
        \frac{\d R^*(\bm{m},\bm{s})}{\d s_K}=-\lambda_K<0,
    \end{dcases}\label{eqn:comp:stat}
\end{align}
 
where $\lambda_K>0,\beta_K>m_K$ are parameters pinned down in the proof of \Cref{thm:moment}.
\end{corollary}
\begin{proof}
    See derivation in \Cref{sec:ift}.
\end{proof}

% \yk{Do we need these para?  I am a little confused on the second para.}
% \wz{[\Cref{cor:comp:statics}] is w.r.t. m\&s. But our main model has a general $\S$. The second paragraph talks about comparative statics w.r.t. general $\S$}

Without loss, we can normalize $\phi_K$ so that it is centered at $m_K$ (i.e. $\phi'_K(m_K)=0$), then $\phi_K''>0$ implies $\phi_K'(\beta_K)>0$. We obtain a complete comparative statics result when $\S$ is a singleton: The optimal revenue guarantee decreases strictly in the dispersion of the valuation distribution and increases strictly in the mean of the valuation. 

Moreover, for general $\S$, \Cref{eqn:comp:stat} quantifies the tradeoff between the revenue guarantee gain from improving the ``precision of estimation'' for different summary statistics. Consider the special case when $\S$ is a product set---the seller knows a confidence interval for each mean and moment estimation---then, \Cref{eqn:comp:stat} implies that $\phi'_K(\beta_K)$ is exactly the marginal rate of substitution between shrinking the confidence interval of the mean versus the dispersion moment of a product group.\footnote{More precisely, lowering the upper bound of dispersion $s_K$ by a unit is revenue-equivalent to an increase in the lower bound of the mean $m_K$ by $\phi'_K(\beta_K)$ units.}

\section{Necessity of $\K$-bundled sales}\label{sec:necessity}
    \Cref{thm:moment} establishes the robust optimality of a $\K$-bundled sales mechanism.  However, it leaves open the possibility that another mechanism may attain that same revenue guarantee.  We next show that this is not the case and therefore the main qualitative feature of $\K$-bundled sales is {\it essential} for achieving that optimal revenue guarantee.
    This analysis also reveals what motivates the seller to choose the predicted mechanism.  

% In \Cref{sec:moment}, we established the robust optimality of a $\K$-bundled sales mechanism.  We explained its robustness to  the mechanism being immune to redistribution of value within  each category $K\in \K$ and to the perturbation of correlation across bundles. 
% These two features seem to uniquely pin down the $\K$-bundling structure. In this section, we prove that indeed this is the case and illustrate the fragility of alternative mechanisms.

\begin{theorem}  \label{thm:necessity}  Fix an ambiguity set $\F$ defined relative to the partition $\K$. 
\begin{enumerate}
    \item Suppose there is $K\in \K$ with $|K|\ge 2$.  Then, for any nonempty sets $J, J'\subset K$ with $J\cap J'=\emptyset$, no mechanism in $\M_{\K'}$ is robustly optimal if $\K'$ separates $J$ and $J'$;  i.e., $J,J'\in \K'$. 
    \item Let $\bm{\alpha},\bm{\beta}$ be the parameters derived in \Cref{thm:moment}. Suppose there are $K,K'\in \K$ such that $\beta_K/\alpha_K\ne \beta_{K'}/\alpha_{K'}$ (as defined in \cref{eq:mi':1} and \cref{eq:ki:1}).  Then, no mechanism  in $\M_{\K'}$ is robustly optimal if $\K'$ bundles $K$ and $K'$ together; i.e,   $K\cup K'\in\K'$. 
\end{enumerate}
\end{theorem}
While we relegate the formal proof of \Cref{thm:necessity} to \Cref{proof:sub}, we illustrate the intuition of the proof via \Cref{eg:1}, where $\K=\{\{1\},\{2,3\}\}$. In this case, \cref{thm:necessity} rules out two alternative types of mechanisms: the fully separated mechanism (corresponding to $\K'=\{\{1\},\{2\},\{3\}\}$) and the pure bundling mechanism (corresponding to $\K''=\{\{1,2,3\}\}$). Recall that, given $F^*$, any mechanism that allocates item $i$ when $v_i=\beta_i$ with probability 1 is optimal. Therefore, the two types of candidate mechanisms are optimal when the prices are chosen not too high. However, they are \textit{not} robustly optimal, as each of them is susceptible to a type of deviation by nature that lowers the revenue guarantee.  These deviations reveal the potential distribution $F$ that motivates the seller to choose the ``correct'' $\K$-bundled sales.

 \paragraph{Why is full separation not robustly optimal?} 
 Suppose instead of the $\{\{1\}, \{2,3\}\}$-bundling, the seller sells all three goods separately,  in particular, separating items 2 and 3. The separate sale is vulnerable to the following deviation by nature.
 Consider a distribution $\tilde F$, which is the same as $F^*$, except that a small mass $\varepsilon$ is transferred from  $(\beta_1,\frac{1}{2}\beta_{23},\frac{1}{2}\beta_{23})$ (the point mass at the top) to $(\beta_1,\beta_{23},0)$ and $(\beta_1,0,\beta_{23})$, each with respective masses of $\frac{1}{2}\varepsilon$. See \Cref{fig:F3-1}.

\begin{figure}[htbp]
	\centering
	\includegraphics[height=7cm]{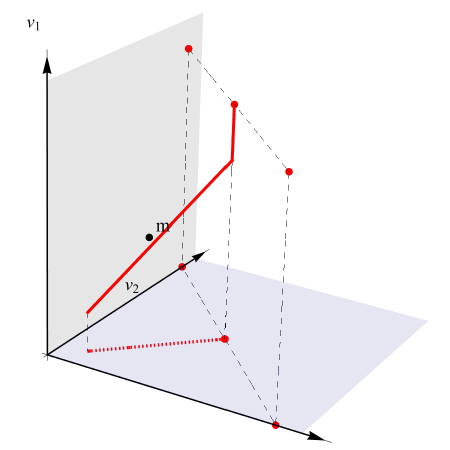}
	\caption{The support of distribution $\tilde{F}$}
	\label{fig:F3-1}
\end{figure}

This change keeps all constraints satisfied and does not alter the revenue of $M^*$ since the distributions of $v_1$ and $v_2+v_2$ remain the same. Yet the change has increased the dispersion of each individual item value in a way that makes separate sales less profitable. 
% To see this, suppose the seller sells all three items separately, in particular, separating items 2 and 3. 
For  $\varepsilon$ sufficiently small, the seller will never wish to charge prices $0$ or $\beta_{23}$ for either item 2 or item 3. For any other price in the support, the seller loses revenue $p\cdot \frac{1}{2}\varepsilon$, when compared with bundling items 2 and 3. Consequently, facing   distribution $\tilde F$, the seller earns strictly below  $\alpha_1+\alpha_{23}$ by selling the three items separately.  In essence, the fear of this ``negatively-correlated'' counterfactual distribution  motivates the seller to bundle goods 2 and 3.

\paragraph{Why is pure bundling not robustly optimal?}  Suppose now the seller bundles all three items.   As observed earlier, given the same $F^*$, the grand bundle yields the same optimal revenue when it is sold at price $p$ within $\big[\alpha_1+\alpha_{23},\beta_{23}\left(1+\frac{\alpha_1}{\alpha_{23}}\right)\big]$; see \Cref{fig:prop1:profit}.  However, the same figure hints at why selling the grand bundle is not robustly optimal.  Suppose the seller charges an even higher price   $p>\beta_{23}\left(1+\frac{\alpha_1}{\alpha_{23}}\right)$ for the bundle. Then, the revenue would be strictly lower!  This is because bundling entails inefficient screening at that price, specifically in the vertical segment of the support depicted in \Cref{fig:prop1:dist}: the purchases of all goods are now tied so that  the buyer will refuse to buy the bundle {\it even when} he has the highest value $\beta_{23}$ for the bundle $\{2,3\}$, if his value of good 1 is less than $p-\beta_{23}$, resulting in the seller  not being able to sell goods 2 and 3 in that case.  This is clearly inefficient and this inefficiency never occurs under separate sales of $\{1\}$ and $\{2,3\}$, since the seller would never charge more than $\beta_{23}$ for the bundle $\{2,3\}$.

\begin{figure}[htbp]
	\begin{minipage}[tb]{0.44 \textwidth}
		\centering
		\includegraphics[height=5cm]{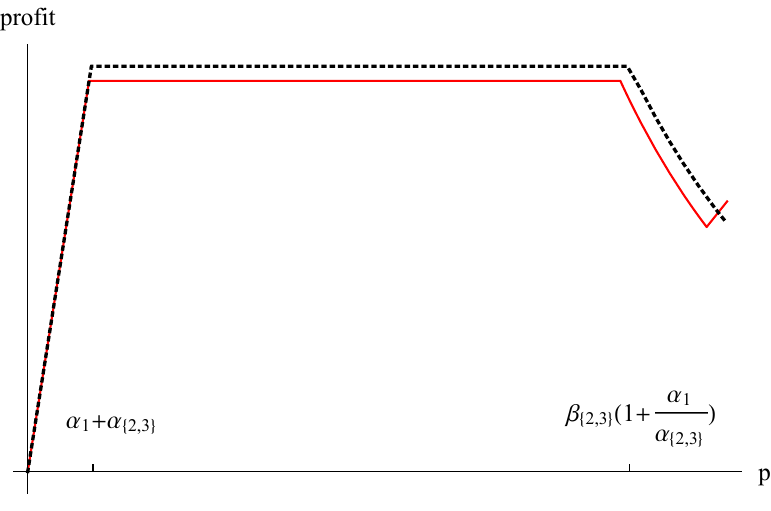}
		\caption{\small Profit from pure bundling mechanism under $F^*$ (dashed) and $\tilde{F}$ (solid)}
		\label{fig:prop1:profit}
	\end{minipage}
	\begin{minipage}[tb]{0.55\textwidth}
		\centering
		\includegraphics[height=5.2cm]{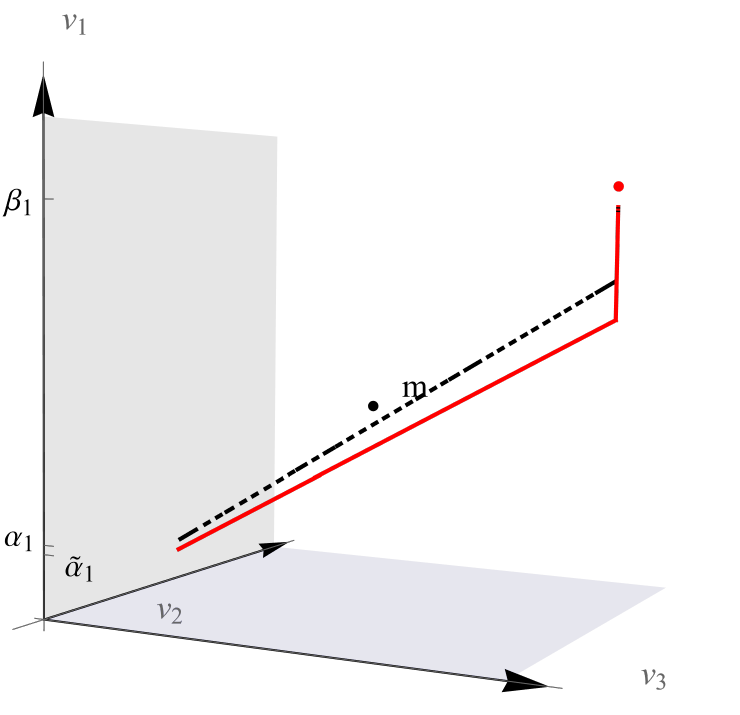}
		\caption{\small The support of   distribution $F^*$ (dashed) and $\tilde{F}$ (solid)}
		\label{fig:prop1:dist} 
	\end{minipage}
\end{figure}

% if his value of good 2 is low, which is inefficient from the seller's perspective in terms of surplus extraction. By contrast, separate sales avoid such an inefficient screening by unbundling the sale of two goods.\footnote{For instance, the buyer with the highest value $\beta_1$ of good 1 is always induced to buy that good.}

Nature can exploit this ``weakness'' of the grand bundling by shifting mass toward that vertical segment.  Consider a new distribution $\tilde{F}$ supported on the red curve in \Cref{fig:prop1:dist}. Compared with $F^*$, this new distribution lowers the infimum of $v_1$ from $\alpha_1$ to $\tilde{\alpha}_1$, thus lowering the value $V_1(s)$ of good 1 on the interior segment of the support. This reduces the revenue the seller can collect by charging a low bundle price $p$.  Of course, nature cannot  lower the value of good 1 uniformly across the board, because this will violate the mean condition.  To satisfy the latter, $\tilde{F}$ must therefore put larger mass at its supremum value ${\beta}_1$ of good 1.   %which in turn allows $\tilde{\beta}_2$  to be slightly lower than $\beta_2$.   
The seller cannot take advantage of this increased mass at ${\beta}_1$ under pure bundling since the profit at  $p$ in the neighborhood of $ \beta_1+\beta_{23}$ was strictly lower than $\alpha_1+\alpha_{23}$, as can be seen in \Cref{fig:prop1:profit}.\footnote{The robust optimality of $M^*$ means that  the seller would receive at least  $\alpha_1+\alpha_{23}$  from $M^*$  given  $\tilde{F}$.}  Hence, the new distribution keeps the seller's revenue strictly below $\alpha_1+\alpha_{23}$ no matter the price of the bundle.  Intuitively, the distribution $\tilde F$ exacerbates the ex ante asymmetry across the two bundles, and the {\it fear} of such an asymmetric distribution motivates the seller to choose   separate sales mechanism.

Analyzing the counterfactual suboptimal mechanisms teaches us two important lessons about the robustly optimal mechanism. First, it is well known that negatively-correlated item values make bundling desirable in the standard Bayesian context (see \cite{adams1976commodity}).  In light of this, one may find  it surprising that the item values under  distribution $F^*$  are instead  positively correlated.    \Cref{thm:necessity}-1 clarifies this issue:  it is a \emph{possible} negative correlation ``off the path'' that motivates the seller to use bundling in the current environment.\footnote{When we state ``off-the-path'', we are invoking the definition of $R^*$: the seller acts first, knowing nature's response.} Second, it is also well-known that perfectly correlated (comonotonic) item values are the worst-case distribution that justifies full separation under correlational ambiguity (see \cite{carroll2017robustness}). \Cref{thm:necessity}-2 further explains what deters bundling: nature may ``deviate'' to distributions that are still comonotonic but strongly asymmetric, which requires the seller to screen different dimensions asymmetrically to attain the maximal revenue.

\section{Informational Robustness} \label{sec:info}

 A rather surprising application of our analysis is informational ambiguity, where the source of ambiguity for the seller is not the prior on the buyer's valuations but rather the information the latter has about the valuations.  A growing number of recent papers study  mechanisms that are robust with respect to such ambiguity; see, for example, \cite{du2018commonvalue, brooksdu2019,brooks2021maxmin, roesler2017buyer,ravid2019learning,bbm2019revenueguarantee}.

To fix the idea, suppose the seller has prior distribution $G\in \Delta(\mathbb{R}_+^n)$ on the valuations of the goods, but she has ambiguity on the information the buyer himself has about the valuations.  By \cite{Blackwell1951}, a possible signal the buyer may have is characterized by a mean-preserving contraction of $G$.  Hence, one can describe the seller's ambiguity  by a set of convex moment constraints:  
\begin{align}  \label{eq:F-info}
\F_G:= \left\{ F\in \Delta(\R_+^n)\big| \, \mathbb{E}_F[\phi(\bm{v})]\le  \mathbb{E}_{G}[\phi(\bm{v})], \forall \phi \text{ convex} \right\}.
	\end{align}
 A robustly optimal mechanism given the ambiguity set $\F_G$ in \Cref{eq:F-info} is then called \textbf{informationally robust}. 
% Conceptually, one can identify the convex function $\phi$ that is most ``binding'' and apply \Cref{thm:moment} 
% to find the robustly optimal mechanism.    
Formulated in this manner, our analysis can be readily applied to find an informationally robust mechanism.  %In particular, we can generalize \cite{deb-roesler2021}'s Theorem 1 that pure bundling is informationally robust when the prior distribution $G$ is exchangeable. 
Consider the following assumption on the prior distribution $G$.   
\begin{assumption} [Stochastic Comonotonicity] \label{as:comon} There exists $(\xi_1, ..., \xi_n)\in \mathbb{R}_+^n$ with $\sum_i \xi_i=1$ such that for each $i=1,..., n$
$$\E\Big [ v_i\Big |\tsum_{j=1}^nv_j \Big] =\xi_i \left(\tsum_{j=1}^n v_j\right).$$ 
\end{assumption}

Stochastic comonotonicity means that the expected value of each item conditional on the total value of all items is simply a fixed fraction of the latter.  Geometrically, the conditional mean of each item value forms a linear ray, as depicted in \Cref{fig:info}. Effectively, the condition 
means that the item values are a garbling---a mean-preserving spread---of such a ray, as illustrated in \Cref{fig:info}.

\begin{figure}[htbp]
    \centering
    \includegraphics[width=0.4\linewidth]{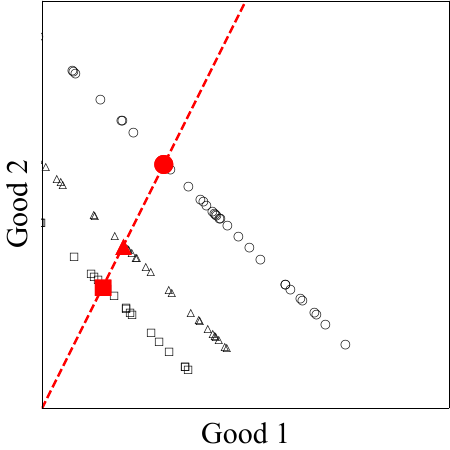}
    \caption{\small The red circle (triangle/square) is the conditional expectation of the values, conditional on the total value being the same as the red circle, illustrated by the black circles (triangles/squares). In this example, such conditional expectations are aligned on the ray $v_2=2v_1$.}
    \label{fig:info}
\end{figure}

\begin{theorem}  If the seller's  prior distribution $G$ is stochastically comonotonic, then   pure bundling is informationally robust.
	\label{thm:info}
\end{theorem}

To prove \Cref{thm:info},  we invoke \Cref{thm:moment} with $\K=\{N\}$.  The application is not trivial, however, since the ambiguity set $\F_G$  requires a collection of convex moment conditions instead of a single convex moment condition as required by the ambiguity set in \Cref{eq:ambiguity-partial}. Consequently, in order to apply \Cref{thm:moment}, we identify a single ``binding'' convex moment function of the form $\phi_N$---namely, one that applies to the total sum of values---out of all convex moment conditions in $\F_G$.  Let $\F$ be the ambiguity set that results from imposing only that binding condition.   Then, $\F$ conforms to the form assumed in \Cref{eq:ambiguity-partial} with $\K=\{N\}$, so \Cref{thm:moment} can be  applied against the ambiguity set $\F$.  This means that if the seller were to face $\F$ as the ambiguity set, then pure bundling is robustly optimal against the worst-case distribution $F^*$, which has comonotonic support. 

Note, however, that $\F$ is not the true ambiguity set; instead, the seller's ambiguity set is   $\F_G$, a subset of $\F$.  Here is where the stochastic comonotonicity of $G$ is required.  If $G$ is stochastically comonotonic, then $G$ is a mean preserving spread of $F^*$ (see    \Cref{fig:info} for an illustration), so $\F_G$ does indeed contain $F^*$.   This means that the pure bundling that forms a saddle point along with $F^*$ given $\F$ also forms a saddle point given $\F_G$ (by \Cref{cor:beyond}), which proves that it is max-min optimal against $\F_G$.

Stochastic comonotonicity is fairly general. In particular, it accommodates \emph{exchangeable prior} as a special case.  To see this, suppose $G$ is exchangeable; namely, {for all} permutations $(i_1,\dots,i_n)$ of $(1,\dots,n)$, $G(v_1,\dots,v_n)=$ $G(v_{i_1},\dots,v_{i_n})$.   Then, for each $i$,
$$\E\Big [ v_i\Big |\tsum_{j=1}^nv_j \Big]=\frac{1}{n}\tsum_{k}\E\left[ v_k\Big |\tsum_{j} v_j \right]=\frac{1}{n}\E\left[ \tsum_{k}v_k\Big |\tsum_{j} v_j \right]=\frac{1}{n} \tsum_{j=1}^n v_j,$$ 
so $G$ is stochastically comonotonic with $\xi_i=1/n$ for all $i$. Therefore, \Cref{thm:info} nests Theorem 1 of \cite{deb-roesler2021}, which proves the same result under exchangeable prior. This generalization is relevant both from conceptual and practical perspectives.  It is analytically important since it speaks to the essential feature of the prior that makes pure bundling informationally robust.  The reader of \cite{deb-roesler2021} may conclude that the exchangeability of the prior, with all the restrictions it involves, may be crucial for the result. In particular, one may wonder if symmetry is an important driver of what makes pure bundling robustly optimal.\footnote{\label{fn:complementary} \cite{deb-roesler2021} study two extensions that permit asymmetry across items. They show that (i) pure bundling is still robustly optimal when the value of a proper subset  $B\subsetneq N$ of items can be lower than $\sum_{i\in B}v_i$, and (ii) bundling a proper subset \(B\subsetneq N\) is robustly optimal when the values are ``exchangeable'' only within $B$. Nevertheless, both extensions rely on the symmetry of all items within the considered bundle.  Our \Cref{thm:info} can be easily extended to obtain (i) analogously to \Cref{cor:complementarity} and to obtain (ii) as is explained in the subsequent paragraphs.}
 Our analysis shows that the symmetry implied by an exchangeable prior is not crucial for pure bundling to be robustly optimal.  
 As is clear, stochastic comonotonicity allows for arbitrary asymmetry in terms of the mean values.  %\yk{Meanwhile, \Cref{thm:necessity}-2  highlights the importance of the comonotonically-supported distribution  for explaining pure bundling as a robustly optimal mechanism}  
 In this sense, our theorem uncovers the fundamental property of the prior that makes pure bundling informationally robust.

 Second, the generality gained from stochastic comonotonicity is not just significant,  it is also practically relevant.  
Stochastic comonotonicity is consistent with many simple models or heuristics used in various settings. For instance, consider an investment bank's problem in pricing the assets. There are $n$ assets, all belonging to a sector $N$.  The capital asset pricing model (CAPM) implies a prior distribution of the asset values consistent with stochastic comonotonicity.  To be concrete, suppose the bank subscribes to the model that assesses the return of each of  $n$ assets as:  
\begin{align*}
    r_i=\beta_i\cdot r_m+e_i,
\end{align*}
where the random variable $r_m$ is the ``market return'' of the sector, a constant $\beta_i$ is the ``beta'' of the asset and $e_i$ is the idiosyncratic risk satisfying $\sum e_i=0$ and $\E[e_i|r_m]=0$.  Since $\E[r_i|\sum_j r_j]=\beta_i \sum_j r_j$, such a model satisfies stochastic comonotonicity. If the investment bank is concerned with informational ambiguity, the max-min optimal policy would  be to sell its bundle as an asset.  

Finally, while \Cref{thm:info} only covers the case of $\K=\{N\}$, with pure bundling as the optimal solution, it is not difficult to generalize the theorem to obtain partial bundling as an informationally robust policy by expanding the nature of ambiguity. For any arbitrary partition $\K$, suppose that the seller only knows the marginal distribution $G^K$ for each (partial) bundle $K$ of items, without knowing anything at all about the correlation of the values across different $K$'s within $\K$. Further, the seller does not know the information the buyer may have about the item values.  In this case, the seller faces both distributional and informational ambiguity.   Applying \Cref{thm:moment}, one can show that the seller facing such ambiguity will find  $\K$-bundled sales as a robustly optimal strategy. See \cite{che-zhong2022} for details.  Cast in the investment banker example, if her prior is given by the sector-specific CAPM model, then an informationally robust strategy calls for bundling all assets in each sector $K$ and selling the alternative bundles separately. In other words, the optimal menu of portfolios contains the market return (i.e. a market \emph{index}) for each sector.

\section{General Distributional Ambiguity}\label{sec:extensions}

We have so far focused on ambiguity sets characterized by moment conditions. In this section, we go beyond moment conditions and identify a general structure of ambiguity sets that would give rise to the robust optimality of $\K$-bundled sales. Special cases will identify the conditions that justify separate sales and pure bundling. 

To state the general condition on the ambiguity set, we first define an operator  $\Upsilon_{\K}: F\mapsto  \Delta(\R_+)^{|\K|}$:
$$\Upsilon_{\K}(F):=\left\{( F_K)_{K\in\K}\in \Delta(\R_+)^{|\K|}: \forall K\in \K, \forall z\in \R_+, F_K(z):=\P_{F} \{\mbox{$\sum_{j\in K}  v_j\le z$}\}\right\},$$
where $\P_{F}\{\cdot\}:=\E_{F}[\pmb{1}_{\{\cdot\}}]$. In words, $\Upsilon_{\K}(F)$ calculates the marginal distribution of the total value of each bundle $K\in \K$, given the initial distribution $F$. $\Upsilon_{\K}(F)$ is called the  {\bf $\K$-marginals} of $F$.  

\begin{definition} Fix any arbitrary partition $\K$ of $N$.  An ambiguity set $\F\subset \Delta(\R^n_+)$ exhibits {\bf $\K$-Knightian ambiguity} if
	$\F= \Upsilon_{\K}^{-1}\circ \Upsilon_{\K}(\F).$
\end{definition}

The notion of $\K$-Knightian ambiguity assumes two types of ambiguity. First, the seller has arbitrary knowledge about the $\K$-marginals; thus, all $\K$-marginals $(F_K)$ in $\Upsilon_{\K}(\F)$ are considered possible. Second, for each tuple of $\K$-marginals that the seller considers possible, she faces full ambiguity about the joint distribution; thus, all joint distributions in $\Upsilon_{\K}^{-1}((F_K))$ are considered possible.  In particular, this means she faces  ambiguity on a) the correlation of total values of product groups $K$'s across those in $\K$ and b) the distribution of values across items within each product group $K\in \K$.

%Some special cases may be instructive to consider. Take the finest partition $\K=\{\{1\},...,\{n\} \}$.  In this case, $\K$-Knightian ambiguity means that regardless of the marginal distribution of item values the seller considers possible from her ambiguity set $\F$ she cannot use the structure of $\F$ to rule out some correlation structures across the item values. 
%Again, she need not know very little about the marginal distribution of item values.  An example is the case of \yk{section reference on separate means} where the seller only knows  the moments of item values. 
%Likewise, for the coarsest partition $\K=\{\{1,..., n\}\}$, $\K$-Knightian ambiguity means that any information the seller has about the total value of all goods cannot be leveraged to mitigate  ambiguity about the associated item values.  

A special case of   $\K$-Knightian ambiguity is the case studied in 
\Cref{sec:model} where the seller knows only moments of the $\K$-marginals. But there are many other examples. For instance, $\K$-marginals may be constrained such that  $(F_K)_{K\in \K}\in \mathcal{G}\subset \Delta(\R_+)^{|\K|}$, for some arbitrary  set $\mathcal{G}$.  We list specific examples of $\K$-Knightian ambiguity:   

\begin{itemize}
    \item For each $K$, the ambiguity set may include every $F_K$ within a distance, say $\delta_K>0$, from some reference marginal distribution $F_K^0$ the seller finds plausible.\footnote{The metric could be sup norm or Levy-Prokhorov, among others.} \cite{bergemannschlag2008pricing} formulated ambiguity in this sense.
    \item For each $K$, the ambiguity set  may require $\underline F_K\le_{SO} F_K \le_{SO}  \overline F_K$ for some benchmark distributions $\underline F_K, \overline F_K$ and some arbitrary stochastic order $\le_{SO}$ that is closed under convex combinations.  Examples of such stochastic orders are  First-Order Stochastic Order, Second-Order Stochastic Dominance, Lehmann, Supermodularity, or combinations thereof.\footnote{Recall, however, from \Cref{sec:info} that the Second-Order Stochastic Dominance Order, or equivalently the Mean Preserving Spread Order, can be handled by dispersion moment conditions involving particular (piece-wise linear) convex moment functions.}
\end{itemize}    
   
In addition to the knowledge specified by $\F$, we allow the seller to have arbitrary knowledge about the means of item values.  Specifically, consider a set 
$$\widehat{\F}:=\{F\in\Delta(\R_+^n):  (\mu_1(F),..., \mu_n(F))\in \Omega\},$$
where $\Omega$ is an arbitrary nonempty subset of $\R_{++}^n$. We then assume that the seller's ambiguity set is given by $\F\bigcap \widehat{\F}$. Clearly, when $\Omega=\R_+^n$, the constraint specified by $\widehat{\F}$ has no bite at all. 

The main result requires some technical assumptions.  We say a  set $\F'$ of distributions is {\bf regular} if \emph{$\F'$ is nonempty, convex, closed under weak topology, tight, and has bounded expectation.}\footnote{A set of measures on $\mathbb{R}_+^n$ is tight if for any $\epsilon>0$ there is a compact subset $S\subset  \mathbb{R}_+^n$ whose measure is at least $1-\epsilon$.  All other notions are standard.}  Our main theorem then follows:

\begin{theorem}	
	\label{thm:partial:bundling} Fix any partition $\K$ of $N$. Suppose the seller faces a regular ambiguity set $\F\bigcap \widehat{\F}$, where $\F$ exhibits $\K$-Knightian ambiguity. Then, a $\K$-bundled sales mechanism is robustly optimal in the sense that
	\begin{align*}
	\sup_{M\in\M}\inf_{F\in\F\bigcap \widehat{\F}}R(M,F)=\sup_{M\in\M_{\K}}\inf_{F\in\F\bigcap \widehat{\F}}R(M,F).
	\end{align*}
\end{theorem}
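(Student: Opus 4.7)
Since $\M_{\K}\subset \M$, the inequality $\sup_{M\in\M_{\K}}\inf_{F}R(M,F)\le \sup_{M\in\M}\inf_{F}R(M,F)$ is immediate, so I only need to establish the reverse direction. The strategy is to combine a slice-wise application of \Cref{prop:saddle} with a minimax exchange over the space of $\K$-marginals.

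First, I slice the ambiguity set by $\K$-marginals and apply \Cref{prop:saddle} on each slice. Fix $M\in \M$ and $(F_K)\in \Phi_{\K}(\F\cap\widehat{\F})$. By $\K$-Knightian ambiguity, $\Phi_{\K}^{-1}((F_K))\subset \F$, so $\Phi_{\K}^{-1}((F_K))\cap\widehat{\F}$ is a nonempty subset of $\F\cap\widehat{\F}$ to which \Cref{prop:saddle} directly applies. Therefore
\[
\inf_{F\in \F\cap\widehat{\F}} R(M,F)\le \inf_{F\in \Phi_{\K}^{-1}((F_K))\cap\widehat{\F}} R(M,F)\le \sup_{M'\in\M_{\K}}\inf_{F\in \Phi_{\K}^{-1}((F_K))\cap\widehat{\F}} R(M',F).
\]
Crucially, for $M'\in \M_{\K}$ the payment depends on $\bm{v}$ only through the bundle sums $(\sum_{j\in K}v_j)_{K\in \K}$, so $R(M',F)$ depends on $F$ only through $\Phi_{\K}(F)$; writing $R(M',F)\equiv R(M',(F_K))$, the inner infimum collapses to $R(M',(F_K))$. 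Taking $\inf$ over $(F_K)$ and $\sup$ over $M\in\M$ then gives
\[
\sup_{M\in\M}\inf_{F\in \F\cap\widehat{\F}} R(M,F)\le \inf_{(F_K)\in \Phi_{\K}(\F\cap\widehat{\F})}\sup_{M'\in \M_{\K}} R(M',(F_K)).
\]

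Second, I exchange the $\inf$ and $\sup$ on the right-hand side via a minimax theorem. The image $\Phi_{\K}(\F\cap\widehat{\F})$ is convex and, by regularity of $\F\cap\widehat{\F}$ together with weak continuity of $\Phi_{\K}$, weakly compact; the function $(M',(F_K))\mapsto R(M',(F_K))$ is linear (hence both quasi-concave and quasi-convex) in each argument. Applying Sion's minimax theorem yields
\[
\inf_{(F_K)}\sup_{M'\in \M_{\K}} R(M',(F_K)) = \sup_{M'\in \M_{\K}}\inf_{(F_K)} R(M',(F_K)) = \sup_{M'\in \M_{\K}}\inf_{F\in \F\cap\widehat{\F}} R(M',F),
\]
where the final equality uses the collapse $R(M',F)\equiv R(M',\Phi_{\K}(F))$ once more. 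Chaining the displayed inequalities produces the desired bound.

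The main obstacle is rigorously justifying the minimax exchange: payments may be unbounded on the support of $(F_K)$, so upper semicontinuity of $R(M',\cdot)$ in the weak topology is not automatic, and $\M_{\K}$ carries no a priori compact topology. The standard remedies are to (i) pass to a suitable weak-$*$ closure of the feasible allocation/payment pairs for $\K$-bundled mechanisms and use the tightness and bounded-expectation components of regularity to rule out escape of mass, or (ii) approximate by truncated mechanisms with uniformly bounded payments, apply Sion's theorem on the truncations, and then pass to the limit, using regularity to control the tails uniformly. Either route reduces the critical exchange to a standard application of minimax.
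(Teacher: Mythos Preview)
Your proposal is correct and follows essentially the same route as the paper: slice the ambiguity set by $\K$-marginals, invoke \Cref{prop:saddle} on each slice, use the collapse $R(M',F)=R(M',\Phi_{\K}(F))$ for $M'\in\M_{\K}$, and close with a minimax exchange. The paper's \Cref{lem:minimax} is exactly your remedy (ii): it restricts to bounded continuous payment functions $\mathcal{T}_c^{|\K|}$, applies Sion there, and then uses the bounded-expectation part of regularity to show that any $t_K\in\mathcal{T}$ can be approximated in expected revenue by some $\tilde t_K\in\mathcal{T}_c$ uniformly over $F$; the only cosmetic difference is that the paper runs the minimax directly on $\G=\F\cap\widehat{\F}$ rather than on its image $\Phi_{\K}(\F\cap\widehat{\F})$, which is equivalent via the collapse you already noted.
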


\begin{proof}  See \Cref{sec:proof-saddle}.
\end{proof}

\iffalse
\begin{remark} \rm  \Cref{thm:partial:bundling} is not implied by \cite{carroll2017robustness}'s Theorem 2.1  {\it even} when $\mathcal{K}$ is the finest partition.  To see this,  suppose that, for each profile of marginal distributions $G:=(F_{i})\in\Upsilon_{\mathcal K}(\F)=:\mathcal G$, there exists a robustly optimal separate sale mechanism $M(G)$.\footnote{Namely, suppose there exists a joint distribution $F(G)$  such that 
     $$R(M', F(G))\le R(M(G),F(G))\le R(M(G),F'),$$
     for all $M'\in \mathcal M$ and for all $F'$ {\it whose marginals coincide with $G$}.} Suppose further, following the conjecture in p. 481 of \cite{carroll2017robustness}, that 
 $\mathcal G$ is ``well-behaved enough to contain a single worst marginal distribution'' $G^*$ that minimizes revenue $R(M(G),F(G))$ over all $G\in \mathcal{G}$.  It is tempting to conclude that the separate sale   $M(G^*)$ is robustly optimal, but this does not follow. The problem is that, given  $M(G^*)$, Nature may deviate profitably to some other $\tilde F$ {\it whose marginals  differ  from $G^*$}.  Such a possibility means that $R(M(G^*),F(G^*))$ need not be the true revenue guarantee  when the seller chooses $M(G^*)$.  To put differently, the ambiguity on {\it marginal distributions themselves}---i.e., the uncertainty about which $G\in \mathcal G$ nature may choose---could  lead the seller to choose some other, possibly {\it non-separating}, mechanism  as a better guarantee.  
   % Hence, the conjectured method  does not lead to the conclusion we have from our Theorem 2, even when $\mathcal K$ is the finest partition, so long as $\mathcal G$ is non-singleton.\footnote{If one knows that the Carroll's game with known marginals $G^*$ admits a unique saddle point, then our revealed preference argument of Corollary 2 can yield the desired conclusion. However, the uniqueness of the equilibrium strategies (as opposed to uniqueness of value, cannot be generally assured even under some nontrivial sufficient condition.}  
\end{remark}

\fi

 $\K$-Knightian ambiguity crystallizes the insight that gives rise to separation and bundling in the earlier section. Specifically, the concept captures the ambiguity about how the values of alternative bundles in $\K$ are correlated and the ambiguity about how a given value of a bundle $K\in\K$ is distributed across items within $K$. The former gives rise to the separation of sales across alternative bundles in $\K$ whereas the latter ambiguity gives rise to the bundled sales of items within each $K$. 

As special cases, the theorem provides conditions for the robust optimality of two canonical sales mechanisms:\footnote{ \Cref{cor:canonical} part 1 formalizes the conjecture in p. 481 of \cite{carroll2017robustness}: when $\G=\Upsilon_{\K}(\F)$ is ``well-behaved enough to contain a single worst marginal distribution,'' then separate sale is robustly optimal.  Indeed, even when the seller's ambiguity set contains a non-singleton set of marginal distributions, if it admits a unique saddle point, then the optimal mechanism in the saddle point must be robustly optimal against the exact marginal distributions associated with that saddle point, so the mechanism must be separating, following Theorem 1 of \cite{carroll2017robustness}.  However, it is not easy to guarantee existence or uniqueness of a saddle point.  For instance, our regularity condition does not necessarily lead to the existence of a saddle point or its uniqueness.  We therefore did not follow  \cite{carroll2017robustness}'s conjectured recipe of establishing (unique) saddle points for each item.  Our regularity condition, although not sufficient for existence of  a saddle point, guarantees the robust optimality of $\K$-bundled sales mechanisms, following a minimax argument.}

\begin{corollary}\label{cor:canonical}  The seller's ambiguity set $\F$  exhibits $\K$-Knightian ambiguity.
\begin{enumerate}
    \item If $\K$ is the finest partition of $N$ and $\F$ is regular, then separate sales are robustly optimal.
    \item If $\K$ is the coarsest partition of $N$ and $\F\bigcap \widehat{\F}$ is regular, then pure bundling is robustly optimal.
\end{enumerate}
\end{corollary}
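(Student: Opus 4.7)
The plan is to derive this corollary as a direct specialization of Theorem~\ref{thm:partial:bundling} to the two extreme partitions of $N$. All the substantive work has already been done in establishing that theorem, so the task reduces to matching up the canonical mechanism classes with $\M_\K$ at the endpoints of the partition lattice and checking that the regularity and $\K$-Knightian ambiguity hypotheses transfer.

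First I would recall from Section~\ref{sec:model} the definitional identifications: when $\K=\{\{1\},\dots,\{n\}\}$ is the finest partition, the class $\M_\K$ of $\K$-bundled sales mechanisms coincides by construction with the class of separate sales mechanisms; when $\K=\{\{1,\dots,n\}\}$ is the coarsest partition, $\M_\K$ coincides with the class of pure bundling mechanisms. These equalities let us translate the conclusion of Theorem~\ref{thm:partial:bundling} directly into statements about the two canonical mechanisms.

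For the first half of the corollary, I would apply Theorem~\ref{thm:partial:bundling} with $\K$ equal to the finest partition and $\widehat{\F}=\Delta(\R^n_+)$ (equivalently, take $\Omega=\R^n_{++}$, imposing no additional restriction on the means). Regularity of $\F$ then immediately yields regularity of $\F\cap\widehat{\F}=\F$, so the theorem gives
\begin{align*}
\sup_{M\in\M}\inf_{F\in\F}R(M,F)=\sup_{M\in\M_\K}\inf_{F\in\F}R(M,F),
\end{align*}
and by the identification $\M_\K=$ separate sales mechanisms, this is precisely the claim that separate sales attain the optimal revenue guarantee. The second half proceeds identically: take $\K$ equal to the coarsest partition, use the regularity of $\F\cap\widehat{\F}$ assumed in the hypothesis, and invoke Theorem~\ref{thm:partial:bundling} to obtain the analogous equality, which, via $\M_\K=$ pure bundling mechanisms, yields robust optimality of pure bundling.

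There is no genuine obstacle in the corollary; it is a packaging result. The only point worth verifying is that $\K$-Knightian ambiguity is meaningful at the two extremes. At the finest partition it says that whenever a profile of item marginals is admissible, every coupling of those marginals is also admissible—the natural multi-item generalization of the full-ignorance-about-correlation hypothesis in \cite{carroll2017robustness}. At the coarsest partition it says that whenever a distribution of the total value $\sum_i v_i$ is admissible, every decomposition of that total across items is admissible, which is exactly the condition that made pure bundling optimal in the concrete analysis of Section~\ref{ssec:pure-mean}. Both specializations are non-vacuous and consistent with the worked examples earlier in the paper, so Theorem~\ref{thm:partial:bundling} applies without further work.
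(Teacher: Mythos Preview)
Your proposal is correct and matches the paper's approach: the paper presents this corollary immediately after Theorem~\ref{thm:partial:bundling} with no separate proof, treating it as a direct specialization to the two extreme partitions via the definitional identifications $\M_\K=$ separate sales (finest $\K$) and $\M_\K=$ pure bundling (coarsest $\K$). Your additional detail about instantiating $\widehat{\F}$ trivially in the finest-partition case simply makes explicit what the paper leaves implicit.
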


\Cref{thm:partial:bundling}  %, as well as \Cref{cor:canonical}, 
identifies $\K$-Knightian ambiguity as a fundamental general condition for $\K$-bundled sales to be robustly optimal.  To the best of our knowledge, this condition provides for the most general characterization of the extent to which items should be bundled or separated. Since $\K$-Knightian ambiguity holds under the moment restrictions considered in \Cref{sec:moment}, this condition can be seen as responsible for the robust optimality found in that section.\footnote{\label{fn:lm} The moment conditions required by $\F$ in \Cref{sec:moment} clearly satisfies $\K$-Knightian ambiguity.  We prove in \Cref{sec:add-lemma} that $\F$ considered in \Cref{sec:moment} is regular.} 
Nevertheless,  \Cref{thm:partial:bundling} does not make that section superfluous. Note that the current theorem does not identify the exact form of the optimal mechanism or the worst-case distribution, whereas the additional structure given by moment restrictions allowed us to identify them in \Cref{thm:moment}.  Not only is the exact identification of the mechanism and distribution important and useful of its own right, it enables us to go beyond $\K$-Knightian ambiguity, which is sufficient but not necessary for $\K$-bundled sales to be robustly optimal.  For instance, as noted by \Cref{cor:beyond} and \Cref{thm:info}, the exact solution of the joint distribution enables us to identify a robustly optimal mechanism---i.e.,  $\K$-bundled sales---even when the ambiguity set $\tilde \F$ fails $\K$-Knightian ambiguity.  Finally, the worst-case distribution found in \Cref{thm:moment} plays a crucial part in the proof of  \Cref{thm:partial:bundling}, which makes the former indispensable for obtaining the current generalization.

\section{Concluding Remarks} \label{sec:conclusion}

The current paper has characterized   robustly optimal mechanisms for selling multiple goods for a monopolist faced with ambiguity on the buyer's private valuations of the goods. The nature of the robustly optimal mechanism depends on the type of ambiguity facing the seller.  We have identified moment conditions as well as general distributional conditions leading to the robust optimality of a  $\K$-bundled sales mechanism, which includes the commonly used sales mechanisms of separate sales and pure bundling as two special cases.  The distributional condition that we identify, namely, $\K$-Knightian ambiguity,   is  the most general kind known to date that rationalizes these sales mechanisms. More importantly, the concept captures the clear economic insights that give rise to separation and bundling of items in a (robustly) optimal sale. As argued in detail,  ambiguity about the correlation of values across items/bundles leads to separation of items/bundles, whereas  ambiguity about across-items value dispersion leads to the bundling of items in the sale.  In particular, the latter ambiguity features the threat of negatively-correlated item values as a reason for favoring a bundled sales, thus connecting with the classic insight provided by \cite{adams1976commodity}.

Carrying the theme of \cite{carroll2017robustness} to its fruition, the current paper thus provides a general robustness perspective on the rationale for alternative canonical sales mechanisms.  As such, it offers a complementary as well as an alternative perspective on the subject matter which has so far been approached almost exclusively from a Bayesian mechanism design perspective.

There are at least two avenues along which one could further extend the current paper.  First, our model, like all other papers on the subject matter, assumes a single buyer, and naturally, one might consider introducing multiple buyers into the model.  Two concurrent papers have made progress under such generality; however, there is still no general answer.\footnote{A few papers identify robustly optimal mechanisms in  {\it single-item} auctions.  \cite{brooks2021maxmin} finds a robustly optimal auction mechanism, when  robustness is required with respect to  value distributions with known means and common domain, buyers' high-order beliefs, and to equilibrium selection.
Considering a similar mean constraint but restricting attention to the private-value setting, \cite{che2022robustly} identifies a robustly optimal auction mechanism within a class of ``competitive'' mechanisms which encompass standard auctions. Similarly, \cite{bbm2019revenueguarantee} identifies an informationally-robust optimal auction mechanism in the class of symmetric and  standard auctions.} \cite{zhong2021auction} provide a limiting result that as the number of ex-ante identical buyers grows large, it is {\it asymptotically} robustly optimal to auction off via a second-price format the robustly optimal bundle identified in the current paper. \cite{brooks2021structure} provide a duality result that enables the calculation of the \emph{informationally} robust revenue guarantee via linear programming and show that it is without loss to consider an auction format with one-dimensional message space.

Second, while the current paper offers  a robustness-based rationale for  separate sales and pure bundling as well as more general $\K$-bundling, we do not offer a  rationale for so-called ``mixed-bundling,'' i.e., a menu of options for buying goods both separately and a bundle. Although the nature of  ambiguity that would justify such a mechanism remains unknown, we hope our current paper will offer  useful insights for   future inquiry into this topic.

\renewcommand{\baselinestretch}{1.15}
\bibliographystyle{apalike}
\bibliography{reference}
%\nocite{*}
\renewcommand{\baselinestretch}{1.25}
\addtolength{\jot}{-0.3em}
\setlength{\abovedisplayskip}{4pt}
\setlength{\belowdisplayskip}{3pt}

\appendix
\counterwithin{proposition}{section} 
\counterwithin{lemma}{section} 
\counterwithin{claim}{section} 
\counterwithin{remark}{section}
\counterwithin{theorem}{section} 

 \section{Appendix: Proofs}

\subsection{Proof of \Cref{thm:moment}} \label{Proof:moment}

\begin{proof}
We prove \Cref{thm:moment} by explicitly constructing a $\K-$bundled sales mechanism $M^*$ and a valuation distribution $F^*$ and verify that they constitute a saddle point. 
 
 \textbf{Construction of $F^*$.}    We first construct $F^*$, nature's choice of distribution. This involves two steps. We first fix an arbitrary pair $(\bm{m},\bm{s})\in \S$ and construct a distribution $F^{(\bm{m},\bm{s})}$ with means and dispersion characterized by $(\bm{m},\bm{s})$. We later describe how $({\bm{m}},\bm{s})$ is chosen. 
 To begin, let $m_K:=\sum_{j\in K}m_j$ for each $K\in \K$, and let $K(i)=\{K\in\K: i\in K\}$ denote the bundle containing item $i$.

Let $X$ be a random variable distributed from $[1, \infty)$ according to a cdf $H$:
 \begin{align*}
	H(x):=\text{Prob}[X\le x]=1-\frac{1}{x}.
\end{align*}
 Then, the value of item $i\in K$, for $K\in \K$, is given by: 
\begin{align*}  
	V_{i}(X):=\min\left\{ \alpha_{K} X,  \beta_{K}  \right\}\cdot\frac{m_i}{m_{K}}, 
\end{align*}
where, for each $K\in\K$, the parameters  $0<\alpha_K< m_K<\beta_K$ satisfy:
		\begin{align}
			\int_1^{\frac{\beta_K}{\alpha_K}}\frac{\alpha_K}{x}\d x &+\alpha_K=m_K; \label{eq:mi':1}\\
			\int_1^{\frac{\beta_K}{\alpha_K}}\frac{\phi_K(\alpha_K x)}{x^2} \d x &+\frac{\phi_K(\beta_K)\alpha_K}{\beta_K}=s_K. \label{eq:ki:1}
		\end{align}
		In short, the total value of each product group $K\in \K$ rises co-monotonically and linearly with the common random variable $X$ at rate $\alpha_{K}$; the value of each item $i$ is then determined in proportion to its mean $m_i$ relative to the total mean of the group value.  The parameters  $(\alpha_K,\beta_K)_{K\in \K}$ are in turn determined to satisfy the moment conditions with respect to the sum of means $m_K$  and dispersion $s_K$. \Cref{lem:alphabeta} guarantees that such a pair exists for any $(m_K,s_K)\gg (0,\phi_K(m_K))$.

By continuity of $ (\alpha_K)_{K\in\K}$ and compactness of $\S$, there exists $$(\bm{m},\bm{s})=\arg\min_{(\tilde{\bm{m}}, \tilde{\bm{s}})\in \S} \sum_{K\in\K}\alpha_K(\tilde m_K,\tilde s_K).$$  Setting $F^*:=F^{(\bm{m},\bm{s})}$  completes the construction of nature's choice of distribution.\par

\textbf{Construction of $M^*$.} Next, we define the candidate optimal mechanism $M^*$. In a nutshell, the seller sells each bundle $K$ separately at a random price distributed according to $G_K$. The corresponding direct mechanism is:
\begin{align*}
\begin{dcases}
	q^*_i(\bm{v})=G_{K(i)}\left(\tsum_{j\in K(i)} v_j\right),\\
	t^*(\bm{v})=\sum_{K\in \K}\int_{p\le \sum_{j\in K} v_j}pG_K(\d p).
\end{dcases}
\end{align*}
The cdf $G_K$ is defined via the density function:
\begin{align*}
	g_K(v):=\lambda_K\cdot \frac{\phi'_K(\beta_K)-\phi'_K\left( v \right)}{v}
\end{align*}
on $[\alpha_K, \beta_K]$ and zero elsewhere, where 
$\lambda_K:= 1/[\int_{\alpha_K}^{\beta_K}\frac{\phi'_K(\beta_K)-\phi'_K(x)}{x}\d x]  $ normalizes the density so that it integrates to one. As is illustrated in \Cref{eg:1}, $g_K$ is constructed so that the revenue from selling bundling $K$ is an affine transformation of the moment function $\phi_K$.

\paragraph{Verification of saddle point}

  We first compute the value $R(M^*,F^*)$. For any $\bm{v}$ in the support of $F^*$,
	\begin{align*}
		t^*(\bm{v})=& \sum_{K\in \K}\int^{\sum_{j\in K}v_j}_{\alpha_K} pg_K(\d p)\\
		=&\sum_{K\in\K} \lambda_K \left \{\phi'_K(\beta_K)\left( \mbox{$\sum_{j\in K}$} v_j-\alpha_K \right)-\phi_K\left( \mbox{$\sum_{j\in K}$} v_j \right)+\phi_K(\alpha_K)\right\}.
		\end{align*}
		Hence,
		{\medmuskip=0mu
\thinmuskip=0mu
\thickmuskip=0mu
		\begin{align} R(M^*,F^*)=&\int t^*(\bm{v})F^*(\d \bm{v})\cr
	=&\sum_{K\in \K} \lambda_K \left\{\phi'_K(\beta_K)\left( m_K-\alpha_K \right)+\phi_K(\alpha_K)-\int\left( \phi_K\left( \mbox{$\sum_{j\in K}$}v_j\right)\right)F^*(\d \bm{v})\right\}\cr
	=&\sum_{K\in \K} \lambda_K\left\{\phi'_K(\beta_K)\left( m_K-\alpha_K \right)+\phi_K(\alpha_K)-s_K\right\}\cr
	=&\sum_{K\in \K} \frac{\phi'_K(\beta_K)\alpha_K\log(\beta_K/\alpha_K)-\alpha_K\int_{\alpha_K}^{\beta_K}\frac{\phi'_K(x)}{x}\d x}{\int_{\alpha_K}^{\beta_K}\frac{\phi'_K(\beta_K)-\phi'_K(x)}{x}\d x}=\sum_{K\in \K} \alpha_K.\label{eq:saddle}
	\end{align}}
	The first three equalities are straightforward. The fourth equality follows from \Cref{eq:mi':1,eq:ki:1} and from recalling that $\lambda_K= 1/[\int_{\alpha_K}^{\beta_K}\frac{\phi'_K(\beta_K)-\phi'_K(x)}{x}\d x]$.

	Next, we show that $M^*\in\arg\max_{M\in \M} R(M,F^*)$.  Fix any $ M=(q,t)\in\M$. Since the support of $F^*$ is a parametric curve $\bm{V}(x)$, the mechanism $M$ can be represented equivalently via $(\psi(x),\tau(x)):=(q(\bm{V}(x)),t(\bm{V}(x)))$.  Since $M$ satisfies $(IC)$, it must satisfy the envelope condition:
	\begin{align*}
	\tau(x)=&\psi(x)\cdot \bm{V}(x)-\int_1^x \psi(z)\cdot \bm{V}'(z)\d z.
	\end{align*}
	Hence,
		\begin{align}
		 R(M,F^*) =&   \int \tau(x) H(\d x)\cr
		 \le&\sup_{\psi(\cdot)} \int \psi(x)\cdot\left( \bm{V}(x)-\bm{V}'(x)\frac{1-H(x)}{h(x)} \right)H(\d x)\cr
	=&\sup_{\psi}\sum_{i}\int_1^{\frac{\beta_{K(i)}}{\alpha_{K(i)}}}\psi_i(x)\cdot 0 H(\d x)+\int_{\frac{\beta_{K(i)}}{\alpha_{K(i)}}}^{\infty} \psi_i(x)\cdot \gamma_i\cdot \beta_{K(i)} H(\d x)\cr
	\le&\sum_{i}\gamma_i\cdot\beta_{K(i)}\cdot\frac{\alpha_{K(i)}}{\beta_{K(i)}}=\sum_{K\in\K} \alpha_K
	=R(M^*,F^*), \label{eq:saddle1}
	\end{align}
where $\gamma_i:= \frac{m_i}{\sum_{j\in N} m_j}$. The second inequality follows from $\psi_i\le1$. The third equality follows from $\sum_{i\in N}\gamma_i=1$. The last equality follows from \cref{eq:saddle}.
	
 Finally,  we show that $F^*\in\arg\min_{F\in\F} R(M^*,F)$.   To this end, observe 
	\begin{align*}
		t^*(\bm{v})\ge&\sum_{K\in\K} \lambda_K \left\{\phi'_K(\beta_K)\left( \mbox{$\sum_{j\in K}$} v_j-\alpha_K \right)-\phi_K\left( \mbox{$\sum_{j\in K}$}v_j\right)+\phi_K(\alpha_K)\right\}.
		\end{align*}
	To see why this inequality holds, observe first that $t^*(\bm{v})=RHS$ when $\sum_{j\in K}v_j\in[\alpha_K,\beta_K]$ (recall the very first displayed equation in the proof). Outside that region, $t^*(\bm{v})$ is constant in $\bm{v}$, while the RHS is strictly decreasing in $\sum_{j\in K}v_j$ when   $\sum_{j\in K}v_j>\beta_K$ and strictly increasing in $\sum_{j\in K}v_j$ when $\sum_{j\in K}v_j<\alpha_K$. It then follows that, for any $F\in \F$,
	{
		\thinmuskip=0mu
\medmuskip=0mu
\thickmuskip=0mu
\small
	\begin{align*}
		R(M^*,F)=& \int t^*(\bm{v})F(\d \bm{v})\\
		\ge&\int\sum_{K\in \K} \lambda_K \left\{\phi'_K(\beta_K)\left( \mbox{$\sum_{j\in K}$} v_j-\alpha_K \right)-\phi_K\left(\mbox{$\sum_{j\in K}$} v_j \right)+\phi_K(\alpha_K)\right\} F(\d \bm{v})\cr
		=&\sum_{K\in\K} \lambda_K \left\{\phi'_K(\beta_K)\left( \mbox{$\sum_{j\in K}$} \E_F[v_j]-\alpha_K \right)+\phi_K(\alpha_K)\right\}-\sum_{K\in\K}\lambda_K\int \phi_K\left( \mbox{$\sum_{j\in K}$}v_j\right) F(\d \bm{v})\cr
		=& \underbrace{\sum_{K\in \K}  \lambda_K \left\{\phi'_K(\beta_K)\left(m_K-\alpha_K \right)+\phi_K(\alpha_K)-s_K\right\}}_{A}\cr
		 &-\underbrace{\sum_{K\in\K} \lambda_K \left\{\phi'_K(\beta_K)(m_K-\mbox{$\sum_{j\in K}$}\E_F[v_j])+\int \left( \phi_K\left( \mbox{$\sum_{j\in K}$}v_j \right)-s_K \right)F(\d \bm{v})\right\}}_{B}.
	\end{align*}
}

Note that \cref{eq:mi':1,eq:ki:1}, together with  $\lambda_K= 1/[\int_{\alpha_K}^{\beta_K}\frac{\phi'_K(\beta_K)-\phi'_K(x)}{x}\d x]$, imply that $A=\sum_{K\in\K} \alpha_K=R(M^*,F^*)$.

The above inequalities imply that $R(M^*,F)\ge R(M^*,F^*)-B$. If $F$ has the same moments as $F^*$, then $B=0$, so we are done.  Hence, assume $F$ has different moments than $F^*$.  Suppose for the sake of contradiction that $R(M^*,F)<R(M^*,F^*)$.
Since $\F$ is a convex set, if we define $F^{\delta}=F^*+\delta(F-F^*)$, then $F^{\delta}\in \F$ for any $\delta\in[0,1]$.   Since $R(M^*,F)$ is linear in $F$, we must then have $\frac{\d R(M^*,F^{\delta})}{\d \delta}\big|_{\delta=0}<0$. In particular, this implies that $\frac{\d B}{\d \delta}\big|_{\delta=0}>0$.  However, one can show that 
	\begin{align}
 \frac{\d B}{\d \delta}\Big|_{\delta=0}=&-\frac{\d \sum_{K\in\K}\alpha_K}{\d \delta}\Big|_{\delta=0}.  \label{eqn:computation}
\end{align}
(see  \Cref{sec:ift} for the details).  Hence, $\frac{\d B}{\d \delta}\big|_{\delta=0}>0$ means that $F^{\delta}$ entails a smaller value of  $\sum \alpha_K$  relative to $F^*$,  and thus lower revenue, for sufficiently  small $\delta$. However, this contradicts the fact that $\sum\alpha_K$ is minimized at $(\bm{m},\bm{s})$. Therefore, we conclude that $R(M^*,F)\ge R(M^*,F^*)$.
\end{proof}

		\begin{lemma}\label{lem:alphabeta} For any $(m_K,s_K)\gg (0,\phi_K(m_K))$ for each $K\in\K$, there exists a unique pair $(\alpha_K, \beta_K)$ satisfying \Cref{eq:mi':1} and \Cref{eq:ki:1}. The mapping $(m_K,s_K)_K \mapsto (\alpha_K)_K$ is continuous.
\end{lemma}
\begin{proof}  	From \Cref{eq:mi':1}, we can solve for $\beta_K=\alpha_Ke^{\frac{m_K-\alpha_K}{\alpha_K}}$. Substituting this into \Cref{eq:ki:1}, its LHS becomes   a continuous function of $\alpha_K$.  It is strictly decreasing in $\alpha_K$
	for any $\alpha_K<\beta_K$: 
	{\medmuskip=0mu
\thinmuskip=0mu
\thickmuskip=0mu
\small
	\begin{align*}
		\frac{\d\mbox{LHS of \Cref{eq:ki:1}}}{\d\alpha_K}
		=&\left( \frac{\phi_K(\beta_K)}{(\beta_K/\alpha_K)^2}-\frac{\phi_K(\beta_K)}{(\beta_K/\alpha_K)^2} \right)\cdot\frac{\d (\beta_K/\alpha_K)}{\d \alpha_K} +\int_1^{\frac{\beta_K}{\alpha_K}}\frac{\phi_K'(\alpha_Kx)}{x}\d x+\frac{\phi_K'(\beta_K)\alpha_K}{\beta_K}\cdot\frac{\d \beta_K}{\d \alpha_K}\\
		=&\int_1^{\frac{\beta_K}{\alpha_K}}\frac{\phi_K'(\alpha_Kx)}{x}\d x-\phi'_K(\beta_K)\frac{m_K-\alpha_K}{\alpha_K}\\
		=&\phi_K'(\beta_K)\left( \int_1^{\frac{\beta_K}{\alpha_K}}\frac{\phi'_K(\alpha_Kx)}{x\phi'_K(\beta_K)}\d x-\frac{m_K-\alpha_K}{\alpha_K} \right)\\
		<&\phi'_K(\beta_K)\left( \log \beta_K-\log\alpha_K-\frac{m_K-\alpha_K}{\alpha_K} \right)\\
		=&0,
\end{align*}}
 	where the strict inequality follows from the convexity of $\phi$, and the last equality is from substituting $\beta_K=\alpha_Ke^{\frac{m_K-\alpha_K}{\alpha_K}}$.

 	Observe next that the LHS of \Cref{eq:ki:1} is strictly less than its RHS when $\alpha_K=m_K$. It is strictly greater than the RHS when $\alpha_K$ is sufficiently low.  To see this, note
	{\medmuskip=0mu
\thinmuskip=0mu
\thickmuskip=0mu
\small
	\begin{align*}
		\int_1^{\frac{\beta_K}{\alpha_K}} \frac{\phi_K(\alpha_K x)}{x^2} \d x\ge& \int_1^{\frac{\beta_K}{\alpha_K}} \left( \frac{\phi_K(0)}{x^2}+\frac{\phi'_K(0)(\alpha_Kx)}{x^2}+\frac{1}{2}\varepsilon\frac{(\alpha_Kx)^2}{x^2} \right)\d x\\
		=&\phi_K(0)\left( 1-\frac{\alpha_K}{\beta_K} \right)+\phi_K'(0)\alpha_K(\log \beta_K-\log\alpha_K)+\frac{1}{2}\varepsilon\alpha_K(\beta_K-\alpha_K)\\
		\ge&-\left|\phi_K(0)\right|+\phi_K'(0)(m_K-\alpha_K)+\frac{1}{2}\varepsilon \alpha_K^2\left( e^{\frac{m_K-\alpha_K}{\alpha_K}}-1 \right).
\end{align*}}
 	The last line tends to $\infty$ as $\alpha_K\to 0$.  
 	
 	Collecting the observations so far, we conclude that there exists a unique pair $(\alpha_K, \beta_K)$ satisfying \Cref{eq:mi':1} and \Cref{eq:ki:1}. \par

	It is easy to see that both sides of \Cref{eq:mi':1,eq:ki:1} are continuous in $(\alpha_K,\beta_K,m_K,s_K)$. Therefore, $\alpha_K$ and $\beta_K$ each as a correspondence of $(m_K,s_K)$ has a closed graph. Since we have shown that $\alpha_K(m_K,s_K)$ is a function, it is continuous.  \end{proof}

\subsubsection{Proof of \Cref{cor:complementarity}}
 \label{proof:complementarity}
Let $(M^*,F^*)$ be the saddle point pinned down in \Cref{thm:moment}. It is straightforward that $F^*$ remains optimal within $\F$ because $M^*$ remains incentive compatible as the values for all bundles within $\K$ remain unchanged. Then, the revenue from $M^*$ does not change with $u_K$ for any $F$. 

Now, we prove that $M^*$ remains optimal. Note that since buyer's value is no longer additive, the definition of an allocation should be generalized to $\psi:x\to [0,1]^{2^N}$, i.e. probability of allocating each bundle, subject to $\forall i, \sum_{K\ni i} \psi_{K}(x)\le 1$. Then, given $M=(\psi(x),\tau(x))$ the envelope condition implies:
\begin{align*}
    \tau(x)=&\sum_{K\in 2^N}u_K\cdot \psi_K(x)\cdot\sum_{i\in K}V_i(x)-\int_x^x\sum_{K\in 2^N}u_K\cdot \psi_K(z)\cdot\sum_{i\in K}V_i(z)\d z\\
    \implies R(M,F^*)\le& \sup_{\psi(\cdot)}\int \sum_{K\in 2^N} u_K\cdot\psi_K(x)\cdot \sum_{i\in K} \left(V_i(x)-V_i'(x)\frac{1-H(x)}{h(x)}\right)H(\d x)\\
    =&\sup_{\psi(\cdot)}\sum_i\int_{\frac{\beta_{K(i)}}{\alpha_{K(i)}}}^{\infty}\sum_{K\ni i}\psi_K(x)u_K\gamma_i\beta_{K(i)}H(\d x)\\
    \le&\sum_i \gamma_i \alpha_{K(i)}=R(M^*,F^*).
\end{align*}
The second inequality is from $\sum_{K\ni i} \psi_{K}(x)\le 1$ and $u_K\le 1$.

 \subsection{Proof of \Cref{thm:necessity}} \label{proof:sub}

\textbf{(Part 1)} Fix any nonempty $J, J'\subset K$ for some $K\in\K$ such that  $J\cap J'=\emptyset$. We show that it is never robustly optimal to separate $J$ and $J'$.  To this end, it suffices to find $\widetilde{F}\in\F$ such that $\sup\limits_{M\in \M_{\K'}}R(M,\widetilde{F})<R(M^*,F^*)$, for any partition $\K'$ such that $\left\{ J,J' \right\}\subset \K'$.  

We construct $\widetilde{F}$ as follows. Define CDF $H$ and $\left( \alpha_K,\beta_K \right)$ as in \Cref{thm:moment}.  Recall  $X\sim H$.  Define a new binomial random variable $Y$ whose value is zero  with probability $\frac{m_{J'}}{m_{J\cup J'}}$ and one with probability $\frac{m_{J}}{m_{J\cup J'}}$. Let $0<\varepsilon<\min_K \frac{\alpha_K}{\beta_K}$.  The distribution $\widetilde F$ is then defined by the item values:
\begin{align*}
	V_i(X,Y)=
	\begin{dcases}
		\min\left\{ \alpha_{K(i)} X,\beta_{K(i)} \right\}\cdot \frac{m_i}{m_{K(i)}} &\text{if } i\not\in J\cup J'\\
		\min\left\{ \alpha_{K(i)} X,\beta_{K(i)} \right\}\cdot \frac{m_i}{m_{K(i)}} &\text{if }i\in J\cup J' \text{ and }    X\le 1/{\varepsilon}\\
		\beta_{K(i)}\cdot \frac{m_i}{m_{K(i)}}\cdot\frac{m_{J\cup J'}}{m_{J}}\cdot Y  &\text{if }i\in J\text{ and } X>1/{\varepsilon}\\
		\beta_{K(i)}\cdot \frac{m_i}{m_{K(i)}}\cdot\frac{m_{J\cup J'}}{m_{J'}}\cdot (1-Y) &\text{if }i\in J' \text{ and }  X> 1/{\varepsilon},
	\end{dcases}
\end{align*}
where recall $K(i):=K\in \K$ such that $i\in K$.  In words, the values of items $i\not\in J\cup J'$ are distributed same as $F^*$.  The values of $j\in J\cup J'$ are also distributed same as $F^*$ {\it conditional on $X<1/{\varepsilon}$}, an event that occurs with probability $H(1/{\varepsilon})=1-{\varepsilon}$.  In the complementary event, the value of good $j\in J$ becomes either $\beta_{K(i)}\cdot \frac{m_i}{m_{K(i)}}\cdot\frac{m_{J\cup J'}}{m_{J'}}$ or zero. Effectively, mass ${\varepsilon}$ of value $\beta_{K(i)}\cdot \frac{m_i}{m_{K(i)}}$ is split into a higher value and zero so that the expected value remains the same. Note that $j\in J'$ is split in the same fashion but in a way perfectly negatively correlated as the value of item $i\in J$.  The negative correlation means that the dispersion of values of group $K$ remains the same; recall both $J$ and $J'$ are in $K$.  Hence, all moment conditions of \cref{eq:ambiguity-partial} continue to be satisfied (since $F^*$ satisfies them). Therefore, $\widetilde{F}\in \F$.

Since the mechanism $M$ separates $J$ and $J'$, one can write: $$M=\left( q^{-J\cup J'} (v_{i,i\not\in J\cup J'}), t^{-J\cup J'}(v_{i,i\not\in J\cup J'}), q^J(\sum_{i\in J}v_i),t^J(\sum_{i\in J}v_i),q^{J'}(\sum_{i\in J'}v_i),t^{J'}(\sum_{i\in J'}v_i) \right).$$

In words, groups $J$ and $J'$ are each bundled separately, and the mechanism can be arbitrarily defined on all other items. The IC and IR conditions imply that $(q^{-J\cup J'}, t^{-J\cup J'})$, $(q^J,t^J)$ and $(q^{J'},t^{J'})$ should each satisfy IC and IR.    For $(q^{-J\cup J'}, t^{-J\cup J'})$, since the random vector $\bm{V}$ is effectively uni-dimensional for $i\not\in J\cup J'$, the envelope condition implies:
\begin{align*}
	\int t^{-J\cup J'}(\bm{v})F(\d \bm{v})\le& \sup_{\psi}\sum_{i\not\in J\cup J'}\int_0^{\frac{\beta_{K(i)}}{\alpha_{K(i)}}}\psi_i(x)\cdot 0 H(\d x)+\int_{\frac{\beta_{K(i)}}{\alpha_{K(i)}}}^{\infty}\psi_i(x)\phi_i \beta_{K(i)}H(\d x)\\
	\le&\sum_{i\not\in J\cup J'} \alpha_{K(i)}\frac{m_i}{m_{K(i)}}.
\end{align*} 
The sub-mechanisms
$(q^J,t^J)$ and $(q^{J'},t^{J'})$ sells bundles $J$ and $J'$ separately. For $\varepsilon>0$ sufficiently small, it is suboptimal to charge price $0$ for each bundle. However, charging any other price that leads to positive probability of sales generates revenue of 
\begin{align*}
	\alpha_{K}\frac{m_{J}}{m_K}\left( 1-\frac{m_{J'}}{m_{J\cup J'}}\varepsilon \right)
\end{align*}
from the sale of bundle $J$ (where $J\subset K$). Likewise, the sale of bundle $J'$ results in the revenue at most of 
\begin{align*}
	\alpha_{K}\frac{m_{J'}}{m_K}\left( 1-\frac{m_J}{m_{J\cup J'}}\varepsilon \right).
\end{align*}
Therefore,
\begin{align*}
	R(M,\widetilde{F})<& \sum_{i\not\in J\cup J'} \alpha_{K(i)}\frac{m_i}{m_{K(i)}} + \alpha_K \frac{m_J}{m_K}+\alpha_K\frac{m_{J'}}{m_K}\\
	=&R(M^*,F^*).
\end{align*}

 \textbf{(Part 2)}  For each $K\in \K$, let $\ell_K:=\frac{\beta_K}{\alpha_K}$. Suppose there are $K,K'\in\K$ such that  $\ell_{K}\neq \ell_{K'}$.  We will show that it is never robustly optimal for the seller to bundle goods in $K\cup K'$.  It suffices to find $\widetilde{F}\in\F$ such that $\sup_{M\in \M_{\K'}} R(M,\widetilde{F})<R(M^*,F^*)$, for all $\K'$ such that  $K\cup K'\in \K'$.  This will imply that the revenue guarantee will be strictly lower for any selling mechanism that bundles the groups $K$ and $K'$.  

 We construct $\widetilde{F}$ as follows. Without loss, assume $\ell_{K}>\ell_{K'}$ and let $\ell\in(\ell_{K'},\ell_{K})$. Let $H_{\varepsilon}$ be given by:
	\begin{align*}
		H_{\varepsilon}(x):=
		\begin{dcases}
			H(x)&x\le \ell-\varepsilon\\
			H(\ell-\varepsilon)&x\in(\ell-\varepsilon,\ell),\\
			H(x-\varepsilon)&x\ge \ell.
		\end{dcases}
	\end{align*}
 First, we define two parameters $\alpha^{\varepsilon}$ and $\ell^{\varepsilon}$ based on $\ell$ and $\varepsilon$:
 
\begin{align}
	&\int_{1}^{ \ell^{\varepsilon}}(\alpha^{\varepsilon}x)H_{\varepsilon}(\d x)+(\alpha^{\varepsilon} \ell^{\varepsilon})(1-H_{\varepsilon}( \ell^{\varepsilon}))=m_{K};\label{eqn:prop:2:1}\\
	&\int_{1}^{ \ell^{\varepsilon}}\phi_{K}(\alpha^{\varepsilon}x)H_{\varepsilon}(\d x)+\phi_K(\alpha^{\varepsilon} \ell^{\varepsilon})(1-H_{\varepsilon}( \ell^{\varepsilon}))=s_{K}.\label{eqn:prop:2:2}
\end{align}
Denote the LHS of \Cref{eqn:prop:2:1,eqn:prop:2:2} by $f_K^1(\alpha^{\varepsilon},\ell^{\varepsilon},\varepsilon)$ and $f_K^2(\alpha^{\varepsilon},\ell^{\varepsilon},\varepsilon)$, respectively. When $\varepsilon$ is sufficiently small, $\ell^{\varepsilon}$ is close to $ \ell_K$ and is thus  strictly larger than $\ell$.  

Therefore, we compute the Jacobian matrix of the functions   $\bm{f}_K:=(f_K^1, f_K^2)$ with respect to $(\alpha^{\varepsilon}, \ell^{\varepsilon})$:
\begin{align*}
	&\bm{J}_{\alpha^{\varepsilon}, \ell^{\varepsilon}}\bm{f}_K(\alpha^{\varepsilon}, \ell^{\varepsilon},\varepsilon)\\
	=&
	\left[\begin{matrix}
		\int_{1}^{\ell^{\varepsilon}}xH_{\varepsilon}(\d x)+\ell^{\varepsilon}(1-H_{\varepsilon}(\ell^{\varepsilon}))&\alpha^{\varepsilon}(1-H_{\varepsilon}(\ell^{\varepsilon}))\\
		\int_{1}^{\ell^{\varepsilon}}\phi_K'(\alpha^{\varepsilon}x)xH_{\varepsilon}(\d x)+\phi_K'(\alpha^{\varepsilon}\ell^{\varepsilon})\ell^{\varepsilon}(1-H_{\varepsilon}(\ell^{\varepsilon}))&\phi_K'(\alpha^{\varepsilon}\ell^{\varepsilon})\alpha^{\varepsilon}(1-H_{\varepsilon}(\ell^{\varepsilon}))
\end{matrix}\right] \\
&\implies\bm{J}_{\alpha^{\varepsilon}, \ell^{\varepsilon}}\bm{f}_K(\alpha^{\varepsilon}, \ell^{\varepsilon},\varepsilon)\big|_{\varepsilon=0}\\
	=&
	\begin{bmatrix}
		\int_{1}^{\ell_K}xH(\d x)+\ell_K(1-H(\ell_K))&\alpha_K(1-H(\ell_K))\\
		\int_{1}^{\ell_K}\phi_K'(\alpha_Kx)xH(\d x)+\phi_K'(\alpha_K\ell_K)\ell_K(1-H(\ell_K))&\phi_K'(\alpha_K\ell_K)\alpha_K(1-H(\ell_K))
	\end{bmatrix};\\
	\end{align*}
	Meanwhile, the partial derivative of $\bm{f}_K$ with respect to $\varepsilon$ is:
	\begin{align*}
	&\bm{J}_{\varepsilon}\bm{f}_K(\alpha^{\varepsilon},\ell^{\varepsilon},\varepsilon)\\
	=&\left[\begin{matrix} \alpha^{\varepsilon}\varepsilon h(\ell-\varepsilon)+\alpha^{\varepsilon}\int_{1}^{\ell^{\varepsilon}}h(x+\varepsilon)\d x\\
	(\phi_K(\alpha^{\varepsilon}\ell)-\phi_K(\alpha^{\varepsilon}(\ell-\varepsilon)))h(\ell-\varepsilon)+\alpha^{\varepsilon}\int_{1}^{\ell^{\varepsilon}}\phi_K'(\alpha^{\varepsilon}x)h(x+\varepsilon)\d x\end{matrix}\right]\\
	\implies
	&\bm{J}_{\varepsilon}\bm{f}_K(\alpha^{\varepsilon},\ell^{\varepsilon},\varepsilon)\big|_{\varepsilon=0}\\
	=&
	\begin{bmatrix}
		\alpha_K(H(\ell_K)-H(1))\\
		\alpha_K\int_{1}^{\ell_K} \phi'_K(\alpha_K x)H(\d x)
	\end{bmatrix}.
\end{align*}

By the inverse function theorem,
\begin{align*}
	\frac{\d \alpha^{\varepsilon}}{\d \varepsilon}\Big|_{\varepsilon=0}=-\bm{J}_{\alpha^{\varepsilon},\ell^{\varepsilon}}\bm{f}_K^{-1}\cdot \bm{J}_{\varepsilon}\bm{f}_K\Big|_{\varepsilon=0}=-\frac{\int_{1}^{\ell_K}(\phi'(\alpha_Kx)-\phi'(\alpha_K\ell_K))H(\d x)}{\int_{1}^{\ell_K}(\phi'_K(\alpha_Kx)-\phi'_K(\alpha_K\ell_K))xH(\d x)}<0,
\end{align*}

Therefore, for $\varepsilon$ sufficiently close to $0$, $\alpha^{\varepsilon}<\alpha_K$. Let $X$ be the random variable distributed according to CDF $H$. Define $\widetilde{\bm{V}}:=(\widetilde V_1, ..., \widetilde V_n)$, where
\begin{align*}
\widetilde	V_i=
	\begin{dcases}
	\frac{m_i}{m_J}\min\left\{ \alpha_{J} X ,\alpha_J\ell_J \right\} &\text{if }i\in J\neq K\\[10pt]
	\frac{m_i}{m_K}\min\left\{ \alpha^{\varepsilon} (X+\varepsilon \mathbf{1}_{\{X>\ell-\varepsilon\}}) ,\alpha^{\varepsilon},\ell^{\varepsilon} \right\} &\text{if }i\in J= K.
\end{dcases}
\end{align*}
Note that by definition, $X+\varepsilon \mathbf{1}_{\{X>\ell-\varepsilon\}}$ is distributed according to $H_{\varepsilon}$. Let $\widetilde{F}$ be the distribution of $\widetilde{\bm{V}}$. 

	Now, consider any mechanism $M$ that bundles $K\cup K'$. $M$ can be written as $(\psi(x),\tau(x),\widetilde{\psi}(x),\widetilde{\tau}(x))$,  where $(\psi(x),\tau(x))$ is the allocation and the payment for items $i\not\in K\cup K'$ and $(\widetilde{\psi}(x),\widetilde{\tau}(x))$  is the allocation and the payment for items $i\in K\cup K'$, all as functions of $x$, the report of $X$. Note this formalism does not imply that the sales of items $i\not\in K\cup K'$ is separated from those of  $i\not\in K\cup K'$.   The envelope theorem implies
	\begin{align*}
		R(M,\widetilde{F})\le& \sup_{\psi(\cdot),\widetilde{\psi}(\cdot)}\Bigg( \sum_{i\not\in K\cup K'}\int \psi_i(x)(\widetilde{V}_i(x)-\widetilde{V}'_i(x)\frac{1-H(x)}{h(x)})H(\d x) \cr
												 &+\int \widetilde{\psi}(x)\sum_{i\in K\cup K'}(\widetilde{V}_i(x)-\widetilde{V}'_i(x)\frac{1-H(x)}{h(x)}) H(\d x)\Bigg)\cr
			\le & \sum_{J\in \K, J\neq K,K'} \alpha_J+ \sup_x\left( \sum_{i\in K\cup K'} \widetilde{V}_i(x)(1-H(x)) \right)\cr
			=& \sum_{J\in \K,J\neq K,K'} \alpha_J +\sup_x\underbrace{\left(\alpha_{K'}\min\left\{ x,\ell_{K'} \right\}(1-H(x))+ \alpha^{\varepsilon}\min\left\{ x,\ell^{\varepsilon}  \right\}(1-H_{\varepsilon}(x)) \right)}_{A(x)}.
	\end{align*}
	The second inequality is from the definition of $\alpha_J$'s and the fact that $\widetilde{\psi}$ equivalently characterizes a mechanism that bundles $K\cup K'$. For $x\le \ell-\varepsilon$, $H_{\varepsilon}(x)=H(x)$, but $\alpha^{\varepsilon}<\alpha_K,$ so $$A(x)= \alpha_{K'}+\alpha^{\varepsilon}<\alpha_{K'}+\alpha_{K}.$$  For $x\in(\ell-\varepsilon,\ell^{\varepsilon}]$, $x>\ell_{K'}$ when $\varepsilon$ is chosen sufficiently small. Therefore,
	\begin{align*}
		A(x)=&\alpha_{K'}\frac{\ell_{K'}}{x}+\alpha^{\varepsilon}\frac{x}{x-\varepsilon}\\
		<&\alpha_{K'}\frac{\ell_K}{\ell-\varepsilon}+\alpha^{\varepsilon}\frac{\ell-\varepsilon}{\ell-2\varepsilon}.
	\end{align*}
	As  $\varepsilon\to 0$, the latter expression tends to $ \alpha_K+\alpha_{K'}\frac{\ell_K}{\ell}<\alpha_K+\alpha_{K'}$. Combining both case proves that when $\varepsilon$ is sufficiently small, $R(M,\widetilde{F})<\sum \alpha_J=R(M^*,F^*)$. Therefore,
\begin{align*}
	\sup_{M\in\M_{\K}}R(M,\tilde{F})<R(M^*,F^*),
\end{align*} 	
as was to be shown.

\subsection{Proof of \Cref{thm:info}} \label{sec:proof-dr}

\begin{proof}  Fix any prior $G\in\Delta(\mathbb{R}^n_+)$ that is stochastically monotonic with sharing parameters $\bm{\xi}=(\xi_1, ..., \xi_n)\in \mathbb{R}^n_{+}$ with $\sum_i \xi_1=1$. We prove the theorem by construction and verification.  We would like to invoke \Cref{thm:moment} for the case of $\K=\{N\}$. We begin by identifying a relaxed ambiguity set. For all $\bm{v}\in \R^n_+$, let $\bm{v}=\left( v_1, ..., v_n \right)$.  For  any $z>0$, consider an ambiguity set
	\begin{align*}
		\F_{z}=\left\{ F\in \Delta (\R^n_+)|\E_F[\bm{v}]=\E_{G}[\bm{v}]\text{ and }\E_F[\phi_{z}(\bm{v})]\le\E_{G}[\phi_{z}(\bm{v})] \right\},
	\end{align*}
	indexed by  ${z}$, 
	where  $\phi_{z}( \bm{v}):=\max\left\{ z-\sum_{i}v_i,0 \right\}$.   More importantly, the associated condition captures the second-order stochastic dominance order with respect to the total sum of item values:  namely, 
the random variable $\sum_{i} v_i$ distributed according to $F$ second-order stochastically dominates  (SOSD) the corresponding random variable  distributed according to $G$ if $\E_F[\phi_{z}(\bm{v})]\le\E_{G}[\phi_{z}(\bm{v})]$ for all $z> 0$.\footnote{This can be seen upon integration by parts:
\begin{align*}
	\int_{\sum_{i} v_i \le z} F(\bm{v})\d \bm{v}=\int_{\sum_{i} v_i\le z} (z-\sum v_i)F(\d \bm {v})=\E_F[\phi_{z}(\bm{v})] \le\E_{G}[\phi_{z}(\bm{v})]= \int_{\sum_{i} v_i\le z} G(\bm{v})\d \bm{v},
\end{align*}
where the  inequality is from $F\in \F_{z}$.}  

Observe that $\phi_{z}$ is not twice-differentiable, as is required by \Cref{thm:moment}. Nevertheless, it is piecewise linear with only one kink; hence, the saddle point can be explicitly constructed following the same procedure as in \Cref{thm:moment}, denoted by $(F_{z},M_{z})$, where $M_{z}$ is a pure bundling mechanism. Let $\alpha_z$ be the parameter defining $F_z$, indexed by $z$. Then, let $z^*$ minimizes $\alpha_z$ while maintaining
\begin{align*}
    \E_{F_{\alpha_z}}[\phi_{z'}(\bm{v})]\le \E_G[\phi_{z'}(\bm{v})]
\end{align*}
holds for all $z'\in\R_+$. Define $F^*:=F_{z^*},\ M^*:=M_{z^*}$.\footnote{The formal construction of the saddle point and the identification of $z^*$ is relegated to \Cref{ssec:saddle}.} 

We are now ready to prove our statement: $(M^*,F^*)$ is a saddle point under ambiguity set $\F_G$, defined in \cref{eq:F-info}.  To this end, note first that $\F_G\subset\bigcap_{ {z}>0}\F_{ {z}}\subset \F_{z^*}$ (since $(\phi_z)_{z>0}$ comprises only a subset of all possible convex functions). In light of  \Cref{cor:beyond}, it suffices to show $F^*\in \F_G$, since that will imply that $M^*$ is maxmin optimal given ambiguity set $\F_G$.

To show $F^*\in \F_G$, fix any arbitrary convex function $\phi: \mathbb{R}_+^n\to \mathbb{R}$.  Let $\bm{m}=(m_1,...,m_n)$, where $m_i:=\E_G[v_i]=\E_G[\xi_i(\tsum_j v_j)]$.
\begin{align*}
	\E_{G}[\phi(\bm{v})]=& \, \E_{G}\Big[ \E [ \phi\left(\bm{v} \right) | \tsum_i v_i]   \Big]\\
	\ge& \, \E_{G}\left[  \phi\big( \E[ {\bm{v}} |\tsum_i v_i] \big)   \right]\\
	=& \,\E_{G}\big[ \phi\left( \bm{\xi}\cdot \tsum v_i   \right) \big]\\
 	\ge& \, \E_{F^*}\big[ \phi\left( \bm{\xi}\cdot  \tsum v_i   \right) \big]\\
	=& \,\E_{x \sim H}\left[ \phi\left( \min\left\{ \alpha_N x,\alpha_K e^{\frac{\sum m_i-\alpha_N}{\alpha_N}}\right\}\cdot \frac{\bm{m}}{\sum m_i}  \right) \right]\\
	=& \, \E_{F^*}\left[ \phi(\bm{v}) \right].
\end{align*}
The first inequality follows from the convexity of $\phi$.  The second equality follows from stochastic comonotonicity of $G$; i.e., $G$ satisfies \Cref{as:comon} for some $\bm{\xi}=(\xi_1, ..., \xi_n)\in \mathbb{R}^n_{+}$ with 
$\sum_i \xi_1=1$.  To understand the second inequality, observe first that 
the composition function $\phi\circ \bm{\xi}$ is convex in $\sum_i v_i$, the total sum  of all item  values. Next, recall that  the total sum of all item  values distributed according to  $F^*$  SOSD the total sum of item values distributed according to $G$---a fact implied by $F^*\in\bigcap_{{z}}\F_{{z}}$.  Hence, the second inequality follows.  The last two equalities follow from the definition of $F^*$ given in \Cref{sec:moment}. Since $\phi$ is an arbitrary convex function, we have shown that $F^*\in \F_G$, and the proof of \Cref{thm:info} is complete. \end{proof}

\subsection{Proof of \Cref{thm:partial:bundling}}\label{sec:proof-saddle}

\begin{proof}  Observe 
	\begin{align*}
	\sup_{M\in\M_{\K}}\inf_{F\in\F\bigcap \widehat{\F}}R(M,F)=&\inf_{F\in\F\bigcap \widehat{\F}}\sup_{M\in\M_{\K}} R(M,F)\\
	=&\inf_{\left( F_K \right)\in \Upsilon_{\K}(\F)}\inf_{F\in\Upsilon^{-1}_{\K}((F_K))\bigcap\widehat{\F}}\sup_{M\in\M_{\K}}R(M,F)\\
	\ge& \inf_{\left( F_K \right)\in \Upsilon_{\K}(\F)}\sup_{M\in\M_{\K}}\inf_{F\in\Upsilon^{-1}_{\K}((F_K))\bigcap\widehat{\F}}R(M,F)\\
	=& \inf_{\left( F_K \right)\in \Upsilon_{\K}(\F)}\sup_{M\in\M}\inf_{F\in\Upsilon^{-1}_{\K}((F_K))\bigcap\widehat{\F}}R(M,F)\\
	\ge&\sup_{M\in\M}\inf_{\left( F_K \right)\in \Upsilon_{\K}(\F)}\inf_{F\in\Upsilon^{-1}_{\K}((F_K))\bigcap\widehat{\F}}R(M,F)\\
	=&\sup_{M\in\M}\inf_{F\in\F\bigcap \widehat{\F}}R(M,F).
	\end{align*}
	The first equality follows from \Cref{lem:minimax} proven in \Cref{sec:technical}, where $\G=\F\bigcap\widehat{\F}$. The two inequalities are min-max inequalities. The third equality follows from \Cref{prop:saddle}. Since $\M_{\K}\subset \M$, the above   inequalities yields the desired statement. 
\end{proof}

\begin{lemma}	
	\label{prop:saddle} Fix any $\K$-marginals 
	$( F_K)_{K\in\K}$ and any $\widehat{\F}$, and let $\F:=\Upsilon_{\K}^{-1}(( F_K)_{K\in\K} )\bigcap \widehat{\F}\neq \emptyset$.  Then,  $\K$-bundled sales is robustly optimal in the sense that 
	\begin{align*}
	\sup_{M\in\M_{\K}}\inf_{F\in\F}R(M,F)
	=\sup_{M\in\M}\inf_{F\in\F}R(M,F).	\end{align*}
\end{lemma}

\begin{proof}  We first construct the worst case $F^*\in \Delta(\R^n_+)$.  To this end, we imagine a hypothetical problem in which the seller sells $k=|\K|$ goods and faces full ambiguity given the knowledge of the marginal distributions $ F_K$ of the values of each item $K\in \K$. (That is, we interpret bundle $K$ as a single item in this hypothetical problem.) This is precisely what \cite{carroll2017robustness} analyzed. To recast his result in the current setup for this hypothetical problem, let $\M^{k}$ be the set of feasible mechanisms in this hypothetical problem with $k$ items, and $R^{k}(M,G)$ denote the revenue the seller collects from a mechanism $M\in \M^{k}$ facing distribution $G\in \Delta(\R^{k}_+)$. (The corresponding notations for our original problem would then have superscript $n$, which we suppress for convenience.) Consider $\M^{k}_{\mathcal H}$, where $\mathcal H$ is the finest partition of $\K$. Then $\M^k_{\mathcal{H}}$ is the set of all ``separate sales'' mechanisms in this hypothetical problem.

	Theorem 1 of \cite{carroll2017robustness} then proves that there exists $G^*\in \Upsilon^{-1}_{\mathcal H}((F_K)_{K\in\K})\subset \Delta(\R^{k}_+)$ and 
	\begin{align} \label{eq:carroll}
	\sup_{M\in \M^k_{\mathcal{H}}}R^{k}(M,G^*)=\sup_{M\in\M^{k}}R^{k}(M,G^*).
	\end{align}
	
	Now we construct $F^*\in\Delta(\R^{n}_+)$ using $G^*\in \Delta(\R^{k}_+)$. Choose any  $F'\in \F=\Upsilon^{-1}_K(( F_K))\bigcap \widehat{\F}$ (which we assumed to be nonempty). For each $i$, let $$\alpha_i:=\frac{\E_{F'}[v_i]}{\sum_{j\in K(i)} \E_{F'}[v_j]}.$$
	 Let $\bm{X}$ be a $k$-dimensional random vector distributed according to $G^*$, and let $\bm{V}=(V_i)$ be a $n$-dimensional random vector defined by
	 $$V_i(\bm{X}):=\alpha_iX_{K(i)},$$ for each $i$.  Let $F^*$ be the distribution of $\bm{V}$.  
	
	We now prove $F^*\in \Upsilon^{-1}_K(( F_K))\bigcap \widehat{\F}$. Since $G^*\in\Upsilon^{-1}_{\mathcal H}((F_K))$,  by construction,  $F^*\in \Upsilon^{-1}_K(( F_K))$.
	Since $F'\in \Upsilon^{-1}_K(( F_K))$,
	$$\P_{F'}\left\{\tsum_{j\in K}v_j\le y\right\}= \P_{G^*}\left\{X_K\le y\right\}= \P_{F^*}\left\{\tsum_{j\in K}v_j\le y\right\},$$ for each $K\in \K$ and $y\in \R_+$.  Hence, $\E_{F'}[\sum_{j\in K}v_j]=\E_{G^*}[X_K]=\E_{F^*}[\sum_{j\in K}v_j]$.  It further follows that $\E_{F'}[v_j]=\E_{F^*}[v_j]$.  Hence, $F^*\in   \widehat{\F}$.  We thus conclude that $F^*\in \F$.

	\par
	We next prove that 	
	\begin{align} \label{eq:MGF}
	\sup_{M\in \M^{k}}R^{k}(M,G^*)\ge \sup_{M\in\M}R(M,F^*).
	\end{align}
	To see this, fix any mechanism $M=(q,t)\in\M$ in our original problem.  We now construct another mechanism $\tilde M=(\tilde{q},\tilde{t})\in \M^k$ for the hypothetical $k$-item problem as follows:     
	\begin{align*}
	\begin{dcases}
	\tilde{q}_K(\bm{x})=\sum_{j\in K}\alpha_jq_j(\bm{V}(\bm{x})),\\
	\tilde{t}(\bm{x})=t(\bm{V}(\bm{x})).
	\end{dcases}
	\end{align*}
	Observe that  $\forall \bm{x},\bm{x}'$,
	\begin{align*}
	\bm{x}\cdot \tilde{q}(\bm{x}')-\tilde{t}(\bm{x}')
	=&\sum_{K\in \K} x_K \sum_{j\in K}\alpha_j q_j(\bm{V}(\bm{x}'))-t(\bm{V}(\bm{x}'))\\
	=&\sum_{i} \frac{V_i(\bm{x})}{\alpha_i} \alpha_i q_i(\bm{V}(\bm{x}'))-t(\bm{V}(\bm{x}'))=\bm{V}(\bm{x})\cdot q(\bm{V}(\bm{x}'))-t(\bm{V}(\bm{x}')).
	\end{align*}
	Therefore, $(IC)$ and $(IR)$ of $M$ on $\mbox{supp}(F^*)$ imply $(IC)$ and $(IR)$ of $\tilde{M}$ on $\mbox{supp}(G^*)$.  Hence, $\tilde M \in \M^{k}$. 	
	Moreover, given $G^*$, $\tilde{M}$ yields  the same expected revenue as $M$ given $F^*$.   Since one can find such $\tilde M$ for each $M\in \M$,  \cref{eq:MGF} follows.

	We next prove 
	\begin{align} \label{eq:SPB} 
	\sup_{M\in\M_{\K}}R(M,F^*)\ge \sup_{M\in\M^k_{\mathcal{H}}}R^{k}(M,G^*).
	\end{align}
	Indeed, for any  $\tilde{M}\in\M^k_{\mathcal{H}}$, we can construct a $\K$-bundled sales mechanism $\widehat M=(\widehat q, \widehat t)\in \M_{\K}$, where 
	\begin{align*}
		\widehat{q}_i(\bm{v}):=\tilde{q}_{K(i)}(\tsum_{j\in K(i)}v_j) \mbox{ and }
		\widehat{t}(\bm{v}):=\sum_{K\in \K}\tilde{t}_{K}(\tsum_{j\in K}v_j).
	\end{align*}
	Whenever $\bm{v}=\bm{V}(\bm{x})$, $\sum_{j\in K}v_j=\sum_{j\in K} \alpha_j x_K=x_K$. So, $\widehat{M}$ given $F^*$ is payoff equivalent to $\tilde{M}$ given $G^*$.  Since $\tilde M\in \M_{\mathcal H}^k$, satisfying $(IR)$ and $(IC)$  on $\mbox{supp}(G^*)$, $\widehat M$ satisfies $(IC)$ and $(IR)$  on $\mbox{supp}(F^*)$.  
	
	Combining \cref{eq:SPB}, \cref{eq:carroll}, and \cref{eq:MGF},  we obtain 
	\begin{align} \label{eq:MMK}
\sup_{M\in\M_{\K}}R(M,F^*)\ge \sup_{M\in\M^k_{\mathcal{H}}}R^{k}(M,G^*)=\sup_{M\in\M^{k}}R^{k}(M,G^*)\ge  \sup_{M\in\M}R(M,F^*).
	\end{align}

Finally, fix any mechanism $M\in \M_{\K}$. For any  $F\in\F=\Upsilon^{-1}_K((F_K))\bigcap \widehat{\F}$,
\begin{align*}
R(M,F)=&\int t(\bm{v})F(\d \bm{v})
=\sum_{K\in \K} \int t(\tsum_{j\in K} v_j)F(\d \bm{v})
=\sum_{K\in \K}\int t(x)F_K(\d x),
\end{align*}
where the first equality follows from the fact that $M\in \M_{\K}$ and the second follows from $F\in\Upsilon^{-1}_K((F_K))$.  In other words, $R(M, F)=R(M, F')$ for any $F,F'\in \Upsilon^{-1}_K((F_K))\bigcap \widehat{\F}=\F$, as long as $M\in M_{\K}$.  Hence, it follows that
\begin{align}
\sup_{M\in\M_{\K}}\inf_{F\in\F} R(M,F)=	\sup_{M\in\M_{\K}}R(M,F^*).  \label{eq:invF}
\end{align}
Combining  \cref{eq:invF} with \cref{eq:MMK}, we get
$$\sup_{M\in\M_{\K}}\inf_{F\in\F} R(M,F)= \sup_{M\in\M_{\K}}R(M,F^*)\ge \sup_{M\in\M}R(M,F^*)\ge \sup_{M\in\M}\inf_{F\in\F} R(M,F).$$
Since $\M_{\K}\subset\M$, the reverse inequality also holds, so we have the desired conclusion. \end{proof}

\newpage
\section{Online Appendix: Not for Publication}\label{sec:supp}

	\subsection{Supplemental Results for \Cref{thm:moment}}  \label{sec:ift}
Here, we verify \cref{eqn:computation}. Let
\begin{align}
	&\begin{dcases}
		m_K'=\sum_{j\in K}\E_F[v_j]\\
		s_K'=\int \phi_K(\textstyle{\sum_{j\in K}} (v_j)  F(\d \bm{v})
	\end{dcases}\cr
	\implies&
	\begin{dcases}
		\frac{\d m_K'}{\d \delta}\Big|_{\delta=0}=\sum_{j\in K}(\E_{F}[v_j]-\E_{F^*}[v_j])\\
		\frac{\d s_K'}{\d \delta}\Big|_{\delta=0}= \int\phi_K(\mbox{$\sum_{j\in K}$}v_j)F(\d \bm{v})-s_K\label{eqn:FOC:delta}
	\end{dcases}
\end{align}
Straightforward algebra yields:
{\medmuskip=0mu
\thinmuskip=0mu
\thickmuskip=0mu
\small
		\begin{align}
			\frac{\d B}{\d \delta}\Big|_{\delta=0}=&\sum_{K\in\K}\lambda_K\Big( -\phi'_K(\beta_K)\Big(\sum_{j\in K}(\E_{F}[v_j]-\E_{F^*}[v_j]) \Big)+\int \phi_K(\mbox{$\sum_{j\in K}$}v_j)F(\d \bm{v})-\int \phi_K(\mbox{$\sum_{j\in K}$}v_j)F^*(\d \bm{v}) \Big)\cr
			=&-\sum_{K\in\K} \lambda_K \phi_K'(\beta_K)\cdot \frac{\d m_K'}{\d \delta}\Big|_{\delta=0}+\sum_{K\in \K} \lambda_K \cdot \frac{\d s_K'}{\d \delta}\Big|_{\delta=0}.  \label{eqn:FOC:R}
	\end{align}}
	We show that \cref{eqn:FOC:R} coincides with $-\frac{\d \sum_{K\in\K}\alpha_K}{\d \delta}\Big|_{\delta=0}$.  To this end, we first apply the implicit function theorem to \cref{eq:mi':1,eq:ki:1}. To begin, define function $\psi$:
	\begin{align*}
	\psi(\alpha,\beta,m,s)=\left\{ 
	\begin{array}{c}
	\int_{\alpha}^{\beta}\dfrac{\alpha}{x}\d x+\alpha-m\\
	\int_{\alpha}^{\beta}\dfrac{\alpha\phi(x)}{x^2}\d x+\frac{\phi(\beta)\alpha}{\beta}-s
	\end{array}
	\right\}.
	\end{align*}
	Then by \Cref{lem:alphabeta}, $(\alpha_K, \beta_K)$ is the unique solution to $\psi(\alpha_K,\beta_K,m_K,s_K)=0$. 
	\begin{align*}
	&\frac{\d \psi}{\d (\alpha,\beta)}= \left[ \begin{array}{cc}
	\log(\frac{\beta}{\alpha})&\frac{\alpha}{\beta}\\
	\int_{\alpha}^{\beta}\frac{\phi'(x)}{x} \d x & \frac{\alpha \phi'(\beta)}{\beta}
	\end{array} \right],
		\end{align*}
	and
	$$ \mathrm{det}\left( \frac{\d \psi}{\d (\alpha,\beta)} \right)=\frac{\alpha}{\beta}\int \frac{\phi'(\beta)-\phi'(x)}{x}\d x>0,$$
	so  $\frac{\d \psi}{\d (\alpha,\beta)}$ is invertible at $(\alpha_K,\beta_K)$. The implicit function theorem then implies that $(\alpha_K,\beta_K)$ is locally differentiable in $m$.  With the subscript $K$ suppressed for notational convenience, we have: 
	\begin{align*}
		\frac{\d (\alpha,\beta)^T}{\d (m,s)}=&-\left[\frac{\d \psi}{\d (\alpha, \beta)}\right]^{-1}\cdot \frac{\d \psi}{\d (m,s)}\\
	=&-\left[ \begin{array}{cc}
	\log(\frac{\beta}{\alpha})&\frac{\alpha}{\beta}\\
	\int_{\alpha}^{\beta}\frac{\phi'(x)}{x}\d x & \frac{\alpha \phi'(\beta)}{\beta}
	\end{array} \right]^{-1}\cdot\left[ \begin{array}{cc}
	-1&0\\
	0&-1
	\end{array}\right].
	\end{align*}
	It follows that: 
	\begin{align}
		\begin{dcases}
			\frac{\d \alpha_K}{\d m_K'}\Big|_{m_K'=m_K}=\lambda_K\phi'_K(\beta_K);\\
			\frac{\d \alpha_K}{\d s_K'}\Big|_{s_K'=s_K}=-\lambda_K.
\end{dcases}\label{eqn:FOC:m}
	\end{align}
	Combining  \cref{eqn:FOC:m,eqn:FOC:R},  we conclude that
	\begin{align*}
		\frac{\d \sum_{K\in\K}\alpha_K}{\d \delta}\Big|_{\delta=0}=&\sum_{K\in\K} \left(\frac{\d \alpha_K}{\d m_K'}\cdot \frac{\d m_K'}{\d \delta}\Big|_{\delta=0}+\frac{\d \alpha_K}{\d s_K'}\cdot \frac{\d s_K'}{\d \delta}\Big|_{\delta=0}\right)=-\frac{\d B}{\d \delta}\Big|_{\delta=0},
	\end{align*}
	which yields \cref{eqn:computation}.

\subsection{Technical Lemma used in the proof of \Cref{thm:partial:bundling}}\label{sec:technical}

\begin{lemma}		\label{lem:minimax}
	If $\G$ is regular, then
	\begin{align}
		\inf_{F\in\G}\sup_{M\in\M_{\K}}R(M,F)=\sup_{M\in\M_{\K}}\inf_{F\in\G}R(M,F).\label{eq:infsup0}
	\end{align}
\end{lemma}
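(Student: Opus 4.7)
The inequality $\sup_{M \in \M_\K} \inf_{F \in \G} R(M, F) \leq \inf_{F \in \G} \sup_{M \in \M_\K} R(M, F)$ is the routine max-min inequality. I plan to establish the reverse inequality via Sion's minimax theorem (in the version that only requires compactness of one side), using $\G$ as the compact side. The preliminaries are standard: $\G$ is weakly compact since tightness yields weak relative compactness by Prokhorov's theorem and weak closure upgrades this to compactness; $\M_\K$ is convex since IC and IR are linear constraints, so mixtures of feasible $\K$-bundled mechanisms remain feasible; and $R(M, F) = \int t(\bm v)\, dF(\bm v)$ is bilinear in $(M, F)$, hence in particular affine in each argument.

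The substantive step is verifying that $F \mapsto R(M, F)$ is lower semicontinuous on $\G$ in the weak topology, for every $M \in \M_\K$. The $\K$-bundling structure is used here: $t(\bm v) = \sum_{K \in \K} t_K\bigl(\sum_{j \in K} v_j\bigr)$, with each $t_K:\R_+ \to \R_+$ a feasible one-dimensional payment rule. Since IC forces the associated allocation to be nondecreasing, I pass to the left-continuous version of each $t_K$ (an IC-equivalent modification at countably many jump points); a nondecreasing left-continuous function is LSC. Hence $t$ is a nonnegative LSC function on $\R^n_+$. Writing $\int t\, dF = \sup_{N}\int (t \wedge N)\, dF$ by monotone convergence, each truncation $t \wedge N$ is bounded LSC, and the Portmanteau theorem yields lower semicontinuity of $F \mapsto \int (t \wedge N)\, dF$ on $\G$. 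A pointwise supremum of LSC maps is LSC, so $F \mapsto \int t\, dF$ is LSC on $\G$.

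With these ingredients, Sion's minimax theorem applies: $\G$ is convex and weakly compact, $\M_\K$ is convex, $F \mapsto R(M, F)$ is LSC and affine (hence quasi-convex) on $\G$ for each $M$, and $M \mapsto R(M, F)$ is affine (hence quasi-concave) on $\M_\K$ for each $F$. The conclusion $\inf_{F \in \G} \sup_{M \in \M_\K} R(M, F) = \sup_{M \in \M_\K} \inf_{F \in \G} R(M, F)$ follows.

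The main obstacle is the lower semicontinuity step: upgrading pointwise LSC of the unbounded payment $t$ to LSC of its integral under weak convergence of probability measures. The key move is to work with the left-continuous representative of $t_K$ and use nonnegativity to replace $t$ by its truncations and take a monotone supremum; this avoids the need for any uniform-integrability hypothesis on $\G$ that a USC-based approach would otherwise require, so the stated regularity conditions (tightness, weak closure, convexity, bounded expectation) suffice.
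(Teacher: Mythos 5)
Your architecture (Sion's theorem with $\G$ as the compact side, Prokhorov for compactness, bilinearity for quasi-convexity/concavity) is sound, but the lower-semicontinuity step has a genuine gap. The map $F\mapsto R(M,F)$ is \emph{not} LSC on $\G$ for every $M\in\M_{\K}$: take the bundled mechanism that posts a deterministic price $p$ for bundle $K$ with the convention that an indifferent buyer purchases, so $t_K(z)=p\,\pmb{1}_{\{z\ge p\}}$; if $F_n$ puts mass at bundle value $p-1/n$ and converges weakly to a distribution with an atom at $p$, then $\liminf_n R(M,F_n)=0<p\le R(M,F)$. Your fix --- ``pass to the left-continuous version of each $t_K$'' --- is a modification of the \emph{mechanism}, not a harmless re-representation: it changes $R(M,F)$ (downward) whenever $F$ charges a hyperplane $\{\sum_{j\in K}v_j=c\}$ at a jump point $c$ of $q_K$, which is exactly what an adversarial $F$ can do. So your Sion argument, as written, proves the minimax identity only for the subclass $\tilde\M_{\K}$ of left-continuously tie-broken mechanisms. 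To recover the stated equality you still need $\inf_F\sup_{M\in\M_{\K}}R(M,F)\le\inf_F\sup_{M\in\tilde\M_{\K}}R(M,F)$, i.e.\ that against each fixed $F$ the seller loses nothing by restricting to left-continuous payments. This is true but requires an approximation argument (e.g.\ scale the menu prices by $\theta<1$ chosen so that the countably many relocated jump points avoid the countably many atoms of each bundle-value marginal, losing at most $(1-\theta)\E_F[\sum_i v_i]$), and that repair uses precisely the bounded-expectation part of regularity --- contrary to your closing claim that the left-continuity route dispenses with any integrability control.

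This missing step is in fact the bulk of the paper's own proof, which has the same two-part structure you are implicitly relying on: a Sion/minimax claim on the subclass $\mathcal{T}_c^{|\K|}$ of \emph{bounded continuous} feasible payments (where weak continuity in $F$ is immediate from the Portmanteau theorem), followed by a separate approximation claim showing that any feasible payment $t_K$ can be approximated from below by some $\tilde t_K\in\mathcal{T}_c$ with pointwise error at most $(1+2v)\varepsilon$, so that $\sup_{\mathcal{T}^{|\K|}}$ and $\sup_{\mathcal{T}_c^{|\K|}}$ of $\int\sum_K t_K\,\d F$ coincide for each fixed $F$ with finite means. Once you add the analogous ``sup-preservation under the left-continuous (or continuous) modification'' lemma, your proof goes through; left-continuity instead of full continuity is a mild simplification of that step, not a way around it. A minor further point: when invoking Sion you should specify a linear topology on $\M_{\K}$ and why $M\mapsto R(M,F)$ is upper semicontinuous in it; since $R$ is affine in $M$, continuity along line segments (all that Sion's proof uses on the non-compact side) or the weak topology generated by the functionals $R(\cdot,F)$, $F\in\G$, suffices, whereas the paper sidesteps this entirely by working in $\mathcal{BC}(\R_+)$ with the sup norm.
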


\begin{proof}
	In this proof, we still adopt the notation in  the proof of \Cref{prop:saddle} whereby $\M^1$ denotes the set of feasible mechanisms for selling a single item. By the definition of $\M_{\K}$, there exists $(q_K,t_K)\in \M^1$ such that $t(\bm{v})\equiv \sum_{K\in\K}t_K(\sum_{j\in K}v_j)$. Let $\mathcal{T}=\big\{ t\in\R_+^{\R_+}:\exists M=(q,t)\in \M^1 \big\}$ denote the projection of $\M^1$ onto the payment dimension. Then, $M\in \M_{\K}$ if and only if $(t_K)\in \mathcal{T}^{|\K|}$ such that  $\forall F\in \Delta(\R^n_+)$:
\begin{align*}
	R(M,F)= \int  \sum_{K\in\K}t_K(\tsum_{j\in K}v_j)  F(\d \bm{v}).
\end{align*}
 Therefore, \cref{eq:infsup0} is equivalent to
\begin{align}
	\inf_{F\in\G}\sup_{(t_K)\in \mathcal{T}^{|\K|}} \int  \sum_{K\in\K}t_K(\tsum_{j\in K}v_j)  F(\d \bm{v}) =\sup_{(t_K)\in \mathcal{T}^{|\K|}} \inf_{F\in\G} \int  \sum_{K\in\K}t_K(\tsum_{j\in K}v_j)  F(\d \bm{v}).\label{eq:infsup}
\end{align}

Let $\mathcal{T}_c:=\mathcal{T}\bigcap \mathcal{BC}(\R_+)$, where $ \mathcal{BC}(\R_+)$ is the set of all bounded and continuous functions defined on $\R_+$.  We first establish the minimax theorem within $\mathcal{T}_c^{|\K|}$:
	\begin{claim} \label{claim:minmax}
		$\min\limits_{F\in\G}\sup\limits_{(t_K)\in\mathcal{T}_c^{|\K|}}\int \sum_{K\in\K}t_K(\tsum_{j\in K} v_j)F(\d \bm{v})=\sup\limits_{(t_K)\in\mathcal{T}_c^{|\K|}}\min\limits_{F\in\G}\int \sum_{K\in\K}t_K(\tsum_{j\in K} v_j)F(\d \bm{v})$.
	\end{claim}
	\begin{proof}  We observe that
		\begin{enumerate}
			\item $\mathcal{T}_c^{|\K|}$ is a convex subset of linear topological space $\mathcal{C}(\R_+)^{|\K|}$ (equipped with sup norm);
			\item $\G$ is a compact and convex subset of a linear topological space $\Delta(\R_+^n)$ (equipped with L\'{e}vy-Prokhorov metric);
			\item $R(M,G)=\int \tsum_{K\in\K}t_K(\tsum_{j\in K}v_j)F(\d \bm{v})$ is linear and continuous in both $(t_K)$ and $F$.
		\end{enumerate}
		
		Since every mechanism in $\M^1$ is incentive compatible,   for any $t_K\in\mathcal{T}_c$, there exists an nondecreasing function $q_K:\R_+ \to [0,1]$, such that $t_K({v})=v q_K({v})-t_K({0})-\int_{{0}}^{{v}}q_K(x)\d x$ for each $v\in \R_+$.   The convexity of $\mathcal{T}_c$ then follows easily: since any convex combination of nondecreasing functions $q_K$ and $q_K'$ is still nondecreasing, a convex combination of  $t$ and  $t'$ in $\mathcal{T}_c$ is still an element of $\mathcal{T}_c$. Then $\mathcal{T}_c^{|\K|}$ is convex since it is the product space. Next, the compactness of $\mathcal{F}$ follows from Prokhorov theorem, since $\mathcal{F}$ is tight and  closed.  The linearity of $\int \sum_{K\in\K}t_K(\sum_{j\in K}v_j)F(\d \bm{v})$ is straightforward. The continuity follows since $t_K$ is bounded and continuous and $F$ is a probability measure (Portmanteau theorem). The three observations ensure that the Sion's minmax theorem (\cite{sion1958general}) holds, from which the claim follows.
	\end{proof}\par

	\begin{claim} \label{claim:approx} Fix any $F\in \G$ and  $(t_K)\in \mathcal{T}^{|\K|}$. For any $\varepsilon>0$, there exists $(\tilde t_K)\in \mathcal{T}^{|\K|}_c$ such that 
 \begin{align}
 \int  \sum_{K\in\K}\tilde t_K({\tsum_{j\in K}}v_j)  F(\d \bm{v})
 \ge \int \sum_{K\in\K}  t_K(\tsum_{j\in K}v_j)  F(\d \bm{v})-\epsilon.  \label{eqn:continuity}
 \end{align}
\end{claim}
 
 \begin{proof} For each $t\in \mathcal{T}$,  there exists a nondecreasing function $q$, such that $t({v})=v q({v}) -t({0})-\int_{{0}}^{{v}}q(s)\d s$ for  each $v\in \R^+$. Fix any such  $t\in\mathcal{T}$ and the associated $q$. For each $\varepsilon$, we construct   a  continuous function  $\widehat{q}: \R_+\to [0,1]$ such that  $\widehat{q}\ge q-\varepsilon$ and $\int_0^{v} \widehat{q}(x)\d x- \int_0^{v} q(x)\d x \le\varepsilon$ for each $v\in \R_+$.
	
	Since $q$ is nondecreasing, there exist countably many discrete points, $\left\{ v^m \right\}$, at which $q$ jumps up. Let $d^m>0$ be the size of the jump at $v^m$, respectively. Define $r^m(v)=\pmb{1}_{\{q(v)\ge q(v^{m+})\}}d^m$. $r^m$ is a step function with step size $d^m$ at $v^m$. Then, $w(v)=q(v)-\sum r^m(v)$ is a nondecreasing and  continuous function.  Recall $q(\cdot)\le 1$, so   $\sum_m ||r^m||=\sum_m d^m<\infty$, and for any  $\varepsilon>0$,  there exists $N$ such that $\sum_{m>N}||r^m(v)||\le \varepsilon$, where $ ||\cdot||$ is the supnorm. 
	
	Define:
	\begin{align*}
	\widehat{r}^m(v):=
	\begin{dcases}
	0&\text{when }v\le v^{m}-\frac{\varepsilon}{N},\\
	\frac{Nd^m}{\varepsilon}\left(v-v^{m}+\frac{\varepsilon}{N}\right)&\text{when }v\in\left( v^m-\frac{\varepsilon}{N},v^m \right),\\
	d^m&\text{when }v\ge v^m.
	\end{dcases}
	\end{align*}
	In words, $\widehat{r}^m(v)$ approximates the step function $r^m(v)$ using a continuous piecewise linear function. By definition, $\widehat{r}^m(v)$ is nondecreasing and continuous.  Moreover,  $\widehat{r}^m(v)\ge r^m(v)$ and $\int_0^v|r^m(s)-\widehat{r}^m(s)|\d s\le \frac{\varepsilon d^m}{N}$. Now define $\widehat{q}(v):=w(v)+\sum_{m=1}^{N}\widehat{r}^m(v)$. Then, by definition,
	\begin{enumerate}
		\item $\widehat{q}(v)$ is nondcreasing and continuous in $v$;
		\item $\widehat{q}(v)-q(v)=\sum_{m=1}^N(\widehat{r}^m(v)-r^m(v)) -\sum_{m=N+1}^{\infty}r^m(v)\ge-\varepsilon$; 
		\item $\int_0^v [\widehat{q}(x)-q(x)]\d x=\int_0^v [\sum_{m=1}^N(\widehat{r}^m(x)-r^m(x)) -\sum_{m=N+1}^{\infty}r^m(x)]\d x\le N\times \frac{\varepsilon}{N}=\varepsilon $.
	\end{enumerate}
	Next, since $\widehat{q}(v)$ is nondecreasing and bounded, there exists $v^*$ such that  $\forall v\ge v^*$, $\widehat{q}(v)\le \widehat{q}(v^*)+\varepsilon$. Truncate $\widehat{q}$ by defining $\tilde{q}(v)=\widehat{q}(\min\left\{ v,v^* \right\})$. We have 
	\begin{align*}
	\tilde{t}(v):=&v\tilde{q}(v)-t(0)-\int_0^v\tilde{q}(x)\d x\\
	\ge&v(\widehat{q}(v)-\varepsilon)-t(0)-\int_0^{v}\widehat{q}(x)\d x\\
	\ge&v(q(v)-2\varepsilon)-t(0)-\left( \int_0^vq(x)\d x+\varepsilon \right)\\
	\ge&t(v)-(1+2v)\varepsilon,
	\end{align*}
	where the first inequality follows from the definition of $\tilde q$ and the second follows from observations 2 and 3 above. 
		Since $\tilde{q}(v)\equiv \tilde{q}(v^*)$ for $v\ge v^*$, $\tilde{t}(v)-\tilde{t}(v^*)=(v-v^*)\tilde{q}(v^*)-\int_{v^*}^v\tilde{q}(v^*)\d s=0$. So $\tilde{t}$ is continuous and bounded and hence $\tilde{t}\in \mathcal{T}_c$. \par

	For all $(t_K)\in \mathcal{T}^{|\K|}$, let $(\tilde{t}_K)$ be constructed as above with parameter $\varepsilon$. Then, $\forall F\in\mathcal{F}$,
	\begin{align*}
		\int \sum_{K\in\K}\tilde{t}_K(\tsum_{j\in K}v_j)F(\d \bm{v})\ge&\int \sum_{K\in\K}t_K(\tsum_{j\in K}v_j)F(\d \bm{v})-\left( 1+2\E_{F}[\tsum v_i] \right)\varepsilon\\
	\end{align*}
	Recall that $F$ has finite means for item values---an implication of $\F$ being regular.  Hence, by setting $\varepsilon:=  \epsilon/\left( 1+2\E_{F}[\tsum v_i] \right)$, the claim is proven.
\end{proof}

It follows from 
\Cref{claim:approx} that  
\begin{align*}
\sup_{(t_K)\in \mathcal{T}^{|\K|}_c}\int  \sum_{K\in\K}t_K(\tsum_{j\in K}v_j) F(\d \bm{v})\ge \sup_{(t_K)\in \mathcal{T}^{|\K|}} \int  \sum_{K\in\K}t_K(\tsum_{j\in K}v_j)  F(\d \bm{v}).  
\end{align*}
Since $\mathcal{T}^{|\K|}_c\subset \mathcal{T}^{|\K|}$, this in turn implies that 
 \begin{align}
\sup_{(t_K)\in \mathcal{T}^{|\K|}_c}\int  \sum_{K\in\K}t_K(\tsum_{j\in K}v_j) F(\d \bm{v})= \sup_{(t_K)\in \mathcal{T}^{|\K|}} \int  \sum_{K\in\K}t_K(\tsum_{j\in K}v_j)  F(\d \bm{v}).   \label{eqn:continuity}
 \end{align}

Applying \Cref{eqn:continuity} to \Cref{claim:minmax} establishes 
\Cref{eq:infsup}, which is in turn equivalent to \Cref{eq:infsup0}.    
\end{proof}

 \subsection{\Cref{prop:mechanism:extension} referred to in \Cref{fn:mechanism}}

\begin{proposition}
	Let $V\subset \R^n$ be a closed set, and $(q,t):V\to [0,1]^n\times \R_+$ be Borel measurable function that satisfy $(IC)$ and $(IR)$ on $V$. Then there exists Borel measurable function $(q^*,t^*)$ s.t. (i) $(q^*,t^*)\equiv (q,t)$ on $V$ and  (ii) $(q^*,t^*)$ satisfy $(IC)$ and $(IR)$ on $\R^n_+$.
	\label{prop:mechanism:extension}
\end{proposition}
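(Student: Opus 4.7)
The plan is to use the envelope/taxation-principle construction, treating the given $(q,t)$ on $V$ as defining a menu and reading $(q^*,t^*)$ off the associated convex indirect utility. Specifically, I define, for $\bm{v}\in\R^n$,
\[
u^*(\bm{v}) := \max\!\Bigl\{0,\ \sup_{\bm{v}'\in V}\!\bigl[\bm{v}\cdot q(\bm{v}') - t(\bm{v}')\bigr]\Bigr\}.
\]
As a pointwise supremum of affine functions (augmented by the zero function), $u^*$ is convex on $\R^n$; because $q(\bm{v}')\in[0,1]^n$ and $t(\bm{v}')\ge 0$, one has $0\le u^*(\bm{v})\le\sum_i v_i^+$, so $u^*$ is finite, continuous, and Borel. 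Moreover $u^*$ is nondecreasing and $1$-Lipschitz in each coordinate, so every subgradient $g\in\partial u^*(\bm{v})$ satisfies $g\in[0,1]^n$. On $V$, $(IC)$ attains the supremum at $\bm{v}'=\bm{v}$, so $u^*(\bm{v})=u(\bm{v}):=\bm{v}\cdot q(\bm{v})-t(\bm{v})$ (using $(IR)$ to absorb the $\max$ with $0$), and $(IC)$ itself rewrites as $u^*(\bm{w})\ge u^*(\bm{v})+q(\bm{v})\cdot(\bm{w}-\bm{v})$ for all $\bm{w}$, i.e.\ $q(\bm{v})\in\partial u^*(\bm{v})$ for every $\bm{v}\in V$.

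Next, I would extend the allocation using a measurable selection. Since $u^*$ is convex and finite on $\R^n$, the correspondence $\bm{v}\mapsto\partial u^*(\bm{v})$ has closed graph and nonempty, closed, convex values contained in $[0,1]^n$. Kuratowski--Ryll-Nardzewski then yields a Borel selection $\tilde q:\R^n_+\setminus V\to[0,1]^n$ with $\tilde q(\bm{v})\in\partial u^*(\bm{v})$. Using that $V$ is closed (hence Borel), define
\[
q^*(\bm{v}) := q(\bm{v})\mathbf{1}_V(\bm{v}) + \tilde q(\bm{v})\mathbf{1}_{\R^n_+\setminus V}(\bm{v}),\qquad t^*(\bm{v}) := \bm{v}\cdot q^*(\bm{v}) - u^*(\bm{v}).
\]
Both functions are Borel measurable. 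On $V$, $q^*\equiv q$ and $t^*(\bm{v})=\bm{v}\cdot q(\bm{v})-u(\bm{v})=t(\bm{v})$, giving claim (i). For (ii), the subgradient inequality $u^*(\bm{w})\ge u^*(\bm{v})+q^*(\bm{v})\cdot(\bm{w}-\bm{v})$ rearranges exactly to $(IC)$ on $\R^n_+$, while $u^*\ge 0$ gives $(IR)$; nonnegativity of $t^*$ follows by taking $\bm{w}=\bm{0}$ in the subgradient inequality together with $u^*(\bm{0})=0$ (note $t\ge 0$ makes the inner supremum at $\bm{v}=\bm{0}$ nonpositive, so the $\max$ with $0$ is binding).

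The main obstacle is the measurable subgradient selection. Two points require care. First, I need $\partial u^*(\bm{v})$ nonempty everywhere on $\R^n_+$, including boundary points where some $v_i=0$; defining $u^*$ on \emph{all} of $\R^n$ (the same formula remains finite there, since $\bm{v}\cdot q(\bm{v}')\le\sum_i v_i^+$) circumvents this, because a convex function finite on all of $\R^n$ has nonempty subdifferential at every point. Second, the measurability hypothesis in Kuratowski--Ryll-Nardzewski follows at once from the closed graph of $\partial u^*$ for a continuous convex function. Beyond this, the argument is a clean instance of the duality between $(IC)$ for direct mechanisms and the subgradient inequality for the associated indirect utility.
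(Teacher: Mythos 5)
Your proof is correct, and it takes a genuinely different route from the paper's. The paper works directly in outcome space: it forms the closed menu $W=\mathrm{cl}\{(q(v),t(v)):v\in V\}$, assigns to each type in $\R^n_+\setminus V$ an element of the argmax correspondence over $W$, verifies weak measurability of that correspondence by hand, and invokes Kuratowski--Ryll-Nardzewski, so that (IC)/(IR) for new types follow because they simply receive a best item from (the closure of) the original menu. You instead pass to the indirect utility $u^*$, the pointwise supremum of the affine functions generated by the menu augmented by $0$, select a measurable subgradient off $V$, and recover the transfer from the envelope identity $t^*(\bm{v})=\bm{v}\cdot q^*(\bm{v})-u^*(\bm{v})$; the measurable selection again rests on Kuratowski--Ryll-Nardzewski, now applied to the subdifferential correspondence, whose weak measurability follows from its closed graph together with values in the compact set $[0,1]^n$. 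Each approach has its advantages: yours gets continuity of $u^*$, the bound $\partial u^*\subset[0,1]^n$, and global (IC) for free from convex duality, while the paper's never leaves the original menu and needs no subdifferential calculus. One substantive point in your favor: by taking the maximum with $0$ in the definition of $u^*$ you effectively append the null outcome $(0,0)$ to the menu, which is exactly what delivers (IR) for new types whose best \emph{original} menu item yields negative utility (e.g.\ $n=1$, $V=\{1\}$, $q(1)=t(1)=1$, $v=1/2$); the paper's argmax over $W$ alone is silent on this case and would need the same null outcome appended, so your treatment is, if anything, slightly more careful. Your argument for $t^*\ge 0$ (evaluate the subgradient inequality at $\bm{w}=\bm{0}$ and use $u^*(\bm{0})=0$, which holds because $t\ge 0$) is also correct.
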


\begin{proof}
	Let $U:=\R^n_+\setminus V$. Let $W:=\mbox{cl}(\{ (q(v ),t(v )): v\in V\})$, so $W$ is   closed.  For each $v\in U$, define:
	\begin{align*}
	(Q(v),T(v))=\left\{ (x,y)\in W: v\cdot x-y=\sup_{v'\in V} v\cdot q(v')-t(v')\right\}.
	\end{align*}
	In words, $(Q,T)$ is a correspondence that ``maximally'' extends $(q,t)$ to all types in  $\R^n_+$. Clearly, a mechanism using any selection from  $(Q(v),T(v))$ will satisfy $(IC)$ and $(IR)$ for all $v\in U$.   Now we verify that the correspondence $(Q(v),T(v))$ is Borel measurable on $U$.  $\forall$ closed set $D\in [0,1]^n\times \R_+$:
	\begin{align*}
	&\left\{ v\in U:(Q(v),T(v))\bigcap D\neq \emptyset  \right\}\\
	=&\left\{ v\in U: \max_{x,y\in D\bigcap W} v\cdot x-y=\sup_{v'\in V} v\cdot q(v')-t(v')  \right\}\\
	=&\left\{ v\in U: \sup_{x,y\in D\bigcap W} v\cdot x-y=\sup_{v'\in V} v\cdot q(v')-t(v')  \right\}\\
	=&U\setminus \left\{ v\in U: \sup_{x,y\in D\bigcap W} v\cdot x-y<\sup_{v'\in V} v\cdot q(v')-t(v')  \right\}
	\end{align*}
	The first and third equality are identities. The second equality is by $D\bigcap W$ being closed and $\forall v$, the set of $(x,y)$ s.t. $v\cdot x -y\ge0 $ is bounded. For any $  v\in U$ such that $\sup_{x,y\in\ D\bigcap W}v\cdot x-y<\sup_{v'\in V}v\cdot q(v')-t(v')$, let $\varepsilon$ be the gap.  Then, for any $ v''\in B_{\frac{\varepsilon}{4 n}}(v)$, the inequality holds for $v''$, since $x$ and $q$ are bounded in $[0,1]^n$. Therefore, we verified that $\left\{ v\in U:(Q(v),T(v))\bigcap D\neq \emptyset  \right\}$ is a relatively closed subset of $U$, hence is Borel measurable. Then by then by the Kuratowski–Ryll-Nardzewski measurable selection theorem, there exists a selection $(q^*,t^*)$ from $(Q,T)$ on $U$ that is Borel measurable. Now, extend $(q^*,t^*)$ to $V$ by letting $(q^*,t^*)=(q,t)$ on $V$. Since $(q^*,t^*)$ and $(q,t)$ are Borel measurable on open set $U$ and closed set $V$, respectively, the extension $(q^*,t^*)$ is also Borel measurable. It is easy to verify that $(q^*,t^*)$ satisfies $(IC)$ and $(IR)$ for the entire domain $ \R^n_+$. \end{proof}

\subsection{Construction of saddle point in \Cref{thm:info}} \label{ssec:saddle}
\begin{proof}  Fix any $z\in \mathbb{R}_+$. Consider a worst-case value distribution takes the form constructed in \Cref{sec:moment}.  Specifically, the item values $\bm{V}(X)$ are given by: 
\begin{align*}
		V_i(X)=\min\left\{ \alpha X,\alpha e^{\frac{m-\alpha}{\alpha}} \right\}\cdot \frac{m_i}{m},
	\end{align*}
	for all $i$, where $m_i=\E_{G}[v_i]$ and $m=\sum m_i$, and $X$ is distributed according to  $ H(x)=1-\frac{1}{x}$.  (This is precisely the same before, except that $\beta= \alpha e^{\frac{m-\alpha}{\alpha}}$ has been substituted here.)  Let $F_{\alpha}$ denote  the distribution of random variable $\bm{V}(X)$. As before, $\alpha$ must be such that the constraint holds relative to the test functions $\phi_{z}$; namely, $ \E_{F_{\alpha}}[\phi_{z}(\bm{v})]\le\E_{G}[\phi_{z}(\bm{v})]$. The following argument 
slightly modifies the argument in \Cref{thm:moment} and identifies the worst cast distribution $F_{\alpha}$:

	Define the mechanism $M_{\alpha}$ as a pure bundling mechanism with random price distributed according to cdf:
\begin{align*}
	\gamma(p)=\frac{\log(p)-\log(\alpha)}{\log(z)-\log(\alpha)}
\end{align*}
for $p\in[\alpha,z]$. Then, the revenue function
\begin{align*}
	t^*(\bm{v})=&\int_{\alpha}^{\sum v_i} p \gamma(\d p)\\
	=&\frac{\min\left\{ \sum v_i,z \right\}-\alpha}{\log(z)-\log(\alpha)}.
\end{align*}
Hence, $R(M_{\alpha},F_{\alpha})=\frac{1}{\log(z)-\log(\alpha)}\left( \alpha\int_{\alpha}^{z}\frac{x-\alpha}{x^1}\d x+\alpha\frac{z-\alpha}{z} \right)=\sum \alpha$. $\forall M\in\M$:
\begin{align*}
	R(M,F_{\alpha})\le&\sup_{\psi(\cdot)} \int\psi(x)\cdot \left( \bm{V}(x)-\bm{V}'(x)\cdot x \right)\frac{1}{x^2}\d x\\
	=&\sup_{\bm{\gamma}} \sum_i\gamma_i\cdot\frac{m_i}{m}\cdot\alpha\le\sum\alpha.
\end{align*}
$\forall F\in \F_{\bm{z}^*}$:
\begin{align*}
	R(M_{\bm{\alpha}},F)=&\int t^*(\bm{v})F(\d \bm{v})\\
	\ge&\int\frac{\min(\sum v_i,z)-\alpha}{\log(z)-\log(\alpha)}F(\d \bm{v})\\
	=&\int\frac{-\phi_{z}(\sum v_i)+z-\alpha}{\log(z)-\log(\alpha)}F(\d \bm{v})\\
	\ge&\frac{-\E_{\Gamma}[\phi_{z}(\bm{v})]+z-\alpha}{\log(z)-\log(\alpha)}\\
	=&\alpha.
\end{align*}
The last equality is from $\E_{\Gamma}[\phi_{z}(\bm{v})]=\theta(\alpha,z)$. Therefore, $(M_{\alpha},F_{\alpha})$ forms a saddle point under ambiguity set $\F_{z}$.
 
We now vary $z$ and correspondingly vary $\alpha$ so that $F_{\alpha}$ is the worst case distribution within $\bigcap\F_{z}$.  Write
	\begin{align}
	\E_{F_{\alpha}}[\phi_{z}(\bm{v})]= &
	\begin{dcases}
		0&\text{when }z\le \alpha\\
		\textstyle \int_1^{\frac{z}{\alpha}}(z-\alpha x)\frac{1}{x^2}\d x&\text{when }z\in(\alpha,\alpha e^{\frac{m-\alpha}{\alpha}})\\
		z-m&\text{ when }z\ge \alpha e^{\frac{m-\alpha}{\alpha}}.
	\end{dcases}\label{eqn:theta:K}
\end{align}
Note the function $\theta_K(\alpha,z):=	\E_{F_{\alpha}}[\phi_{z}(\bm{v})]$   is jointly continuous in $(\alpha,z)$ and strictly decreasing in $\alpha$ when $z\in(\alpha,\alpha e^{\frac{m-\alpha}{\alpha}})$. Consider the set 
$$A:=\left\{ \alpha\in(0,m]:  \ \theta(a,z)\le \E_{G}[\phi_z(\bm{v})], \forall z\in\R_+ \right\}.$$  $A$ is non-empty since $m\in A$. Let $\alpha^*=\inf A$. Then by the continuity of $\theta$, $\forall z$, $\theta(\alpha^*,z)\le \E_{G}[\phi_z(\bm{v})]$. Equality must hold at some $z^*\in(0,\infty)$, because otherwise in the region $[\alpha^*,\alpha^{*} e^{\frac{m-\alpha^*}{\alpha^*}}]$ (where $\theta$ depends on $\alpha$), $\E_{G}[\phi_{z}(\bm{v})]$ is bounded away from $\theta(\alpha^*,z)$ and $\alpha^*$ could have been chosen strictly smaller, contradicting $\alpha^*=\inf A$. \footnote{The parameters $(\alpha^*,z^*)$ are exactly $(\pi^*, \overbar{s}^*)$ in Theorem 3 of \cite{deb-roesler2021}. Such connection is observed in \cite{du2018commonvalue} Proposition 1 and \cite{ravid2019learning} Lemma 3 for Pareto distribution with tail index 1, and \cite{kartik2020lemon} Proposition 2 for Pareto distributions with general tail indices.}   

Now, let $F^*:=F_{\alpha^*}$ and $M^*:= M_{\alpha^*}$.  By construction, $F^*$ belongs to $\F_{z}$ for all $z\in \mathbb{R}_+$.  Since $F^*$ is the worst-case distribution given the ambiguity set $\F_{z^*}$, by \Cref{cor:beyond}, $F^*$ is the worst case distribution  and $M^*$ is robustly optimal, given the  ambiguity set $\bigcap_{z}\F_{z}\subset \F_{z^*}$. 
\end{proof}

\subsection{Proof of the statement in \Cref{fn:lm} } \label{sec:add-lemma}

\begin{proposition}
	Let $\mathcal{F}$ be defined as in \Cref{eq:ambiguity-partial}. Suppose further $\Omega$  is downward closed; i.e.,
	\begin{align}
		(\bm{m},\bm{s}) \in \Omega \,\mbox{ and } \, \bm{0}\le \bm{s'} <\bm{s} \implies (\bm{m},\bm{s}')\in \Omega.\label{eq:cone}
	\end{align}
Then, $\mathcal{F}$ is regular.
	\label{lem:continuous:lim1}
\end{proposition}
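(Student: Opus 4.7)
The plan is to verify the five defining properties of regularity---nonemptyness, convexity, closedness in the weak topology, tightness, and bounded expectation---in sequence. Among these, convexity is the only one that uses downward closedness of $\Sigma$ in an essential way, and it will be the main obstacle.

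First I would handle nonemptyness by fixing an arbitrary $(\bm m,\bm s)\in \Omega\times\Sigma$ (nonempty by hypothesis) and invoking \Cref{lem:alphabeta} coordinate-by-coordinate over $K\in \K$ to produce $(\alpha_K,\beta_K)$. The distribution $F^{(\bm m,\bm s)}$ built in the proof of \Cref{thm:moment}---with $V_i(X)=\min\{\alpha_{K(i)}X,\beta_{K(i)}\}\cdot m_i/m_{K(i)}$ and $X\sim H$ where $H(y)=1-1/y$ on $[1,\infty)$---realizes means $\bm m$ and dispersions $\bm s$ by \Cref{eq:mi':1,eq:ki:1}, and hence lies in $\F$.

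For convexity, fix $F,F'\in \F$, $\lambda\in[0,1]$, and set $F_\lambda=\lambda F+(1-\lambda)F'$. The mean condition is immediate from linearity of $F\mapsto\mu_i(F)$ and convexity of $\Omega$. For the dispersion condition, write $X_K:=\sum_{j\in K}v_j$ and $\mu_{F,K}:=\E_F[X_K]$; the identity $X_K-\mu_{F_\lambda,K}=\lambda(X_K-\mu_{F,K})+(1-\lambda)(X_K-\mu_{F',K})$ combined with convexity of $\phi_K$ yields the pointwise bound
\[
\phi_K(X_K-\mu_{F_\lambda,K})\le \lambda\,\phi_K(X_K-\mu_{F,K})+(1-\lambda)\,\phi_K(X_K-\mu_{F',K}).
\]
Integrating against $F_\lambda$ produces an explicit upper bound on $\sigma_K(F_\lambda)$ in terms of $\sigma_K(F)$, $\sigma_K(F')$, and cross moments $\E_F[\phi_K(X_K-\mu_{F',K})]$, $\E_{F'}[\phi_K(X_K-\mu_{F,K})]$. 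The final step uses downward closedness of $\Sigma$: although $\sigma_K(F_\lambda)$ may exceed $\lambda\sigma_K(F)+(1-\lambda)\sigma_K(F')$, the vector $(\sigma_K(F_\lambda))_K$ is dominated coordinate-wise by an element of $\Sigma$, hence lies in $\Sigma$.

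The remaining three properties follow from a single quadratic growth estimate on $\phi_K$. The hypothesis $\phi_K''\ge\varepsilon$ together with $\phi_K(0)=0$ gives $\phi_K(x)\ge \phi_K(x_\ast)+\tfrac{\varepsilon}{2}(x-x_\ast)^2$, where $x_\ast$ is the unique minimizer of $\phi_K$; combined with compactness of $\Sigma$, this delivers a uniform second-moment bound for every $F\in\F$. Markov's inequality then yields tightness, the Vitali convergence theorem upgrades weak convergence $F_n\to F$ to convergence of $\mu_i(F_n)\to\mu_i(F)$ and $\sigma_K(F_n)\to\sigma_K(F)$, and closedness of the compact set $\Omega\times\Sigma$ gives $F\in\F$. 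Bounded expectations follow directly from compactness of $\Omega$. The hard part will be convexity, precisely because $\sigma_K$ is not a convex functional of $F$---for $\phi_K(x)=x^2$ it is even concave, and a mixture of two distributions with distinct bundle means picks up an extra spread $\lambda(1-\lambda)(\mu_{F,K}-\mu_{F',K})^2$ that can exceed both $\sigma_K(F)$ and $\sigma_K(F')$. Downward closedness of $\Sigma$ is exactly the structural property that absorbs this excess, turning the pointwise Jensen bound above into the desired conclusion $(\sigma_K(F_\lambda))_K\in\Sigma$.
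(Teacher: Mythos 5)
Your overall architecture (check the five properties one by one, with a strong-convexity growth bound delivering uniform second moments, hence tightness and bounded expectations) is sound, and your nonemptiness argument via \Cref{lem:alphabeta} is fine. But the two substantive steps both contain genuine gaps, and they stem from the same misreading of what \cref{eq:cone} can do. For convexity, your final step asserts that $(\sigma_K(F_\lambda))_K$ is ``dominated coordinate-wise by an element of $\Sigma$, hence lies in $\Sigma$.'' Nothing in your argument produces such a dominating element: the pointwise Jensen bound leaves you with cross moments $\E_{F'}[\phi_K(X_K-\mu_{F,K})]$, which by Jensen are at least $\phi_K(\mu_{F',K}-\mu_{F,K})$ and are not controlled by $\Sigma$ at all. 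Worse, the drift you yourself identify at the end---the extra spread $\lambda(1-\lambda)(\mu_{F,K}-\mu_{F',K})^2$ when bundle means differ---pushes $\sigma_K(F_\lambda)$ \emph{upward}, i.e., exactly in the direction that downward closedness cannot absorb: \cref{eq:cone} adds smaller vectors to $\Sigma$, never larger ones, so it cannot ``absorb an excess.'' The paper does not use \cref{eq:cone} for convexity at all; it obtains convexity directly from the linearity of the defining constraints in $F$ together with convexity of $\Omega\times\Sigma$. If you believe the mean-mixing effect genuinely pushes the mixture's dispersion above $\Sigma$, that is an objection to be raised against the convexity claim itself (and against the linearity assertion), not something the downward-closedness hypothesis can repair.

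The second gap is in weak closedness, and it is precisely where \cref{eq:cone} is actually needed. Your claim that Vitali upgrades $F_n\xrightarrow{w}F$ to $\sigma_K(F_n)\to\sigma_K(F)$ is false: uniform integrability of $\phi_K(\sum_{j\in K}v_j-m_K^n)$ does not follow from a uniform bound on its own expectation (that bound is all you have from $\Sigma$ compact), and dispersion can strictly drop in the limit---send a vanishing mass out along a thin tail and the limit loses that contribution entirely. What survives is only lower semicontinuity, $\sigma_K(F)\le\liminf_n\sigma_K(F_n)$, which the paper extracts via Portmanteau after replacing $\phi_K$ by the nonnegative function $\overbar{\phi}_K(x)=\phi_K(x)-\phi_K'(0)x$ and recentering the measures; the means do converge, but justifying even that requires the growth comparison in \Cref{lem:port} (your uniform-integrability route for the means is fine in substance). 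The limit distribution may therefore have dispersion vector strictly \emph{below} $\lim_n(s_K^n)_K\in\Sigma$, and it is compactness of $\Sigma$ together with \cref{eq:cone} that puts it back inside $\Sigma$. In short, you spend the downward-closedness hypothesis where it cannot help (convexity) and omit it where the paper's proof genuinely needs it (closedness under weak limits).
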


\begin{remark} \label{rem:upper-bound}
	We argue in words that imposing \cref{eq:cone} on $\Omega$ is without loss of generality. This is because in the max-min problem, nature will never choose  any $F'\in \F$ with $(\sigma_K(F'))_K=\bm{s}'$  if there exists $F\in \F$ such that $( \sigma_K(F))_K=\bm{s}>\bm{s}'$.  For each marginal $K$, nature can always choose a mean-preserving spread, say $F''$, of $F'$  with a very large dispersion such that the probability of a zero value is close to one. Hence for this hypothetical distribution, the revenue can be arbitrarily close to $0$. Mixing $F''$ with $F'$ with sufficiently small probability strictly reduces revenue while keeping the moments below $\bm{s}$.
	\label{rmk:inequality}
\end{remark}

\begin{proof}
	First, $\F$ is clearly convex since all the constraints are linear in $F$ and both $\Omega$ and is convex.
	
	Next, we prove that it is closed under the weak topology.  Consider any sequence $\left\{ F_n \right\}\subset \F$ and $F_n\xrightarrow{w}F$. Let $m_i^n=\mu_i(F_n)$ and $s_K^n=\sigma_K(F_n)$. Without loss of generality, we pick a subsequence that $\lim_{n\to \infty}m_i^n=m_i$ and $\lim_{n\to \infty} s_K^n=s_K$.\par
  Let $\overline s_K:=\sup_{n\in\mathbb{Z}_+} s^n_K$, for each $K\in \K$.
	Next, apply \Cref{lem:port} below with $\overbar{h}(\bm{v})={\phi}_K(\sum_{j\in K}v_j)$, $h(\bm{v})=v_i$, and $C=\overline s_K$. Since $\Omega$ is compact, $\overline s_K$ is finite. \Cref{lem:port} implies that:
	\begin{align*}
		\E_{{F}} [v_i]=&\lim_{n\to \infty} \E_{{F}_n}[v_i]\\
		=&\lim_{n\to \infty} \mu_i({F}_n)=m_i.
		\end{align*} 
	Since $\Omega$ is compact, we must have $(\mu_i(F))=\bm{m}\in \Omega$.\par
	
	Now we verify that $(\sigma_K(F))\in \S$.
	\begin{align*}
	\sigma_K(F)=&\E_F[\phi_K(\tsum_{j\in K}(v_j))]\\
	\le&\liminf_{n\to \infty}\E_{{F}_n}\left[ {\phi}_K(\tsum_{j\in K}v_i) \right]\\
	=&\liminf_{n\to \infty}s^n_K\\
	=&s_K.
	\end{align*}
	The inequality is implied by ${\phi}_K$ being continuous and bounded below (Portmanteau theorem). Therefore, since $\S$ is compact and satisfies \cref{eq:cone},  $(\sigma_K(F))\le\bm{s}$ implies that it is in $\S$.\par
	
	Finally, we prove that $\F$ is tight.  Since $\Omega$ is bounded, there exists $L>0$ such that $\sum_{i\in N} m_i<L$ for all $\bm{m}\in \Omega$. For each $k>0$, consider a set 
	$$U(k):=\left\{\bm{v}\in \R_+^n:  \tsum_{i=1}^n v_i\le k \right\}.$$	
	The set $U(k)$ is compact.  By Markov's inequality, we have
	$$\P_{F} \left\{\bm{v} \not\in U(k)\right\} \le \frac{\sum_{i=1}^n \E_F[v_i]}{k}< \frac{L}{k},$$
for all $F\in \F$.  Hence, for any $\varepsilon>0$, one can take $k$ large enough so that 
 $\P_{F} \left\{\bm{v} \not\in U(k)\right\}<\varepsilon$, as was to be shown.	
\end{proof}

\begin{lemma} \label{lem:port}
	$\forall l\in\mathbb{N}$, let $\left\{ F_n \right\}\subset \Delta(\R^l)$, $F_n\xrightarrow{w}F$, and $\overbar{h}\in \mathcal C(\R^l)$ is a nonnegative function. If  $\int \overbar{h}(x) F_n(\d x)\le C $ for all $n$, then for all nonnegative  function $h\in \mathcal C(\R^l)$ such that  $\lim\limits_{|x|\to \infty}\left|\frac{h(x)}{\overbar{h}(x)}\right|=0$, 
	\begin{align*}
	\int h(x)F(\d x)=\lim_{n\to \infty} \int h(x)F_n(\d x).
	\end{align*}
\end{lemma}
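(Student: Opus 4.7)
My plan is to recognize the lemma as a uniform integrability statement: the domination $h/\bar h \to 0$ at infinity, combined with the uniform bound $\int \bar h\,dF_n \le C$, forces the family $\{h\}$ to be uniformly integrable with respect to $\{F_n\}$, which together with weak convergence upgrades convergence of integrals from bounded continuous to $h$. The main technical obstacle is the lack of a direct Portmanteau-type statement for unbounded $h$; the argument will therefore split $h$ into a truncated bounded-continuous piece (for which weak convergence applies directly) and a tail piece (controlled uniformly in $n$).

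First, I would establish the preliminary fact $\int \bar h\,dF \le C$. Since $\bar h$ is nonnegative and continuous, applying Portmanteau's inequality (in the form $\int g\,dF \le \liminf_n \int g\,dF_n$ for nonnegative lower semicontinuous $g$) yields $\int \bar h\,dF \le \liminf_n \int \bar h\,dF_n \le C$. Second, fix $\varepsilon>0$. Using the hypothesis $\lim_{|x|\to\infty} h(x)/\bar h(x)=0$, I would pick $K>0$ such that $h(x)\le \varepsilon\,\bar h(x)$ for all $|x|\ge K$. Third, I would introduce a continuous cutoff $\phi_K:\R^l\to[0,1]$ with $\phi_K\equiv 1$ on $\{|x|\le K\}$ and $\phi_K\equiv 0$ on $\{|x|\ge K+1\}$ (for instance the standard piecewise-linear bump). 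Then $h\phi_K$ is continuous and bounded (continuous on the compact set $\{|x|\le K+1\}$, identically zero outside), so weak convergence gives
\[
\int h\phi_K\,dF_n \;\longrightarrow\; \int h\phi_K\,dF.
\]

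Fourth, I would control the tails uniformly. For every $n$,
\[
0\le \int h\,dF_n - \int h\phi_K\,dF_n \;=\; \int h(1-\phi_K)\,dF_n \;\le\; \int_{|x|\ge K} h\,dF_n \;\le\; \varepsilon \int \bar h\,dF_n \;\le\; \varepsilon C,
\]
and the identical bound holds for $F$ by Step 1. Combining this with the convergence of the truncated integrals gives $\limsup_n \left|\int h\,dF_n-\int h\,dF\right|\le 2\varepsilon C$. Since $\varepsilon>0$ was arbitrary, the conclusion follows.

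I expect no serious obstacle beyond routine bookkeeping. The only subtle point is the inequality $\int \bar h\,dF\le C$ for nonnegative but possibly unbounded and continuous $\bar h$; this is where one has to invoke the lower-semicontinuous version of Portmanteau (equivalently, truncate $\bar h\wedge M$, pass to the limit, then send $M\to\infty$ by monotone convergence) rather than the naive bounded-continuous statement.
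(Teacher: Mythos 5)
Your proof is correct, but it takes a genuinely different route from the paper's. The paper argues by contradiction: it first applies the Portmanteau inequality for nonnegative (lower semicontinuous) functions to get $\int h\,\d F\le \liminf_n\int h\,\d F_n$, then supposes $\int h\,\d F<\limsup_n\int h\,\d F_n$, passes to a subsequence along which $\int \overbar{h}\,\d F_n$ converges, and constructs a nonnegative continuous function $\overbar{h}-Ah+B$ (with $A$ large enough to flip the strict inequality and $B$ large enough to ensure nonnegativity, using $h/\overbar{h}\to 0$ at infinity) to which Portmanteau is applied again, yielding a contradiction; no truncation is ever performed. You instead run the standard uniform-integrability/truncation argument: Portmanteau applied only to $\overbar{h}$ gives $\int\overbar{h}\,\d F\le C$, the continuous bump $\phi_K$ reduces the problem to the bounded continuous function $h\phi_K$ handled by weak convergence, and the tails are controlled uniformly in $n$ (and for $F$) by $h\le\varepsilon\overbar{h}$ outside a compact set, giving $\limsup_n|\int h\,\d F_n-\int h\,\d F|\le 2\varepsilon C$. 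Your argument is direct rather than by contradiction, avoids the subsequence extraction and the somewhat opaque choice of the constants $A$ and $B$, and yields an explicit quantitative error bound; the paper's argument is shorter once the linear-combination trick is seen and leans entirely on Portmanteau rather than on an explicit cutoff construction. The only point to make explicit in your write-up (shared with the paper) is that the hypothesis $\lim_{|x|\to\infty}|h(x)/\overbar{h}(x)|=0$ is what licenses choosing $K$ with $h\le\varepsilon\overbar{h}$ on $\{|x|\ge K\}$, and that $h\phi_K$ is bounded because it is continuous with compact support; both steps are fine as you stated them.
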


\begin{proof}
	First, the Portmanteau theorem implies $\int h(x)F(\d x)\le \liminf_{n\to\infty} \int h(x) F_n(\d x)$ for any nonnegative function $h$. Hence, it suffices to prove that $\int h(x) F(\d x)\ge \limsup_{n\to\infty}$ $\int h(x) F_n(\d x)$. 
	Suppose  for the sake of contradiction that $\int h(x) F(\d x)< \limsup_{n\to\infty} \int h(x) F_n(\d x)$ (which is bounded by $C$). Without loss of generality,  we pick a subsequence such that  $\int \overbar h(x) F_n(\d x)$ converges. Along that subsequence, 
	\begin{align*}
	\int h(x)F(\d x)<&\limsup_{n\to \infty} 	\int h(x)F_n(\d x)\\
	\iff\ \exists A>0\ s.t.\ \int [\overbar{h}(x)-Ah(x)]F(\d x)>&\liminf_{n\to \infty} \int [\overbar{h}(x)-Ah(x)]F_n(\d x)\\
	\iff\ \int [\overbar{h}(x)-Ah(x)+B] F(\d x)>&\liminf_{n\to \infty} \int [\overbar{h}(x)-Ah(x)+B ]F_n(\d x),
	\end{align*} 
	for any $B\in \R$. Since $\lim\limits_{|x|\to \infty}\left|\frac{h(x)}{\overbar{h}(x)}\right|=0$, there exists sufficiently large $B$ such that  $\overbar{h}(x)-Ah(x)+B\ge0$. The last part contradicts  the Portmanteau theorem:  $\int [\overbar{h}(x)-Ah(x)+B] F(\d x)\le\liminf_{n\to \infty} \int [\overbar{h}(x)-Ah(x)+B ]F_n(\d x)$.
\end{proof}

\subsection{ Ambiguity Sets with Domain Restrictions} \label{app:domain}

Here, we consider the ambiguity set $\F$ which satisfies mean conditions (a special case of $S$ being a singleton) but must satisfy domain restrictions  instead of dispersion moment conditions.  Specifically, fix any partition $\K$ of $N$, with each element $K\in \K$ interpreted as a bundle of goods. 

The seller now knows that the buyers' values lie within the domain  $D:=\big\{ \bm{v}\in \R^n_+|\forall K\in\K, \sum_{i\in K} v_i\in[0,\overbar{v}_K] \big\}$. The ambiguity set is now:
\begin{align*}
	\F=\left\{ F\in \Delta(D): \, \E_F[v_i]=m_i, \forall i\right\},
\end{align*}
where $0< \sum_{i \in K}m_i<\overbar{v}_K$ for each $K\in \K$.

 As before, we exhibit a saddle point $(M^*,F^*)\in(\M,\F)$ and prove that it satisfies the requirement \Cref{eqn:1}.\footnote{Just like the dispersion moment conditions, it is easy to see that the ambiguity set $\F$ exhibits $\K$-Knightian ambiguity as defined in \Cref{sec:general}. Therefore, \Cref{thm:partial:bundling} suggests that $\K$-bundled sales is robustly optimal. (Regularity is easy to verify in this case.)}

\paragraph{Construction of $F^*$.} Let the support of $F^*$ be defined as a parametric curve $\bm{V}(x): [1,\infty)\to D$ with the value of item $i$ given by:
\begin{align*}
	V_i(s)=\min\left\{ \alpha_{K(i)}\cdot x, \overbar{v}_{K(i)}\right\}\cdot \frac{m_i}{\sum_{j\in K(i)}m_j},
\end{align*}
where $s$ is a scalar distributed from $[1,\infty)$ according to cdf $H$:
\begin{align*}
	\mathrm{Prob}(x\le y)=H(y)=1-\frac{1}{y},
\end{align*}
and $0<\alpha_K<\sum_{j\in K}m_j $ satisfy:
\begin{align}
	\alpha_K(1+\log(\overbar{v}_K/\alpha_K))=\tsum_{j\in K}m_j.\label{eqn:mi:domain}
\end{align}
The choice of $(\alpha_K)$ guarantees that the mean conditions are satisfied.\par

\paragraph{Construction of $M^*$.} The construction of $M^*$ is exactly the same as in \Cref{sec:moment}. The seller sells each bundle $K$ separately at independent random prices distributed according to $G_K$:
\begin{align*}
\begin{dcases}
	q^*_i(\bm{v})=G_{K(i)}\big(\tsum_{j\in K(i)} v_j\big),\\
	t^*(\bm{v})=\tsum_{K\in \K}\int_{p\le \sum_{j\in K} v_j}pG_K(\d p).
\end{dcases}
\end{align*}
The cdf $G_K$ is defined via the density function:
\begin{align*}
	g_K(v):=\frac{1}{\log(\overbar{v}_K/\alpha_K)v}
\end{align*}
on $[\alpha_K, \overbar{v}_K]$ and zero elsewhere.

\begin{theorem}	\label{thm:domain}
	The pair $(M^*,F^*)$ is a saddle point  satisfying (\ref{eqn:1}).  In the saddle point,  seller attains revenue $\sum_{K\in\K} \alpha_K$ by selling each bundle $K$ separately at a random price according to $G_K$.
\end{theorem}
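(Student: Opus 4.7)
My plan is to verify that $(M^*,F^*)$ is a saddle point by mirroring the architecture of the proof of \Cref{thm:moment}, with the hard domain constraint $\sum_{j\in K}v_j\le\overbar{v}_K$ replacing the role previously played by the convex moment constraint. There are three pieces to check: (i) compute $R(M^*,F^*)$; (ii) show that $M^*\in\arg\max_M R(M,F^*)$; and (iii) show that $F^*\in\arg\min_F R(M^*,F)$.

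For (i) the key observation is that $g_K(p)=1/[p\log(\overbar{v}_K/\alpha_K)]$ makes $p\,g_K(p)$ constant, so that $T_K(w):=\int_{\alpha_K}^{\min\{w,\overbar{v}_K\}}p\,g_K(\d p)=(\min\{w,\overbar{v}_K\}-\alpha_K)/\log(\overbar{v}_K/\alpha_K)$ is linear in $w\in[\alpha_K,\overbar{v}_K]$. Since the support of $F^*$ has $\sum_{j\in K}v_j\in[\alpha_K,\overbar{v}_K]$ with mean $m_K:=\sum_{j\in K}m_j$, linearity gives $R(M^*,F^*)=\sum_{K}T_K(m_K)$, which by \Cref{eqn:mi:domain} (rewritten as $m_K-\alpha_K=\alpha_K\log(\overbar{v}_K/\alpha_K)$) simplifies to $\sum_K\alpha_K$.

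For (ii) I will reuse the Myerson-style envelope argument of \Cref{eq:saddle1}. Because $F^*$ is supported on a one-dimensional parametric curve $\bm{V}(x)$ and $H$ is the truncated Pareto (so $(1-H(x))/h(x)=x$), the revenue from any IC mechanism is bounded by $\int \psi(x)\cdot\bm{J}(x)H(\d x)$ where the virtual value is $J_i(x)=V_i(x)-xV_i'(x)$. Direct differentiation of $V_i(x)=\min\{\alpha_{K(i)}x,\overbar{v}_{K(i)}\}m_i/m_{K(i)}$ gives $J_i(x)=0$ for $x<\overbar{v}_{K(i)}/\alpha_{K(i)}$ and $J_i(x)=\overbar{v}_{K(i)}\,m_i/m_{K(i)}$ thereafter, so the Myerson bound equals $\sum_i\overbar{v}_{K(i)}(m_i/m_{K(i)})\cdot(\alpha_{K(i)}/\overbar{v}_{K(i)})=\sum_K\alpha_K$, which $M^*$ attains.

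The main step requiring care is (iii). Here I will introduce multipliers $\lambda_K:=-1/\log(\overbar{v}_K/\alpha_K)<0$ (uniform within a bundle, $\lambda_i=\lambda_{K(i)}$) and form the Lagrangian $L(\bm{v}):=t^*(\bm{v})+\sum_{i}\lambda_i(v_i-m_i)=\sum_{K}[T_K(w_K)+\lambda_K(w_K-m_K)]$, where $w_K:=\sum_{j\in K}v_j$. The choice of $\lambda_K$ is engineered precisely so that $T_K(w)+\lambda_K w$ is constant on $w\in[\alpha_K,\overbar{v}_K]$ (the slopes $0$ and $1/\log(\overbar{v}_K/\alpha_K)$ of the two linear pieces of $T_K$ offset the affine term $\lambda_K w$) and strictly larger on $[0,\alpha_K)$ (where $T_K\equiv 0$ while $\lambda_K w>\lambda_K\alpha_K$ since $\lambda_K<0$). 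Because the domain $D$ forces $w_K\le\overbar{v}_K$, $L$ attains its minimum over $D$ exactly when each $w_K\in[\alpha_K,\overbar{v}_K]$, and the minimum value is $\sum_K\alpha_K$ by a one-line computation using \Cref{eqn:mi:domain}. Hence $R(M^*,F)=\E_F[L(\bm{v})]\ge\min_{D}L=\sum_K\alpha_K=R(M^*,F^*)$ for every $F\in\F$. The reason the uniform-within-bundle choice is essential is that it makes $L$ depend on $\bm{v}$ only through the bundle totals $w_K$, so the bound is insensitive to how nature splits a bundle's total across its constituent items; this is the analogue of the within-bundle ambiguity that drives bundling in \Cref{thm:moment}.
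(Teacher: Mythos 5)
Your proposal is correct and follows essentially the same route as the paper's proof: the same computation of $R(M^*,F^*)$ via \cref{eqn:mi:domain}, the same Myerson/envelope bound along the one-dimensional support curve for the seller's best response, and the same key inequality for nature's best response—your Lagrangian step (iii) with $\lambda_K=-1/\log(\overbar{v}_K/\alpha_K)$ is just an algebraic repackaging of the paper's pointwise bound $t^*(\bm{v})\ge\sum_{K}\frac{\sum_{j\in K}v_j-\alpha_K}{\log(\overbar{v}_K/\alpha_K)}$ on $D$, which integrates to $\sum_K\alpha_K$ under the mean constraints. You also correctly isolate the role of the domain restriction $\sum_{j\in K}v_j\le\overbar{v}_K$ in making that linear lower bound valid, which is exactly the point the paper's proof relies on.
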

\begin{proof}. We first compute the value $R(M^*,F^*)$. 	On the support of $F^*$,
	\begin{align*}
		t^*(\bm{v})=&\sum_{K\in\K}\frac{\sum_{j\in K}v_j-\alpha_K}{\log(\overbar{v}_K/\alpha_K)}.
		\end{align*}
		Hence,
		\begin{align} R(M^*,F^*)=&\int t^*(\bm{v})F^*(\d \bm{v})\cr
			=&\sum_{K\in\K}\frac{\sum_{j\in K}m_j-\alpha_K}{\log(\overbar{v}_K/\alpha_K)} \cr
			=&\sum_{K\in \K} \alpha_K.\label{eq:saddle:domain}
	\end{align}

	Next, we show that $M^*\in\arg\max_{M\in \M} R(M,F^*)$. To this end, fix any $ M=(q,t)\in\M$. Since the support of $F^*$ is a parametric curve $\bm{V}(x)$, the mechanism $M$ can be represented equivalently via $(\psi(x),\tau(x)):=(q(\bm{V}(x)),t(\bm{V}(x)))$.  Since $M$ satisfies $(IC)$, it must satisfy the envelope condition:
	\begin{align*}
	\tau(x)=&\psi(x)\cdot \bm{V}(x)-\int_1^x \psi(z)\cdot \bm{V}'(z)\d z.
	\end{align*}
	Hence,
		\begin{align}
		 R(M,F^*)\le&\sup_{\psi} \int \psi\cdot\left( \bm{V}(x)-\bm{V}'(x)\frac{1-H(x)}{h(x)} \right)H(\d x)\cr
		 =&\sup_{\psi}\sum_{i}\int_1^{\frac{\overbar{v}_{K(i)}}{\alpha_{K(i)}}}\psi_i(x)\cdot 0 H(\d x)+\int_{\frac{\overbar{v}_{K(i)}}{\alpha_{K(i)}}}^{\infty} \psi_i(x)\cdot \gamma_i\cdot \overbar{v}_{K(i)} H(\d x)\cr
		 \le&\sum_{i}\gamma_i\cdot\overbar{v}_{K(i)}\cdot\frac{\alpha_{K(i)}}{\overbar{v}_{K(i)}}=\sum_{K\in\K} \alpha_K
	=R(M^*,F^*), \label{eq:saddle1:domain}
	\end{align}
where $\gamma_i:= \frac{m_i}{\sum_{j\in N} m_j}$.  The second inequality is from $\psi_i\le1$. The second equality is from $\sum_{i\in N}\gamma_i=1$. The last equality is from \cref{eq:saddle:domain}.
	
	Finally,  we show that $F^*\in\arg\min_{F\in\F} R(M^*,F)$.   To this end, observe 
	\begin{align*}
		t^*(\bm{v})\ge&\sum_{K\in\K}\frac{\sum_{j\in K}v_j-\alpha_K}{\log(\overbar{v}_K/\alpha_K)}.
		\end{align*}
		To see why the inequality holds, note that $t^*(\bm{v})=RHS$ when $\sum_{j\in K}v_j\in[\alpha_K,\overbar{v}_K]$.  Outside that region,  $t^*(\bm{v})$ is flat whereas the RHS is strictly increasing in $\sum_{j\in K}v_j$ whenever it is below $\alpha_K$. It then follows that 
	\begin{align}
		R(M^*,F)\ge&\int\sum_{K\in\K}\frac{\sum_{j\in K}v_j-\alpha_K}{\log(\overbar{v}_K/\alpha_K)} F(\d \bm{v})\cr
		=&\sum \alpha_K=R(M^*,F^*)  \label{eq:saddle2:domain}.
	\end{align}
	Combining \cref{eq:saddle1:domain} and \cref{eq:saddle2:domain}, the desired result follows.
\end{proof}

\begin{remark}  The saddle point here bears uncanny resemblance to that presented in \Cref{thm:moment}.  In particular, the worst-case distributions $F^*$ are remarkably similar to each other in the two cases.  In fact, they are identical if one were to replace  $\overbar{v}_K$ by $\beta_K$ in the formula. For any $ (\overbar{v}_K)_{K\in \K}>(m_K)_{K\in \K}$, one can find $(\phi_K)_{K\in \K}$ such that the optimal distribution for nature is identical.  In this sense, the dispersion moment conditions play similar roles to  upper bounds of bundle values. Intuitively, facing a dispersion moment for $\sum_{i \in K} v_K$, there is a largest  bundle value $\sum_{i \in K} v_K$ beyond which nature finds it too costly to load any probability mass. At the same time, the resemblance is less than exact for the optimal selling mechanism.  Given the absence of dispersion moment conditions, the revenue function $t^*$ is linear (instead of concave) within the support of $F^*$ (see the figures below).  Despite these differences, the optimal mechanisms $M^*$ are qualitatively similar between  the two cases. 
\end{remark}

\begin{remark}  As with \Cref{sec:moment}, \Cref{thm:domain} specializes to two canonical cases. When $\K$ is the finest partition of $N$, $M^*$ involves a separate sales mechanism, and when $\K$ is the coarsest partition, $M^*$  involves a pure bundling mechanism.  The support of $F^*$ and the revenue from the optimal mechanism are depicted in each of these two cases in  \Cref{fig:F1',fig:T1'} and   \Cref{fig:F3,fig:T3}. Compared with \Cref{sec:moment}, the only differences  are that the revenue functions are linear (rather than concave) within the support of $F^*$, as noted above. 
\end{remark}
 
\begin{figure}[htbp]
	\begin{minipage}[tb]{0.48\textwidth}
		\centering
		\includegraphics[height=5cm]{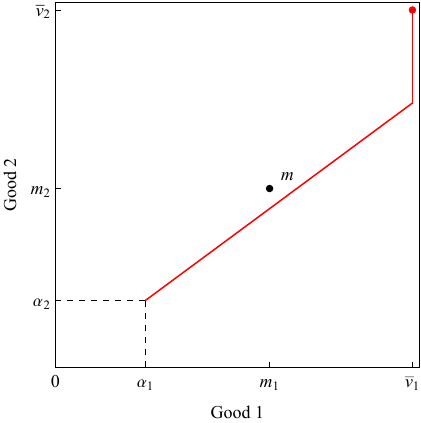}
		\begin{minipage}{0.7\linewidth}
		\caption{\small Valuation distribution $F^*$ when $\K$ is the finest partition}				\label{fig:F1'}
	\end{minipage}
	\end{minipage}
	\begin{minipage}[tb]{0.47\textwidth}
		\centering
		\includegraphics[height=5cm]{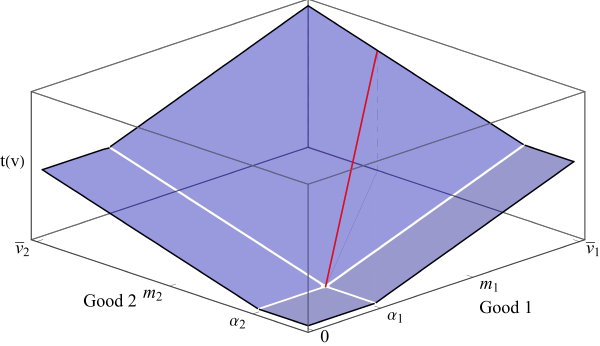}
	\begin{minipage}{0.7\linewidth}
		\caption{\small Revenue from mechanism $M^*$ when $\K$ is the finest partition}				\label{fig:T1'}
	\end{minipage}
	\end{minipage}

\begin{minipage}[H]{0.8\textwidth}
	\centering
	\small Note: $m_1=0.6$, $m_2=0.5$, $\overbar{v}_1=\overbar{v}_2=1$.
\end{minipage}

\end{figure}

\begin{figure}[htbp]
	\begin{minipage}[tb]{0.48\textwidth}
		\centering
		\includegraphics[height=5cm]{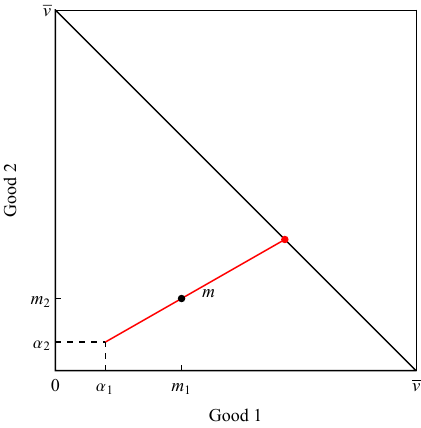}
		\begin{minipage}{0.7\linewidth}
		\caption{\small Valuation distribution $F^*$ when $\K$ is the coarsest partition}				\label{fig:F3}
	\end{minipage}
	\end{minipage}
	\begin{minipage}[tb]{0.47\textwidth}
		\centering
		\includegraphics[height=5cm]{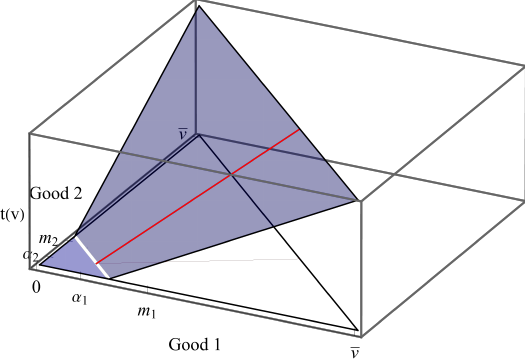}
		\begin{minipage}{0.7\linewidth}
		\caption{\small Revenue from mechanism $M^*$ when $\K$ is the coarsest partition}				\label{fig:T3}
	\end{minipage}
	\end{minipage}

\begin{minipage}[H]{0.8\textwidth}
	\centering
	\small Note: $m_1=0.7$, $m_2=0.4$, $\overbar{v}_{\{1,2\}}=2$.
\end{minipage}
\end{figure}

\end{document}